\def\RR{{\mathbb R}}
\def\CC{{\mathbb C}}
\def\ZZ{{\mathbb Z}}
\def\SS{{\mathbb S}}
\def\A{{\mathcal A}}
\def\H{{\mathcal H}}
\def\I{{\mathcal I}}
\def\M{{\mathcal M}}
\def\N{{\mathcal N}}
\def\P{{\mathcal P}}
\def\R{{\mathcal R}}
\def\T{{\mathcal T}}
\def\W{{\mathcal W}}
\def\a{\alpha}
\def\d{\delta}
\def\e{\varepsilon}
\def\f{\varphi}
\def\G{\Gamma}
\def\k{\kappa}
\def\L{{\mathrm L}}
\def\R{{\mathrm R}}
\def\s{\sigma}
\def\sc{\Sigma}
\def\t{\tau}
\def\Ad{{\hbox{\rm Ad\,}}}
\def\sp{{\rm sp}\,}
\def\1{{\mathbbm 1}}
\def\u1{{\A^{(0)}}}
\def\diff{{\rm Diff}}
\def\diffs1{\diff(S^1)}
\def\vir{{\rm Vir}}
\def\supp{{\rm supp}}
\def\slim{{{\mathrm{s}\textrm{-}\lim}}}
\def\tin{{\mathrm{in}}}
\def\tout{{\mathrm{out}}}
\def\timesi{{\overset{\tin}\times}}
\def\timeso{{\overset{\tout}\times}}
\def\psl2r{{\rm PSL}(2,\RR)}
\def\2dmob{{\overline{\psl2r}\times\overline{\psl2r}}}
\def\<{\langle}
\def\>{\rangle}
\def\poincare{{\P^\uparrow_+}}
\newtheorem{theorem}{Theorem}[section]
\newtheorem{corollary}[theorem]{Corollary}
\newtheorem{proposition}[theorem]{Proposition}
\newtheorem{lemma}[theorem]{Lemma}
\theoremstyle{remark}
\newtheorem{remark}[theorem]{Remark}
\newtheorem{example}[theorem]{Example}
\title{Construction of wedge-local nets of observables
through Longo-Witten endomorphisms}
\date{}
\author{
{\bf Yoh Tanimoto\footnote{Supported in part by the ERC Advanced Grant 227458
OACFT ``Operator Algebras and Conformal Field Theory''.}}\\
Dipartimento di Matematica, Universit\`a di Roma ``Tor
Vergata''\\ Via della Ricerca Scientifica, 1 - I--00133 Roma, Italy.\\
e-mail: {\tt tanimoto@mat.uniroma2.it}}
\begin{document}
\maketitle
\begin{abstract}
A convenient framework to treat massless two-dimensional scattering theories
has been established by Buchholz. In this framework, we show that
the asymptotic algebra and the scattering matrix completely characterize
the given theory under asymptotic completeness and standard assumptions.

Then we obtain several families of interacting wedge-local nets
by a purely von Neumann algebraic procedure.
One particular case of them coincides with the deformation of chiral
CFT by Buchholz-Lechner-Summers. In another case, we manage to determine
completely the strictly local elements.
Finally, using Longo-Witten endomorphisms on the $U(1)$-current net
and the free fermion net, a large family of wedge-local nets is constructed.
\end{abstract}

\section{Introduction}\label{introduction}
Construction of interacting models of quantum field theory in physical
four-dimensional spacetime has been a long-standing open problem since 
the birth of quantum theory. Recently, operator-algebraic methods have
been applied to construct models with weaker localization property \cite{GL07, GL08, BS08, BLS, Lechner11}.
It is still possible to calculate the two-particle scattering matrix for these weakly
localized theories and they have been shown to be nontrivial.
However, the strict locality still remains difficult. Indeed, of these deformed theories,
strictly localized contents have been shown to be trivial in higher dimensions \cite{BLS}.
In contrast, in two-dimensional spacetime, a family of strictly local theories
has been constructed and nontrivial scattering matrices have been calculated \cite{Lechner08}.
The construction of local nets of observables is split up into two procedures:
construction of wedge-local nets and determination of strictly local elements.
In this paper we present a purely von Neumann algebraic procedure to construct
wedge-local nets based on chiral CFT and completely determine strictly local elements
for some of these wedge-local nets. Furthermore, we show that the pair of the S-matrix
and the asymptotic algebra forms a complete invariant of the given net and
give a simple formula to recover the original net from these data.

In algebraic approach to quantum field theory, or algebraic QFT, theories are realized
as local nets of operator algebras. Principal examples are constructed from local quantum fields,
or in mathematical terms, from operator-valued distributions which commute in spacelike regions.
However, recent years purely operator-algebraic constructions of such nets have been found.
A remarkable feature of these new constructions is that they first consider a single von Neumann algebra
(instead of a family of von Neumann algebras) which
is acted on by the spacetime symmetry group in an appropriate way.
The construction procedure relying on a single von Neumann algebra has been proposed
in \cite{Borchers92} and resulted in some intermediate constructions \cite{GL07, GL08, BLS, Lechner11}
and even in a complete construction of local nets \cite{Lechner08}.
This von Neumann algebra is
interpreted as the algebra of observables localized in a wedge-shaped region. There is
a prescription to recover the strictly localized observables \cite{Borchers92}.
However, the algebras of strictly localized observables are not necessarily large
enough and they can be even trivial \cite{BLS}. When it turned out to be sufficiently
large, one had to rely on the modular nuclearity condition, a sophisticated analytic
tool \cite{BDL90, Lechner08}.

Among above constructions, the deformation by Buchholz, Lechner and Summers
starts with an arbitrary wedge-local net.
When one applies the BLS deformation to chiral conformal theories in two dimensions,
things get considerably simplified. We have seen that the theory remains to be
asymptotically complete in the sense of waves \cite{Buchholz75} even after the deformation and the full
S-matrix has been computed \cite{DT11-1}. In this paper we carry out a further
construction of wedge-local nets based on chiral conformal nets.
It turns out that all these construction are related with endomorphisms of
the half-line algebra in the chiral components recently studied by Longo and Witten \cite{LW11}.
Among such endomorphisms, the simplest ones are translations and inner symmetries.
We show that the construction related to translations coincides with the BLS deformation
of chiral CFT. The construction related to inner symmetries is new and we completely determine
the strictly localized observables under some technical conditions.
Furthermore, by using the family of endomorphisms on the $U(1)$-current net considered in \cite{LW11},
we construct a large family of wedge-local nets parametrized by inner symmetric functions.
All these wedge-local nets have nontrivial S-matrix, but the strictly local part of
the wedge-local nets constructed through inner symmetries has trivial S-matrix.
The strict locality of the other constructions remains open.
Hence, to our opinion, the true difficulty lies in strict locality.

Another important question is how large the class of theories is obtained by
this procedure. The class of S-matrices so far obtained is considered
rather small, since any of such S-matrices is contained in the tensor product of
abelian algebras in chiral components, which corresponds to the notion of
local diagonalizability in quantum information.
In this paper, however, we show that a massless asymptotically complete
theory is completely characterized by its asymptotic behaviour and the
S-matrix, and the whole theory can be recovered with a simple formula.
Hence we can say that this formula is sufficiently general.

In Section \ref{preliminaries} we recall standard notions of algebraic QFT and
scattering theory. In Section \ref{asymptotic-chiral-algebra} we show that the pair of S-matrix
and the asymptotic algebra is a complete invariant of a massless asymptotically complete net.
In Section \ref{one-parameter} we construct wedge-local nets using one-parameter endomorphisms of Longo-Witten.
It is shown that the case of translations coincides with the BLS deformation of chiral CFT and
the strictly local elements are completely determined for the case of inner symmetries.
A common argument is summarized in Section \ref{commutativity}.
Section \ref{u1-current} is devoted to the construction of wedge-local nets based
on a specific example, the $U(1)$-current net. A similar construction is obtained
also for the free fermionic net.
Section \ref{conclusion} summarizes our perspectives.

\section{Preliminaries}\label{preliminaries}
\subsection{Poincar\'e covariant net}\label{poincare}
We recall the algebraic treatment of quantum field theory \cite{Haag}.
A {\bf (local) Poincar\'e covariant net $\A$ on $\RR^2$} assigns to each open bounded
region $O$ a von Neumann algebra $\A(O)$ on a (separable) Hilbert space $\H$
satisfying the following conditions:
\begin{enumerate}
\item[(1)] {\bf Isotony.} If $O_1 \subset O_2$, then $\A(O_1) \subset \A(O_2)$.
\item[(2)] {\bf Locality.} If $O_1$ and $O_2$ are causally disjoint, then $[\A(O_1),\A(O_2)] = 0$.
\item[(3)] {\bf Poincar\'e covariance.} There exists a strongly continuous unitary
representation $U$ of the (proper orthochronous) Poincar\'e group $\poincare$ such that
for any open region $O$ it holds that
\begin{equation*}
U(g)\A(O)U(g)^* = \A(gO), \mbox{ for } g \in \poincare.
\end{equation*}
\item[(4)]{\bf Positivity of energy.} The joint spectrum of
the translation subgroup $\RR^2$ of $\poincare$
of $U$ is contained in the forward lightcone $V_+ = \{(p_0,p_1) \in \RR^2: p_0 \ge |p_1|\}$.
\item[(5)] {\bf Existence of the vacuum.} There is a unique (up to a phase) unit vector $\Omega$ in
$\H$ which is invariant under the action of $U$,
and cyclic for $\bigvee_{O \Subset \RR^2} \A(O)$.
\item[(6)] {\bf Additivity.} If $O = \bigcup_i O_i$, then $\A(O) = \bigvee_i \A(O_i)$.
\end{enumerate}

 From these axioms, the following property automatically follows (see \cite{Baumgaertel})
\begin{enumerate}
\item[(7)] {\bf Reeh-Schlieder property.} The vector $\Omega$ is cyclic and separating for each $\A(O)$.
\end{enumerate}

It is convenient to extend the definition of net also to a class of unbounded
regions called {\bf wedges}.
By definition, the standard left and right wedges are as follows:
\begin{eqnarray*}
W_\L &:=& \{(t_0,t_1): t_0 > t_1, t_0 < -t_1\}\\
W_\R &:=& \{(t_0,t_1): t_0 < t_1, t_0 > -t_1\}
\end{eqnarray*}
The wedges $W_\L$, $W_\R$ are invariant under Lorentz boosts.
They are causal complements of each other. All the regions
obtained by translations of standard wedges are still
called left- and right-wedges, respectively.
Moreover, a bounded region obtained as the intersection of a left wedge and a right wedge
is called a {\bf double cone}.
Let $O'$ denote the causal complement of $O$. It holds that $W_\L' = W_\R$, and
if $D = (W_\L+a)\cap(W_\R+b)$ is a double cone, $a,b\in\RR^2$, then
$D' = (W_\R+a)\cup(W_\L+b)$. It is easy to see that $\Omega$ is still cyclic and
separating for $\A(W_\L)$ and $\A(W_\R)$.

We assume the following properties as natural conditions.
\begin{itemize}
\item {\bf Bisognano-Wichmann property.} The modular group $\Delta^{it}$ of $\A(W_\R)$
with respect to $\Omega$ is equal to $U(\Lambda(-2\pi t))$, where
$\Lambda(t) = \left(\begin{matrix} \cosh t & \sinh t\\ \sinh t & \cosh t \end{matrix}\right)$ denotes
the Lorentz boost.
\item {\bf Haag duality.} If $O$ is a wedge or a double cone, then it holds that
$\A(O)' = \A(O')$.
\end{itemize}
If $\A$ is M\"obius covariant (conformal),
then the Bisognano-Wichmann property is automatic
\cite{BGL}, and Haag duality is equivalent to strong additivity
(\cite{Rehren}, see also Section \ref{chiral}).
These properties are valid even in massive interacting models \cite{Lechner08}.
Duality for wedge regions (namely $\A(W_\L)' = \A(W_\R)$) follows from Bisognano-Wichmann
property \cite{Tanimoto11-2}, and it implies that the dual net indeed satisfies
the Haag duality \cite{Baumgaertel}.

\subsection{Chiral conformal nets}\label{chiral}
In this Section we introduce a fundamental class of examples of Poincar\'e covariant nets.
For this purpose, first we explain nets on the one-dimensional circle $S^1$.
An open nonempty connected nondense subset $I$ of the circle $S^1$ is called an interval.
A {\bf (local) M\"obius covariant net $\A_0$ on $S^1$} assigns to each interval
a von Neumann algebra $\A_0(I)$ on a (separable) Hilbert space $\H_0$
satisfying the following conditions:
\begin{enumerate}
\item[(1)] {\bf Isotony.} If $I_1 \subset I_2$, then $\A_0(I_1) \subset \A_0(I_2)$.
\item[(2)] {\bf Locality.} If $I_1 \cap I_2 = \emptyset$, then $[\A_0(I_1),\A_0(I_2)] = 0$.
\item[(3)] {\bf M\"obius covariance.} There exists a strongly continuous unitary
representation $U_0$ of the M\"obius group $\psl2r$ such that
for any interval $I$ it holds that
\begin{equation*}
U_0(g)\A_0(I)U_0(g)^* = \A_0(gI), \mbox{ for } g \in \psl2r.
\end{equation*}
\item[(4)]{\bf Positivity of energy.} The generator of the one-parameter subgroup of
rotations in the representation $U_0$ is positive.
\item[(5)] {\bf Existence of the vacuum.} There is a unique (up to a phase) unit vector $\Omega_0$ in
$\H_0$ which is invariant under the action of $U_0$,
and cyclic for $\bigvee_{I \Subset S^1} \A_0(I)$.
\end{enumerate}
We identify the circle $S^1$ as the one-point compactification of the real line $\RR$
by the Cayley transform:
\[
t = i\frac{z-1}{z+1} \Longleftrightarrow z = -\frac{t-i}{t+i}, \phantom{...}t \in \RR,
\phantom{..}z \in S^1 \subset \CC.
\]
Under this identification, we refer to translations $\t$ and dilations $\d$
of $\RR$ and these are contained in $\psl2r$.
It is known that the positivity of energy is equivalent to the positivity of
the generator of translations \cite{Longo08}.

 From the axioms above, the following properties automatically follow (see \cite{GF93})
\begin{enumerate}
\item[(6)] {\bf Reeh-Schlieder property.} The vector $\Omega_0$ is cyclic and separating for each $\A_0(I)$.
\item[(7)] {\bf Additivity.} If $I = \bigcup_i I_i$, then $\A_0(I) = \bigvee_i \A_0(I_i)$.
\item[(8)] {\bf Haag duality on $S^1$.} For an interval $I$ it holds that $\A_0(I)' = \A_0(I')$,
where $I'$ is the interior of the complement of $I$ in $S^1$.
\item[(9)] {\bf Bisognano-Wichmann property.} The modular group $\Delta_0^{it}$ of $\A_0(\RR_+)$
with respect to $\Omega_0$ is equal to $U_0(\delta(-2\pi t))$, where
$\delta$ is the one-parameter group of dilations.
\end{enumerate}

\begin{example}
At this level, we have a plenty of examples: The simplest one is the $U(1)$-current
net which will be explained in detail in Section \ref{u1-current-summary}.
Among others, the most important family is
the loop group nets \cite{GF93, Wassermann}. Even a classification result has been
obtained for a class of nets on $S^1$ \cite{KL04-1}.
\end{example}

A net $\A_0$ on $S^1$ is said to be {\bf strongly additive} if it holds that
$\A_0(I) = \A_0(I_1)\vee\A_0(I_2)$, where $I_1$ and $I_2$ are intervals obtained
by removing an interior point of $I$.

Let us denote by $\diffs1$ the group of orientation-preserving
diffeomorphisms of the circle $S^1$. This group naturally includes $\psl2r$.
A M\"obius covariant net $\A_0$ on $S^1$ is said to be
{\bf conformal} or {\bf diffeomorphism covariant}
if the representation $U_0$ of $\psl2r$ associated to $\A_0$
extends to a projective unitary representation of $\diffs1$ such that
for any interval $I$ and $x \in \A_0(I)$ it holds that
\begin{gather*}
U_0(g)\A_0(I)U_0(g)^* = \A_0(gI), \mbox{ for } g \in \diffs1,\\
U_0(g)xU_0(g)^* = x, \mbox{ if } \supp(g) \subset I^\prime,
\end{gather*}
where $\supp(g) \subset I^\prime$ means that $g$ acts identically on $I$.

Let $\A_0$ be a M\"obius covariant net on $S^1$. If a unitary operator
$V_0$ commutes with the translation unitaries $T_0(t) = U_0(\t(t))$ and it holds that
$V_0\A_0(\RR_+)V_0^* \subset \A_0(\RR_+)$, then we say that $V_0$ implements
a {\bf Longo-Witten endomorphism} of $\A_0$. In particular, $V_0$ preserves
$\Omega_0$ up to a scalar since $\Omega_0$ is the unique invariant vector
under $T_0(t)$.
Such endomorphisms have been studied first in \cite{LW11} and they found a large family
of endomorphisms for the $U(1)$-current net, its extensions and the free fermion net.

Let us denote two lightlines by $\L_\pm := \{(t_0,t_1)\in\RR^2: t_0\pm t_1 = 0\}$.
Note that any double cone $D$ can be written as a direct product of intervals
$D = I_+\times I_-$ where $I_+ \subset \L_+$ and $I_- \subset \L_-$.
Let $\A_1,\A_2$ be two M\"obius covariant nets on $S^1$ defined on the Hilbert spaces $\H_1,\H_2$ 
with the vacuum vectors $\Omega_1,\Omega_2$ and the representations $U_1,U_2$ of $\psl2r$.
 From this pair, we can construct a two-dimensional net $\A$ as follows:
For a double cone $D = I_+\times I_-$, we set $\A(D) = \A_1(I_+)\otimes\A_2(I_-)$.
For a general open region $O \subset \RR$, we set $\A(O) := \bigvee_{D\subset O} \A(D)$.
We set $\Omega := \Omega_1\otimes\Omega_2$ and
define the representation $U$ of $\psl2r\times\psl2r$
by $U(g_1\times g_2) := U_1(g_1)\otimes U_2(g_2)$.
By recalling that $\psl2r\times\psl2r$ contains the Poincar\'e group $\poincare$,
it is easy to see that $\A$ together with $U$ and $\Omega$ is a Poincar\'e covariant net.
We say that such $\A$ is {\bf chiral} and $\A_1,\A_2$ are referred to as the {\bf chiral components}.
If $\A_1,\A_2$ are conformal, then the representation $U$ naturally extends to a projective representation
of $\diffs1\times\diffs1$.

\subsection{Scattering theory for Borchers triples}\label{scattering-theory}
A {\bf Borchers triple} on a Hilbert space $\H$ is a triple $(\M, T, \Omega)$ of
a von Neumann algebra $\M \subset B(\H)$, a unitary representation $T$ of $\RR^2$ on $\H$
and a vector $\Omega \in \H$ such that
\begin{itemize}
\item $\Ad T(t_0,t_1)(\M) \subset \M$ for $(t_0,t_1) \in W_\R$, the standard right wedge.
\item The joint spectrum $\sp T$ is contained in the forward lightcone
$V_+ = \{(p_0,p_1) \in \RR_2: p_0 \ge |p_1|\}$.
\item $\Omega$ is a unique (up to scalar) invariant vector under $T$, and cyclic and
separating for $\M$.
\end{itemize}
By the theorem of Borchers \cite{Borchers92,Florig98}, the representation $T$ extends to
the Poincar\'e group $\poincare$, with Lorentz boosts represented by the modular group
of $\M$ with respect to $\Omega$. With this extension $U$, $\M$ is Poincar\'e covariant
in the sense that if $gW_\R \subset W_\R$ for $g\in\poincare$, then $U(g)\M U(g)^* \subset \M$.

The relevance of Borchers triples comes from the fact that
we can construct wedge-local nets from them:
Let $\W$ be the set of wedges, i.e. the the set of all $W=gW_\R$ 
or $W = gW_\L$ where $g$ is a Poincar\'e transformation.
A {\bf wedge-local net} $\W\ni W\mapsto \A(W)$ is a map
from $\W$ to the set of von Neumann algebras which satisfy
isotony, locality, Poincar\'e covariance, positivity of energy,
and existence of vacuum, restricted to $\W$.
A wedge-local net associated 
with the Borchers triple $(\M,T,\Omega)$ is the map defined by 
$\A(W_R+a)=T(a)\M T(a)^\ast$ and $\A(W_R'+a) = T(a)\M' T(a)^\ast$.
This can be considered as a notion of nets with a weaker
localization property. It is clear that there is a one-to-one
correspondence between Borchers triples and wedge-local nets.
A further relation with local nets will be explained at the end
of this section. For simplicity, we study always Borchers triples,
which involve only a single von Neumann algebra.

We denote by $\H_+$ (respectively by $\H_-$) the space of the single excitations with positive momentum,
(respectively with negative momentum) i.e., $\H_+ = \{\xi \in \H: T(t,t)\xi = \xi \mbox{ for } t\in\RR\}$
(respectively $\H_- = \{\xi \in \H: T(t,-t)\xi = \xi \mbox{ for } t\in\RR\}$).

Our fundamental examples come from Poincar\'e covariant nets.
For a Poincar\'e covariant net $\A$, we can construct
a Borchers triple as follows:
\begin{itemize}
\item $\M = \A(W_\R)$
\item $T := U|_{\RR^2}$, the restriction of $U$ to the translation subgroup.
\item $\Omega$: the vacuum vector.
\end{itemize}
Indeed, the first condition follows from the Poincar\'e (in particular, translation) covariance
of the nets and the other conditions are assumed properties of $U$ and $\Omega$ of the net.
If $(\M,T,\Omega)$ comes from a chiral conformal net $\A = \A_1\otimes\A_2$, then
we say this triple is chiral, as well.
This simple construction by tensor product of chiral nets
is considered to be the ``undeformed net''.
We will exhibit later different constructions.

Given a Borchers triple $(\M,T,\Omega)$, we can consider the scattering theory
with respect to massless particles \cite{DT11-1}, which is an extension of \cite{Buchholz75}:
For a bounded operator $x \in B(\H)$ we write $x(a) = \Ad T(a)(x)$ for $a \in \RR^2$.
Furthermore, we define a family of operators parametrized by $\T$:
\[
x_\pm(h_\T) := \int dt\, h_\T(t) x(t,\pm t),
\]
where $h_\T(t) = |\T|^{-\e}h(|\T|^{-\e}(t-\T))$, $0<\e<1$ is a constant,
$\T \in \RR$ and $h$ is a nonnegative symmetric smooth function on $\RR$ such that
$\int dt\, h(t) = 1$.

\begin{lemma}[\cite{Buchholz75} Lemma 2(b), \cite{DT11-1} Lemma 2.1]\label{lm:asymptotic-field}
Let $x \in \M$, then the limits
$\Phi^\tout_+(x) := \underset{\T\to+\infty}\slim \, x_+(h_\T)$ and
$\Phi^\tin_-(x) := \underset{\T\to-\infty}\slim \, x_-(h_\T)$ exist
and it holds that
\begin{itemize}
\item $\Phi^\tout_+(x)\Omega = P_+ x\Omega$ and $\Phi^\tin_-(x)\Omega = P_- x\Omega$
\item $\Phi^\tout_+(x)\H_+ \subset \H_+$ and $\Phi^\tin_-(x)\H_- \subset \H_-$.
\item $\Ad U(g) (\Phi^\tout_+(x)) = \Phi^\tout_+(\Ad U(g)(x))$ and
 $\Ad U(g) (\Phi^\tin_-(x)) = \Phi^\tin_-(\Ad U(g)(x))$ for $g\in\poincare$ such that
 $g W_\R \subset W_\R$.
\end{itemize}
Furthermore, the limits $\Phi^\tout_+(x)$ (respectively $\Phi^\tin_-(x)$)
depends only on $P_+ x\Omega$ (respectively on $P_- x\Omega$).
\end{lemma}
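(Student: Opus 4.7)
The plan is to follow the strategy of Buchholz, which exploits the fact that the lightlike translations $T(t,t)$ with $t\ge 0$ preserve $W_\R$ and hence leave $\M$ globally invariant. Two \emph{a priori} estimates get the argument started. Since $h_\T\ge 0$ with $\int h_\T = 1$ and $x(t,t)$ is unitarily equivalent to $x$, one has the uniform bound $\|x_+(h_\T)\|\le\|x\|$. Second, using $T(-t,-t)\Omega=\Omega$ and a rescaling of the integration variable,
\[
x_+(h_\T)\Omega \;=\; e^{i\T(P_0+P_1)}\,\hat h\bigl(-|\T|^\e(P_0+P_1)\bigr)\,x\Omega,
\]
where $\hat h$ denotes the Fourier transform of $h$. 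By the spectrum condition, $P_0+P_1\ge 0$ with kernel exactly $\H_+$; since $\hat h$ is continuous with $\hat h(0)=1$ and vanishes at infinity, dominated convergence in the spectral measure of $P_0+P_1$ yields $x_+(h_\T)\Omega\to P_+ x\Omega$ as $\T\to +\infty$, where $P_+$ denotes the orthogonal projection onto $\H_+$.

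To upgrade to strong convergence on all of $\H$, I would use $\e<1$ to argue that the essential support of $h_\T$ lies in $(0,\infty)$ for $\T$ large, so that $x(t,t)\in\M$ on the support of $h_\T$ and hence $x_+(h_\T)\in\M$. Then for any $y'\in\M'$,
\[
x_+(h_\T) y'\Omega \;=\; y' x_+(h_\T)\Omega \;\longrightarrow\; y' P_+ x\Omega,
\]
up to a correction from the portion of $h_\T$ surviving in $t<0$, which is controlled in norm by the first step multiplied by an $L^1$-tail that vanishes. Combined with Reeh--Schlieder cyclicity of $\Omega$ for $\M'$ and the uniform norm bound, a standard approximation argument delivers the strong limit $\Phi^\tout_+(x)$; the identity $\Phi^\tout_+(x)\Omega = P_+ x\Omega$ is inherited from the first step. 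The construction of $\Phi^\tin_-(x)$ is symmetric, using that $(t,-t)+W_\R\subset W_\R$ for $t\le 0$.

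The remaining assertions follow by variants of the same idea. Invariance $\Phi^\tout_+(x)\H_+\subset\H_+$ I would derive as a consequence of the stronger statement $[\Phi^\tout_+(x),T(s,s)]=0$: the change of variables gives $T(s,s)x_+(h_\T)T(-s,-s) = x_+(h_\T^s)$ with $h_\T^s(t):=h_\T(t-s)$, and the key estimate $\|h_\T^s-h_\T\|_1 = O(|s|\,|\T|^{-\e})\to 0$ (valid since $\e>0$) combined with $\|x_+(h_\T^s)-x_+(h_\T)\|\le \|x\|\cdot\|h_\T^s-h_\T\|_1$ forces equality of the two limits. For Poincar\'e covariance, conjugation by $U(g)$ with $gW_\R\subset W_\R$ replaces $h_\T$ by a rescaled and translated density whose $L^1$-distance from some $h_{\T'}$ vanishes as $\T\to+\infty$, so both sides converge to the same element. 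Finally, since $x_+(h_\T)\in\M$ for $\T$ large, $\Phi^\tout_+(x)\in\M$ by weak closure, and separability of $\Omega$ for $\M$ together with $\Phi^\tout_+(x)\Omega = P_+ x\Omega$ shows that $\Phi^\tout_+(x)$ is determined by $P_+ x\Omega$ alone.

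The main obstacle I anticipate is the second paragraph, transferring convergence from $\Omega$ to the dense subspace $\M'\Omega$. Although commutation of $x(t,t)$ with elements of $\M'$ is automatic for $t\ge 0$, one must still balance the tails of $h_\T$ for $t<0$ against the spectral decay in the $(P_0+P_1)$-direction. The interplay between the $L^1$-localisation of $h_\T$ and positivity of energy is what fixes the allowed range $0<\e<1$: the condition $\e<1$ concentrates the weight of $h_\T$ sharply near $\T$, while $\e>0$ makes the averaging window wide enough for Riemann--Lebesgue-type decay to dominate.
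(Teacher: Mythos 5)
The paper does not prove this lemma itself; it is imported from Buchholz's collision theory for waves and from \cite{DT11-1}, Lemma 2.1, and your reconstruction follows exactly the standard argument given there: convergence on $\Omega$ by spectral calculus for the positive lightlike generator whose kernel is $\H_+$, promotion to the dense set $\M'\Omega$ using that $x_+(h_\T)\in\M$ for large $\T$ up to an $L^1$-tail (this is where $\e<1$ enters), and the uniform bound $\|x_+(h_\T)\|\le\|x\|$ together with cyclicity of $\Omega$ for $\M'$. The invariance of $\H_+$ via commutation with $T(s,s)$ (where $\e>0$ enters) and the identification of the limit through $\Phi^\tout_+(x)\in\M$, $\Phi^\tout_+(x)\Omega=P_+x\Omega$ and the separating property of $\Omega$ are likewise the standard route and are sound.

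The one step whose stated mechanism fails is Poincar\'e covariance for boosts. Writing $y=\Ad U(\Lambda(s))(x)$ and letting $\lambda$ be the eigenvalue of $\Lambda(s)$ on the lightlike direction, conjugation gives $\Ad U(\Lambda(s))(x_+(h_\T)) = y_+(k_\T)$ with $k_\T(t)=\lambda^{-1}h_\T(\lambda^{-1}t)$, a bump centred at $\lambda\T$ of width $\lambda|\T|^{\e}$, whereas $h_{\lambda\T}$ has width $\lambda^{\e}|\T|^{\e}$; by scaling, $\|k_\T-h_{\lambda\T}\|_1=\|\mu h(\mu\,\cdot)-h\|_1$ with $\mu=\lambda^{\e-1}$, a nonzero constant independent of $\T$. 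So the $L^1$-distance to ``some $h_{\T'}$'' does not vanish, and the two limits cannot be identified that way. The fix is already contained in your last paragraph: $k_\T=(h^{(\mu)})_{\lambda\T}$ for the admissible averaging function $h^{(\mu)}(u)=\mu h(\mu u)$, so $y_+(k_\T)$ converges by the existence proof run with $h^{(\mu)}$ in place of $h$, and its limit lies in $\M$ and sends $\Omega$ to $P_+y\Omega$; since $\Omega$ separates $\M$, it coincides with $\Phi^\tout_+(y)$. In other words, covariance under boosts should be routed through the independence of the limit from the choice of $h$ and $\e$, which you establish anyway, rather than through an $L^1$ estimate; pure translations cause no such issue since they commute with $T(t,t)$ and leave $h_\T$ untouched.
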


Similarly we define asymptotic objects for the left wedge $W_\L$.
Since $J\M'J=\M$, where $J$ is the modular conjugation for $\M$ with respect to
$\Omega$, we can define for any  $y\in\M'$ 
\begin{equation*}
\Phi_+^\tin(y) := J\Phi_+^\tout(JyJ)J, \,\, \Phi_-^\tout(y):=J\Phi_-^\tin(JyJ)J.
\end{equation*}
Then we have the following.
\begin{lemma}[\cite{DT11-1}, Lemma 2.2]\label{lm:asymptotic-field2}
Let $y\in \M'$.  Then
\begin{equation*}
\Phi_+^\tin(y)=\underset{\T\to-\infty}{\slim}\,y_+(h_\T), \,\,
\Phi_-^\tout(y)=\underset{\T\to\infty}{\slim}\,y_-(h_\T).
\end{equation*}
These operators depend only on the respective vectors $\Phi_+^\tin(y)\Omega = P_+y\Omega$,
$\Phi_-^\tout(y)\Omega = P_- y\Omega$ and we have
\begin{enumerate}
\item[(a)] $\Phi_+^\tin(y)\H_+\subset\H_+, \,\, \Phi_-^\tout(y)\H_-\subset\H_-$,
\item[(b)] $\Ad U(g)(\Phi_+^\tin(y)) = \Phi_+^\tin(\Ad U(g)(y))$ and
$\Ad U(g)(\Phi_-^\tout(y)) = \Phi_-^\tout(\Ad U(g)(y))$ for $g\in\poincare$ such that
$g W_\L \subset W_\L$.
\end{enumerate}
\end{lemma}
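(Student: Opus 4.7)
The plan is to reduce the claim to Lemma \ref{lm:asymptotic-field} by conjugating with the modular involution $J$ of $(\M,\Omega)$. For $y \in \M'$ one has $JyJ \in \M$ (since $J\M J = \M'$), so Lemma \ref{lm:asymptotic-field} supplies the strong limit $\Phi^\tout_+(JyJ) = \slim_{\T\to+\infty}(JyJ)_+(h_\T)$. Since $J$ is antiunitary, conjugation by $J$ commutes with strong operator limits, so by definition
\begin{equation*}
\Phi_+^\tin(y) \;=\; J\Phi^\tout_+(JyJ)J \;=\; \underset{\T\to+\infty}{\slim}\, J(JyJ)_+(h_\T) J,
\end{equation*}
and it suffices to identify this with $\slim_{\T\to-\infty} y_+(h_\T)$.

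The key computation uses Borchers' identity $JT(a)J = T(-a)$ to obtain $J(JyJ)(t,t)J = y(-t,-t)$. Pulling the antilinear $J$ through the integral against the real kernel $h_\T$ and changing variables $s = -t$ gives
\begin{equation*}
J(JyJ)_+(h_\T)J \;=\; \int dt\, h_\T(t)\, y(-t,-t) \;=\; \int ds\, h_\T(-s)\, y(s,s),
\end{equation*}
and the symmetry $h(-x) = h(x)$ of the mollifier forces $h_\T(-s) = h_{-\T}(s)$. The right-hand side is thus $y_+(h_{-\T})$, so after renaming $\T' = -\T$ the limit $\T \to +\infty$ becomes $\T' \to -\infty$, yielding the asserted formula. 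The formula for $\Phi_-^\tout(y)$ is proved identically, interchanging the two lightlines.

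The remaining assertions follow by transporting the corresponding items of Lemma \ref{lm:asymptotic-field} through $J$. From $JT(t,t)J = T(-t,-t)$ one sees $J\H_+ = \H_+$, hence $JP_+ = P_+J$; combined with $J\Omega = \Omega$ this gives $\Phi_+^\tin(y)\Omega = JP_+(JyJ)\Omega = P_+ y\Omega$, the inclusion $\Phi_+^\tin(y)\H_+ \subset \H_+$, and the fact that $\Phi_+^\tin(y)$ depends only on $P_+y\Omega$, via the corresponding statement of Lemma \ref{lm:asymptotic-field}. For the covariance in (b), one uses that $J$ implements the involutive automorphism $j(a,\Lambda) = (-a,\Lambda)$ of $\poincare$, induced by conjugation with the point reflection $(t_0,t_1) \mapsto (-t_0,-t_1)$, which swaps $W_\L$ and $W_\R$: so $gW_\L \subset W_\L$ if and only if $jgj^{-1}W_\R \subset W_\R$, and applying the covariance of $\Phi_+^\tout$ from Lemma \ref{lm:asymptotic-field} to $jgj^{-1}$ and conjugating back by $J$ produces the corresponding covariance for $\Phi_+^\tin$. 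The only delicate point is the sign bookkeeping arising from the antilinearity of $J$ together with the reality and symmetry of $h_\T$; once this is sorted out, the rest is a routine transport of Lemma \ref{lm:asymptotic-field} through $J$.
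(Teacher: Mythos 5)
Your proof is correct and follows essentially the route the paper intends: the paper itself defines $\Phi_+^\tin(y) := J\Phi_+^\tout(JyJ)J$ immediately before the statement and defers the verification to \cite{DT11-1}, where the argument is exactly this transport through $J$ using Borchers' relation $JT(a)J=T(-a)$, the reality and symmetry of $h$ (giving $h_\T(-s)=h_{-\T}(s)$), and $JP_+J=P_+$, $J\Omega=\Omega$. All the steps you flag as delicate (antilinearity versus the real kernel, the sign flip $\T\mapsto-\T$, and $jW_\L=W_\R$ for the covariance in (b)) check out.
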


For $\xi_+ \in \H_+$, $\xi_- \in \H_-$, there are sequences of local operators $\{x_n\} \subset \M$ and
$\{y_n\} \subset \M'$ such that 
$\underset{n\to\infty}\slim \,P_+ x_n\Omega = \xi_+$ and
$\underset{n\to\infty}\slim \,P_- y_n\Omega = \xi_-$.
With these sequences we define collision states as in \cite{DT11-1}:
\begin{eqnarray*}
\xi_+\timesi\xi_- &=& \underset{n\to\infty}\slim \,\Phi^\tin_+(x_n)\Phi^\tin_-(y_n)\Omega\\
\xi_+\timeso\xi_- &=& \underset{n\to\infty}\slim \,\Phi^\tout_+(x_n)\Phi^\tout_-(y_n)\Omega.
\end{eqnarray*}
We interpret $\xi_+\timesi\xi_-$ (respectively $\xi_+\timeso\xi_-$) as the incoming
(respectively outgoing) state which describes two non-interacting waves
$\xi_+$ and $\xi_-$.
These asymptotic states have the following natural properties.
\begin{lemma}[\cite{DT11-1}, Lemma 2.3]\label{lm:collision-states}
For the collision states $\xi_+\timesi\xi_-$ and $\eta_+\timesi\eta_-$ it holds that
\begin{enumerate}
\item $\<\xi_+\timesi\xi_-, \eta_+\timesi\eta_-\> = \<\xi_+,\eta_+\>\cdot\<\xi_-,\eta_-\>$.
\item $U(g) (\xi_+\timesi\xi_-) = (U(g)\xi_+)\timesi(U(g)\xi_-)$
for all $g\in\poincare$ such that $gW_\R\subset W_\R$.
\end{enumerate}
And analogous formulae hold for outgoing collision states.
\end{lemma}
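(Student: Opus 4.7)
The plan rests on three facts about the incoming asymptotic fields $\Phi^\tin_\pm$: (a) strong commutativity between the $+$- and $-$-type fields, $[\Phi^\tin_+(\,\cdot\,),\Phi^\tin_-(\,\cdot\,)] = 0$; (b) invariance of the one-particle subspaces, $\Phi^\tin_\pm(\,\cdot\,)\H_\pm\subset\H_\pm$, from Lemmas \ref{lm:asymptotic-field} and \ref{lm:asymptotic-field2}; and (c) mutual orthogonality of $\H_+\cap\Omega^\perp$ and $\H_-\cap\Omega^\perp$, which follows by spectral analysis of the translation representation $T$ combined with positivity of energy (in lightcone momenta, $\H_+$ and $\H_-$ lie on two disjoint half-rays meeting only at $\Omega$).

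The central technical step is (a). At finite $\T$ the regularised expressions $x_n{}_+(h_\T)$ and $y_n{}_-(h_\T)$ are weighted integrals of translates $x_n(t,t)$ and $y_n(s,-s)$; as $\T\to -\infty$ these translates push one operator to lightlike infinity in the $(1,1)$ direction and the other in the $(1,-1)$ direction, so their effective support regions become asymptotically spacelike separated. Combined with wedge-locality $[\M,\M']=0$ and a uniform-in-$\T$ bound on the commutator, this lets one interchange the strong limits and conclude the commutator vanishes.

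Granted (a)--(c), Part (1) reduces to a short calculation. Commuting all $+$-type factors past the $-$-type ones,
\[
\<\xi_+\timesi\xi_-, \eta_+\timesi\eta_-\> = \lim_{n,m}\<\Omega, A_{n,m}B_{n,m}\Omega\>,
\]
where $A_{n,m}$ is a product of two $+$-type and $B_{n,m}$ a product of two $-$-type asymptotic fields. By (b), $A_{n,m}\Omega\in\H_+$ and $B_{n,m}\Omega\in\H_-$; by (c) their components orthogonal to $\Omega$ are mutually orthogonal, so
\[
\<\Omega, A_{n,m}B_{n,m}\Omega\> = \<\Omega, A_{n,m}\Omega\>\<\Omega, B_{n,m}\Omega\> \longrightarrow \<\xi_+,\eta_+\>\<\xi_-,\eta_-\>,
\]
using $\Phi^\tin_+(x)\Omega = P_+x\Omega$ and $\Phi^\tin_-(y)\Omega = P_-y\Omega$.

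For Part (2), I apply the intertwining relations $U(g)\Phi^\tin_\pm(z)U(g)^* = \Phi^\tin_\pm(\Ad U(g)(z))$ from Lemmas \ref{lm:asymptotic-field} and \ref{lm:asymptotic-field2} inside the defining limit of $\xi_+\timesi\xi_-$, using $U(g)\Omega = \Omega$; for Lorentz boosts both covariance relations apply directly, while for translations the two corresponding inclusions must be checked separately. Since $U(g)$ preserves the lightlike translation subgroups defining $\H_\pm$, it commutes with $P_\pm$, so the new approximants converge on $\Omega$ to $U(g)\xi_+$ and $U(g)\xi_-$, yielding $(U(g)\xi_+)\timesi(U(g)\xi_-)$. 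The outgoing case is identical after exchanging $\tin$ and $\tout$. The principal obstacle is Step 1: a quantitative estimate on the commutator of the two regularised operators at each $\T$ is needed in order to interchange the strong limits with the commutator; once this is in hand, the rest is essentially formal.
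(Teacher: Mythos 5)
The paper itself offers no proof of this lemma---it is imported verbatim from \cite{DT11-1}---so your proposal can only be measured against the standard argument, which is indeed the one you outline: commute the $+$ and $-$ asymptotic fields, use that $\H_+\ominus\CC\Omega$ and $\H_-\ominus\CC\Omega$ are mutually orthogonal (they are spectral subspaces of $T$ over the two boundary rays of $V_+$, which meet only at the origin, whose spectral subspace is $\CC\Omega$ by uniqueness of the vacuum), and push the covariance of the individual fields through the defining limits. Your facts (b) and (c), and the reduction of Part (1) to $\<\Omega,A_{n,m}\Omega\>\<\Omega,B_{n,m}\Omega\>$, are correct as stated.

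The genuine gap is exactly where you place it, in step (a), but the obstacle is illusory and the mechanism you propose---asymptotic spacelike separation plus a quantitative commutator bound, which you admit you do not have---is not the right one. No estimate is needed. Since $0<\e<1$, the support of $h_\T$ is contained in $\T+|\T|^{\e}\,\supp h\subset\RR_-$ for all sufficiently negative $\T$. For $t\le 0$ one has $(t,t)\in\overline{W_\L}$ and $(t,-t)\in\overline{W_\R}$, and $\Ad T(a)(\M')\subset\M'$ for $a\in\overline{W_\L}$, $\Ad T(a)(\M)\subset\M$ for $a\in\overline{W_\R}$; hence $y_+(h_\T)=\int dt\,h_\T(t)\,\Ad T(t,t)(y)\in\M'$ for $y\in\M'$ while $x_-(h_\T)\in\M$ for $x\in\M$. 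The two regularised operators therefore commute \emph{exactly} at each such finite $\T$, and the uniform bound $\|z_\pm(h_\T)\|\le\|z\|$ lets you pass this to the strong limits. A second, smaller gap sits in Part (2): you flag that for translations ``the two inclusions must be checked separately'' but do not check them, and one of them fails as stated---Lemma \ref{lm:asymptotic-field2} gives covariance of $\Phi_+^{\tin}$ only for $gW_\L\subset W_\L$, whereas a translation with $W_\R+a\subset W_\R$ has its component along $\L_-$ pointing the wrong way for $W_\L$. Closing this requires the additional ingredients that $\Phi_+^{\tin}(y)$ depends only on $P_+y\Omega$ and that translations along $\L_+$ act trivially on $\H_+$ (and symmetrically for the $-$ field), so that the offending lightlike component can be absorbed. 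As written, Part (2) is an outline rather than a proof in the translation case.
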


Furthermore, we set the spaces of collision states: Namely,
we let $\H^\tin$ (respectively $\H^\tout$) be the subspace generated by $\xi_+\timesi\xi_-$
(respectively $\xi_+\timeso\xi_-$).
 From Lemma \ref{lm:collision-states}, we see that the following map
\[
S: \xi_+\timeso\xi_- \longmapsto \xi_+\timesi\xi_-
\]
is an isometry. The operator $S: \H^\tout \to \H^\tin$ is called
the {\bf scattering operator} or the {\bf S-matrix} of the Borchers triple $(\M,U,\Omega)$.
We say the waves in the triple are {\bf interacting} if $S$ is not a constant
multiple of the identity operator on $\H^\tout$.
We say that the Borchers triple is {\bf asymptotically complete (and massless)} if
it holds that $\H^\tin=\H^\tout=\H$.
We have seen that a chiral net and its BLS deformations
(see Section \ref{bls-deformation}) are asymptotically complete \cite{DT11-1}.
If the Borchers triple $(\M,T,\Omega)$ is constructed from a Poincar\'e covariant net $\A$,
then we refer to these objects and notions
as $S,\H_\pm$ and asymptotic completeness of $\A$, etc.
This notion of asymptotic completeness
concerns only massless excitations. Indeed, if one considers
the massive free model for example, then it is easy to see that all the asymptotic
fields are just the vacuum expectation (mapping to $\CC \1$).

To conclude this section, we put a remark on the term ``wedge-local net''.
If a Borchers triple $(\M,T,\Omega)$ comes from
a Haag dual Poincar\'e covariant net $\A$, then
the local algebras are recovered by the formula $\A(D) = T(a)\M T(a)^*\cap T(b)\M' T(b)^* = \A(W_\R+a)\cap\A(W_\L+b)$,
where $D = (W_\R+a)\cap(W_\L+b)$ is a double cone. Furthermore, if $\A$ satisfies
Bisognano-Wichmann property, then the Lorentz boost is obtained from the modular
group, hence all the components of the net are regained from the triple.
Conversely, for a given Borchers triple, one can define a ``local net'' by the same formula
above. In general, this ``net'' satisfies isotony, locality, Poincar\'e covariance
and positivity of energy, but not necessarily satisfies additivity and cyclicity
of vacuum \cite{Borchers92}. Addivity is usually used only in the proof of Reeh-Schlieder property,
thus we do not consider it here. If the ``local net'' constructed from a Borchers triple
satisfies cyclicity of vacuum, we say that the original Borchers triple is {\bf strictly local}.
In this respect, a Borchers triple or a wedge-local net
is considered to have a weaker localization property.
Hence the search for Poincar\'e covariant nets reduces to the search for
strictly local nets. Indeed, by this approach a family of (massive)
interacting Poincar\'e nets has been obtained \cite{Lechner08}.

\section{Asymptotic chiral algebra and S-matrix}\label{asymptotic-chiral-algebra}
\subsection{Complete invariant of nets}\label{complete-invariant}
Here we observe that asymptotically complete (massless) net $\A$ is completely determined by
its behaviour at asymptotic times. This is particularly nice, since the search for Poincar\'e
covariant nets is reduced to the search for appropriate S-matrices. Having seen the classification
of a class of chiral components \cite{KL04-1}, one would hope even for
a similar classification result for massless asymptotically complete nets.


Specifically, we construct a complete invariant of a net with Bisognano-Wichmann
property consisting of two elements.
We already know the first element, the S-matrix. Let us construct the second element,
the asymptotic algebra.
An essential tool is half-sided modular inclusion (see \cite{Wiesbrock93, AZ05} for the original references).
Indeed, we use an analogous argument as in \cite[Lemma 5.5]{Tanimoto11-2}.
Let $\N \subset \M$ be an inclusion of von Neumann algebras. If there is a cyclic and separating
vector $\Omega$ for $\N, \M$ and $\M\cap\N'$, then the inclusion $\N \subset \M$ is said to be
{\bf standard} in the sense of \cite{DL84}.
If $\s^\M_t(\N) \subset \N$ for $t \in \RR_\pm$ where $\s^\M_t$ is the modular
automorphism of $\M$ with respect to $\Omega$, then it is called
a {\bf $\pm$half-sided modular inclusion}.
\begin{theorem}[\cite{Wiesbrock93, AZ05}]
If $(\N\subset \M, \Omega)$ is a standard $+$(respectively $-$)half-sided
modular inclusion, then
there is a M\"obius covariant net $\A_0$ on $S^1$ such that $\A_0(\RR_-) = \M$
and $\A_0(\RR_- - 1) = \N$ (respectively $\A_0(\RR_+) = \M$ and $\A_0(\RR_+ + 1) = \N$).

If a unitary representation $T_0$ of $\RR$ with positive spectrum satisfies
$T_0(t)\Omega = \Omega$ for $t\in\RR$,
$\Ad T_0(t)(\M) \subset \M$ for $t\le 0$ (respectively $t\ge 0$) and $\Ad T_0(-1)(\M) = \N$
(respectively $\Ad T_0(1)(\M) = \N$), then $T_0$ is the representation of the translation
group of the M\"obius covariant net constructed above.
\end{theorem}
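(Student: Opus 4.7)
I would follow the Borchers--Wiesbrock construction of a Möbius covariant net from a half-sided modular inclusion. I treat only the $-$case ($\sigma^\M_t(\N) \subset \N$ for $t\le 0$, aiming at $\A_0(\RR_+) = \M$ and $\A_0(\RR_+ + 1) = \N$); the $+$case follows by orientation reversal, i.e.\ replacing $(\M,\N)$ with $(\M',\N')$ and conjugating by the modular involution $J_\M$.

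The first and most substantial step is to construct a strongly continuous one-parameter unitary group $T_0(a)$ with positive generator $P$, fixing $\Omega$, such that $\Ad T_0(a)(\M) \subset \M$ for $a \ge 0$ and $\Ad T_0(1)(\M) = \N$. The generator is defined, on a suitable dense domain, by
\[
P := \tfrac{1}{2\pi}\bigl(\log \Delta_\N - \log \Delta_\M\bigr).
\]
Positivity of $P$, essential self-adjointness, and the covariance relation $\Delta_\M^{is} T_0(a) \Delta_\M^{-is} = T_0(e^{-2\pi s} a)$ follow from KMS analysis of the modular flow of $\M$ restricted to $\N$, exploiting $\sigma^\M_t(\N) \subset \N$ for $t \le 0$; this is the content of Wiesbrock's theorem as completed by Araki--Zsido.

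Once $T_0$ is in hand, the pair $(T_0(a), \Delta_\M^{is})$ implements a unitary representation of the affine ``$ax+b$'' group, and positivity of the generator of $T_0$ ensures that this extends uniquely (up to phase) to a strongly continuous projective representation $U_0$ of $\psl2r$. I then define $\A_0(\RR_+) := \M$ and $\A_0(gI_0) := U_0(g)\A_0(I_0) U_0(g)^*$ for any interval $I_0$ and $g\in\psl2r$; consistency is checked on the stabilizer of $\RR_+$, where $\Ad \Delta_\M^{is}$ fixes $\M$ by modular theory, and the identification $\A_0(\RR_+ + 1) = \N$ is just the condition $\Ad T_0(1)(\M) = \N$. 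Isotony and Möbius covariance are built into the construction; vacuum uniqueness follows from positive energy; locality reduces by covariance to $\A_0(\RR_+)' = \A_0(\RR_-)$, which holds because $J_\M$ implements the appropriate Möbius reflection. For the uniqueness part, any candidate $\ti T_0$ satisfying the listed hypotheses must obey the same Borchers commutation relation with $\Delta_\M^{is}$, and then $\Ad \ti T_0(1)(\M) = \N$ together with positivity of the generator pins it down uniquely by Borchers' theorem on translations with prescribed modular scaling.

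The main obstacle is justifying the self-adjointness and positivity of $P$. A priori this is only a symmetric operator on a dense subspace, its natural domain is delicate since $\log\Delta_\M$ and $\log\Delta_\N$ have different domains, and positivity does not follow formally from the half-sided condition. The argument proceeds via analytic continuation of the modular automorphism of $\M$ on elements of $\N$, using the KMS property together with the one-sided invariance; the careful treatment of domains is the technical heart of the Wiesbrock/Araki--Zsido theorem.
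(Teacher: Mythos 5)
The paper offers no proof of this statement: it is imported verbatim from Wiesbrock and Araki--Zsid\'o, so there is nothing internal to compare your argument against. Judged against the literature proof you are paraphrasing, your outline has the right architecture (construct the translation group from the two modular operators, then build the net by covariance), and you correctly identify the technical heart --- essential self-adjointness and positivity of $P=\frac{1}{2\pi}(\log\Delta_\N-\log\Delta_\M)$ --- which you then defer to the cited works. Since the paper itself only cites, deferring that step is consistent; but be aware that your text is then an outline of \cite{Wiesbrock93, AZ05} rather than a proof.

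Two of your justifications are, however, genuinely wrong as stated. First, a positive-energy representation of the $ax+b$ group does \emph{not} extend uniquely to $\psl2r$: every lowest-weight irreducible representation of (the cover of) $\psl2r$ restricts to the unique positive-energy irreducible representation of $ax+b$, so positivity alone cannot single out $U_0$. The extension must be \emph{constructed} from the additional modular data of $\N$ --- one uses $\Delta_\M^{it}$, $\Delta_\N^{it}$ and $T_0(s)$ (equivalently, the second translation group obtained from the complementary inclusion) and verifies the $\psl2r$ relations directly; this is where standardness of the inclusion, i.e.\ cyclicity of $\Omega$ for $\M\cap\N'$, enters. Second, the uniqueness of $T_0$ in the last paragraph does not follow from ``Borchers' theorem on translations with prescribed modular scaling'': two distinct positive-energy translation groups can both satisfy the Borchers commutation relations with $\Delta_\M^{is}$. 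The actual uniqueness argument is the identity
\[
T_0\bigl(1-e^{\mp 2\pi t}\bigr)=\Delta_\M^{it}\Delta_\N^{-it},
\]
valid for any candidate $T_0$ satisfying your hypotheses (since $\Delta_\N^{it}=T_0(\pm 1)\Delta_\M^{it}T_0(\pm 1)^*$ combined with the Borchers relation), which expresses $T_0$ on a dense set of parameters purely in terms of the two modular operators and hence pins it down by strong continuity. You should replace both appeals accordingly.
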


We put $\tilde{\A}^\tout_+(O) := \{\Phi^\tout_+(x), x \in \A(O)\}$.
We will show that $\tilde{\A}^\tout_+(W_\R+(-1,1)) \subset \tilde{\A}^\tout_+(W_\R)$
is a standard $+$half-sided modular inclusion when restricted to $\H_+$.
Indeed, $\Phi^\tout_+$ commutes with $\Ad U(g_t)$ where $g_t = \Lambda(-2\pi t)$ is
a Lorentz boost (Lemma \ref{lm:asymptotic-field}),
and $\tilde{\A}(W_\R+(-1,1))$ is sent into itself under $\Ad U(g_t)$ for $t\ge 0$.
Hence by Bisognano-Wichmann property, $\tilde{\A}^\tout_+(W_\R+(-1,1)) \subset \tilde{\A}^\tout_+(W_\R)$
is a $+$half-sided modular inclusion.
In addition, when restricted to $\H_+$, this inclusion is standard.
To see this, note that
$\tilde{\A}^\tout_+(W_\R+(-1,1)) = \tilde{\A}^\tout_+(W_\R+(-1,1)+(1,1)) = \tilde{\A}^\tout_+(W_\R+(0,2))$
because $\Phi^\tout_+$ is invariant under $T(1,1)$,
and hence $\tilde{\A}^\tout_+(D) \subset \left(\tilde{\A}^\tout_+(W_\R+(-1,1))'\cap \tilde{\A}^\tout_+(W_\R)\right)$,
where $D = W_\R\cap (W_\L+(0,2))$.
It follows that
\[
\overline{\left(\tilde{\A}^\tout_+\left(W_\R+(-1,1)\right)' \cap \tilde{\A}^\tout_+(W_\R)\right) \Omega}
\supset \overline{\tilde{\A}^\tout_+(D)\Omega} = \overline{P_+\A(D)\Omega} = \H_+,
\]
which is the standardness on $\H_+$. Then we obtain a M\"obius covariant net on $S^1$ acting on
$\H_+$, which we denote by $\A^\tout_+$.
Similarly we get a M\"obius covariant net $\A^\tout_-$ on $\H_-$.
Two nets $\A^\tout_+$ and $\A^\tout_-$ act like tensor product by Lemma \ref{lm:collision-states},
and span the whole space $\H$ from the vacuum $\Omega$ by asymptotic completeness.
In other words, $\A^\tout_+\otimes\A^\tout_-$ is a chiral M\"obius covariant net on $\RR^2$
acting on $\H$. We call this chiral net $\A^\tout_+\otimes\A^\tout_-$ the
{\bf (out-)asymptotic algebra} of the given net $\A$. Similarly one defines
$\A^\tin_+$ and $\A^\tin_-$.

Let $(\M,T,\Omega)$, where $\M := \A(W_\R)$,
be the Borchers triple associated to an asymptotically complete
Poincare covariant net $\A$ which satisfies Bisognano-Wichmann property and Haag duality.
Our next observation is that $\M$ can be recovered from asymptotic fields.
\begin{proposition}\label{pr:recovery-wedge}
It holds that $\M = \{\Phi^\tout_+(x),\Phi^\tin_-(y): x,y\in \M\}'' =
\tilde{\A}^\tout_+(\RR_-)\vee\tilde{\A}^\tin_-(\RR_+)$.
\end{proposition}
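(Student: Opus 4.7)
The plan is to prove the two inclusions separately, writing $\N := \tilde{\A}^\tout_+(\RR_-) \vee \tilde{\A}^\tin_-(\RR_+)$.

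First I would show $\N \subset \M$. Since $(t,t) \in \overline{W_\R}$ iff $t \ge 0$, the approximants $x_+(h_\T) = \int h_\T(t)\,\Ad T(t,t)(x)\,dt$ lie in $\M$ once $\T > 0$ is large enough that $\supp(h_\T) \subset \RR_+$; as $\|x_+(h_\T)\| \le \|x\|$ uniformly and $\M$ is SOT-closed, $\Phi^\tout_+(x) \in \M$ for $x \in \M$. The same argument, using $(t,-t) \in \overline{W_\R}$ iff $t \le 0$, gives $\Phi^\tin_-(y) \in \M$ for $y \in \M$.

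For the reverse inclusion I would apply Takesaki's theorem to the pair $\N \subset \M$. The covariance formulas in Lemmas \ref{lm:asymptotic-field}--\ref{lm:asymptotic-field2}, together with Bisognano-Wichmann ($\sigma^\M_t = \Ad U(\Lambda(-2\pi t))$) and $\Ad U(\Lambda(t))\M = \M$, imply that $\sigma^\M_t$ maps each generator $\Phi^\tout_+(x)$ or $\Phi^\tin_-(y)$ to a generator of the same form, so $\sigma^\M_t(\N) = \N$. Takesaki then produces a normal $\Omega$-preserving conditional expectation $E: \M \to \N$ satisfying $E(x)\Omega = e_\N x\Omega$, where $e_\N$ is the projection onto $\overline{\N\Omega}$; once $\N\Omega$ is shown to be dense in $\H$, this becomes $E(x)\Omega = x\Omega$, and the separating property of $\Omega$ for $\M$ forces $E(x) = x$, yielding $\M \subset \N$.

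The density of $\N\Omega$ I would derive from the identity
\[
\Phi^\tout_+(x)\Phi^\tin_-(y)\Omega = P_+ x\Omega \timeso P_- y\Omega, \qquad x,y\in\M.
\]
To establish it, recall that $\Phi^\tin_-(y)\Omega = P_- y\Omega \in \H_-$ (Lemma \ref{lm:asymptotic-field}) and write any $\xi_- \in \H_-$ as $\lim_m P_- y'_m\Omega = \lim_m \Phi^\tout_-(y'_m)\Omega$ with $y'_m \in \M'$ (Lemma \ref{lm:asymptotic-field2}); applying the bounded operator $\Phi^\tout_+(x)$ and reading the resulting limit, via the definition of an outgoing collision state (with the constant sequence $x_n \equiv x$ approximating $P_+ x\Omega$), as $P_+ x\Omega \timeso \xi_-$ gives the identity. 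Since $\{P_+ x\Omega: x \in \M\}$ and $\{P_- y\Omega: y \in \M\}$ are dense in $\H_\pm$ by Reeh-Schlieder and Lemma \ref{lm:collision-states} provides a unitary $V_\tout: \H_+ \otimes \H_- \to \H^\tout$, asymptotic completeness $\H^\tout = \H$ makes the span of $\{P_+ x\Omega \timeso P_- y\Omega\}$ dense in $\H$, so $\N\Omega$ is dense.

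The main obstacle is the mixed in/out identity above: $\Phi^\tin_-(y)$ is defined as an incoming limit while $\Phi^\tout_+(x)$ is defined as an outgoing one, so the product $\Phi^\tout_+(x)\Phi^\tin_-(y)\Omega$ is not immediately a collision state; the crucial point is that $\Phi^\tin_-(y)\Omega$ lies in $\H_-$ and therefore admits an outgoing asymptotic approximation, which lets the mixed product be recognized as an outgoing collision state and brings asymptotic completeness into play.
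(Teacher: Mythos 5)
Your proposal is correct and follows essentially the same route as the paper: the easy inclusion from the fact that the approximants $x_\pm(h_\T)$ eventually lie in $\M$, and the converse via invariance of the generated algebra under the modular group (Bisognano--Wichmann plus the covariance of the asymptotic fields), cyclicity of $\Omega$ from asymptotic completeness, and Takesaki's theorem. The only difference is that you spell out the cyclicity step — identifying $\Phi^\tout_+(x)\Phi^\tin_-(y)\Omega$ with the outgoing collision state $P_+x\Omega \timeso P_-y\Omega$ by re-approximating $P_-y\Omega \in \H_-$ with vectors $\Phi^\tout_-(y'_m)\Omega$, $y'_m \in \M'$ — which the paper asserts directly from asymptotic completeness; your justification of that step is sound.
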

\begin{proof}
The inclusion $\M \supset \{\Phi^\tout_+(x),\Phi^\tin_-(y): x,y\in \M\}''$ is
obvious from the definition of asymptotic fields.
The converse inclusion is established by the modular theory:
 From the assumption of Bisognano-Wichmann property, the modular automorphism
of $\M$ with respect to $\Omega$ is the Lorentz boosts $U(\Lambda(-2\pi t))$.
Furthermore, it holds that
$\Ad U(\Lambda(-2\pi t)) (\Phi^\tout_+(x)) = \Phi^\tout_+(\Ad U(\Lambda(-2\pi t))(x))$
by Lemma \ref{lm:asymptotic-field}. An analogous formula holds for $\Phi^\tin$.
Namely, the algebra in the middle term of the statement
is invariant under the modular group.

By the assumed asymptotic completeness, the algebra in the middle term spans the whole space $\H$
from the vacuum $\Omega$ as well.
Hence by a simple consequence of Takesaki's theorem \cite[Theorem IX.4.2]{TakesakiII}
\cite[Theorem A.1]{Tanimoto11-2}, these two algebras coincide.

The last equation follows by the definition of asymptotic algebra and their
invariance under translations in respective directions.
\end{proof}

\begin{proposition}\label{pr:recovery-field}
It holds that $S\cdot \Phi^\tout_\pm(x) \cdot S^*= \Phi^\tin_\pm(x)$
and $S\cdot \tilde{\A}^\tout_\pm(\RR_\mp) \cdot S^*= \tilde{\A}^\tin_\pm(\RR_\mp)$.
\end{proposition}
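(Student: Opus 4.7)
The approach is to combine the defining property of $S$ with an ``asymptotic-freeness'' decoupling of plus- and minus-moving fields on collision states, and then lift the resulting fieldwise identity to algebras via Wiesbrock uniqueness.

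First I would establish the asymptotic commutativity $[\Phi^\tout_+(x),\Phi^\tout_-(y)]=0$ for $x\in\M$ and $y\in\M'$, and analogously for incoming fields. In lightcone coordinates $u=t_0-t_1$, $v=t_0+t_1$ one has $W_\R+(t,t)=\{u<0,\,v>2t\}$ and $W_\L+(s,-s)=\{u>2s,\,v<0\}$, which are spacelike separated for $t,s>0$; locality then gives $[x(t,t),y(s,-s)]=0$, and the strong limits defining $\Phi^\tout_\pm$ preserve this commutator. With this in hand I establish the diagonal action
\[
\Phi^\tout_+(x)(\xi_+\timeso\xi_-) = (\Phi^\tout_+(x)\xi_+)\timeso\xi_-
\]
by writing $\xi_+\timeso\xi_-=\lim_n\Phi^\tout_+(x_0)\Phi^\tout_-(y_n)\Omega$ with $P_+x_0\Omega=\xi_+$, $P_-y_n\Omega\to\xi_-$, commuting $\Phi^\tout_+(x)$ past $\Phi^\tout_-(y_n)$, setting $\eta_+:=\Phi^\tout_+(x)\xi_+\in\H_+$, and recognising $\lim_n\Phi^\tout_-(y_n)\eta_+=\eta_+\timeso\xi_-$ again from the collision-state definition. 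The incoming analogue follows from the same argument with $\M,\M'$ exchanged and $\T\to-\infty$.

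Combining these with $S(\xi_+\timeso\xi_-)=\xi_+\timesi\xi_-$ then yields
\[
S\,\Phi^\tout_+(x)\,S^*(\xi_+\timesi\xi_-) = (\Phi^\tout_+(x)\xi_+)\timesi\xi_-,\qquad \Phi^\tin_+(y)(\xi_+\timesi\xi_-) = (\Phi^\tin_+(y)\xi_+)\timesi\xi_-.
\]
Since collision states span $\H$ by asymptotic completeness, the operator identity $S\Phi^\tout_+(x)S^*=\Phi^\tin_+(y)$ reduces to matching the one-particle restrictions $\Phi^\tout_+(x)|_{\H_+}=\Phi^\tin_+(y)|_{\H_+}$ whenever $P_+x\Omega=P_+y\Omega$. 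I would verify this via a mean-ergodic collapse: because $T(t,t)\xi_+=\xi_+$ for $\xi_+\in\H_+$, the defining integrals reduce on $\H_+$ to $\Phi^\tout_+(x)\xi_+=P_+ x\xi_+$ and $\Phi^\tin_+(y)\xi_+=P_+ y\xi_+$; these compressions live in the M\"obius nets $\A^\tout_+$ and $\A^\tin_+$ on $\H_+$, which by the Wiesbrock uniqueness invoked in the construction (their half-sided modular inclusions share the same positive-energy lightlike translation, the Bisognano--Wichmann boost, and the cyclic vacuum) coincide as nets of subalgebras of $B(\H_+)$, and inside this common chiral net Reeh--Schlieder determines an operator by its $\Omega$-vector. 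Taking double commutants over $x\in\M$ then promotes the fieldwise identity to $S\tilde{\A}^\tout_+(\RR_-)S^*=\tilde{\A}^\tin_+(\RR_-)$; the $-$ case is symmetric.

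The main obstacle is this M\"obius-net identification: although both $\A^\tout_+$ and $\A^\tin_+$ are Wiesbrock-reconstructed from essentially the same boost, translation and vacuum data, one must verify that the two half-sided modular inclusions produce the \emph{same} subalgebras of $B(\H_+)$ (not merely isomorphic copies) so that Reeh--Schlieder can be applied simultaneously in both to match $\Phi^\tout_+(x)|_{\H_+}$ with $\Phi^\tin_+(y)|_{\H_+}$.
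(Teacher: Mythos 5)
Your core argument is the paper's own: establish that the asymptotic fields act diagonally on collision states, $\Phi^\tout_+(x)(\xi_+\timeso\xi_-)=(P_+x\xi_+)\timeso\xi_-$ and its incoming analogue, then conjugate by $S$ using $S(\xi_+\timeso\xi_-)=\xi_+\timesi\xi_-$ and invoke asymptotic completeness to pass from collision states to an operator identity; the algebra statement follows by taking generators and weak closure. The paper compresses exactly this into a four-line computation, citing the diagonal action from Lemmata \ref{lm:asymptotic-field}, \ref{lm:asymptotic-field2} and \ref{lm:collision-states} rather than re-deriving it from asymptotic commutativity as you do. Up to that point your proposal is sound.

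Your final step, however, is both unnecessary and, as written, does not go through. You reduce the claim to matching $\Phi^\tout_+(x)|_{\H_+}$ with $\Phi^\tin_+(y)|_{\H_+}$ for \emph{different} operators $x\in\M$, $y\in\M'$ with $P_+x\Omega=P_+y\Omega$, and propose to close this by identifying the nets $\A^\tout_+$ and $\A^\tin_+$ on $\H_+$ via Wiesbrock uniqueness and then applying Reeh--Schlieder. But the Proposition has the \emph{same} $x$ on both sides, and your own mean-ergodic collapse already shows that both the $\T\to+\infty$ and the $\T\to-\infty$ limits compress on $\H_+$ to the same operator $P_+x P_+$; this is precisely why the paper writes both diagonal actions as $\xi\mapsto P_+x\xi$ and has nothing left to match. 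No identification of the two M\"obius nets is required. Moreover, the identification would not rescue the cross-matching you set up: $P_+xP_+$ and $P_+yP_+$ lie in the compressions of $\M$ and of $\M'$ respectively, which are mutually commuting algebras on $\H_+$, so their difference belongs to neither, and no separating-vector argument carried out inside a single local algebra of the (putatively common) chiral net can conclude $P_+xP_+=P_+yP_+$ from $P_+x\Omega=P_+y\Omega$. Deleting this detour and stating the diagonal actions for a single $x$ turns your argument into the paper's proof.
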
\begin{proof}
This follows from the calculation, using Lemmata \ref{lm:asymptotic-field}, \ref{lm:asymptotic-field2} and
\ref{lm:collision-states},
\begin{eqnarray*}
\Phi^\tin_+(x) (\xi\timesi \eta)
&=& (P_+x\xi)\timesi\eta \\
&=& S\left((P_+x\xi)\timeso\eta\right) \\
&=& S\cdot \Phi^\tout_+(x)(\xi\timeso\eta) \\
&=& S\cdot \Phi^\tout_+(x)\cdot S^*(\xi\timesi\eta),
\end{eqnarray*}
and asymptotic completeness. The equation for ``$-$'' fields is proved analogously.
The last equalities are simple consequences of the formulae for asymptotic fields.
\end{proof}

\begin{theorem}\label{th:recovery-net}
The out-asymptotic net $\A^\tout_+\otimes \A^\tout_-$ and the S-matrix $S$ completely
characterizes the original net $\A$ if it satisfies Bisognano-Wichmann
property, Haag duality and asymptotic completeness.
\end{theorem}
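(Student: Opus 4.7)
My plan is to exhibit an explicit formula that reconstructs every local algebra $\A(O)$ of $\A$ from the pair $(\A^\tout_+\otimes\A^\tout_-, S)$, thereby establishing that this pair is a complete invariant. The argument proceeds in essentially four steps: extract the ambient analytic data (Hilbert space, vacuum, translations), recover the wedge algebra $\M=\A(W_\R)$, promote to the full Poincar\'e action and all wedge algebras, and finally recover the double cone algebras via Haag duality.

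First I note that the chiral asymptotic net $\A^\tout_+\otimes\A^\tout_-$ carries intrinsically its Hilbert space, its vacuum $\Omega=\Omega_+\otimes\Omega_-$, and a translation representation $T$ built from the translations of its two chiral components; by asymptotic completeness this Hilbert space is identified with the original $\H$. In particular, the algebras $\tilde{\A}^\tout_+(\RR_-)$ and $\tilde{\A}^\tout_-(\RR_+)$ appearing in Section \ref{asymptotic-chiral-algebra} are precisely half-line algebras of the chiral components, viewed as subalgebras of $B(\H)$, and are therefore part of the data.

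The central step is the recovery of the right wedge algebra. Combining Proposition \ref{pr:recovery-wedge} with Proposition \ref{pr:recovery-field} one obtains
\[
\M = \A(W_\R) = \tilde{\A}^\tout_+(\RR_-)\vee S\,\tilde{\A}^\tout_-(\RR_+)\,S^*,
\]
which expresses $\M$ purely in terms of $(\A^\tout_+\otimes\A^\tout_-, S)$. With $\M$ at hand, Bisognano-Wichmann recovers the Lorentz boosts through $U(\Lambda(-2\pi t)) = \Delta_\M^{it}$; together with the translations $T$ this yields the full unitary representation of $\poincare$. One then defines $\A(gW_\R) = U(g)\M U(g)^*$ for every wedge, while wedge duality (a consequence of Bisognano-Wichmann, as recalled in Section \ref{poincare}) gives $\A(W_\L+a) = T(a)\M' T(a)^*$.

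Finally, Haag duality delivers every double cone algebra as $\A(D) = \A(W_\R+a) \cap \A(W_\L+b)$, where $D = (W_\R+a)\cap(W_\L+b)$, and additivity extends this to arbitrary open regions. Thus $\A$ is fully determined by $(\A^\tout_+\otimes\A^\tout_-, S)$. There is no genuinely hard step here: the analytical content is already packaged in Propositions \ref{pr:recovery-wedge} and \ref{pr:recovery-field}, and the only thing to check, essentially by inspection, is that each algebra appearing in the displayed formula is intrinsic to the asserted data.
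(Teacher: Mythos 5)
Your proposal is correct and follows the same route as the paper: recover $\M=\A(W_\R)$ from Propositions \ref{pr:recovery-wedge} and \ref{pr:recovery-field} (your displayed formula $\M = \tilde{\A}^\tout_+(\RR_-)\vee S\,\tilde{\A}^\tout_-(\RR_+)\,S^*$ is exactly the combination the paper intends), then use Bisognano--Wichmann to regain the Poincar\'e representation and Haag duality to regain the local algebras. The only difference is that you spell out the intermediate bookkeeping (identification of the ambient data and the passage through wedge duality) that the paper leaves implicit.
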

\begin{proof}
The wedge algebra is recovered by
$\M = \A(W_\R) = \{\Phi^\tout_+(x),\Phi^\tin_-(y): x,y\in \M\}''$
by Proposition \ref{pr:recovery-wedge}. In the right-hand side, $\Phi^\tin_-$ is recovered
 from $\Phi^\tout_-$ and $S$ by Proposition \ref{pr:recovery-field}. Hence the wedge algebra
is completely recovered from the data $\Phi^\tout_\pm$ and $S$, or $\A^\tin_\pm$ and $S$
by Proposition \ref{pr:recovery-wedge}. By Haag duality,
the data of wedge algebras are enough to recover the local algebras. By Bisognano-Wichmann
property, the representation $U$ of the whole Poincar\'e group is recovered from
the modular data.
\end{proof}
\begin{remark}
Among the conditions on $\A$, Bisognano-Wichmann property is satisfied in almost all known
examples. Haag duality can be satisfied by extending the net \cite{Baumgaertel} without
changing the S-matrix. Hence we consider them as standard assumptions. On the other
hand, asymptotic completeness is in fact a very strong condition. For example,
a conformal net is asymptotically complete if and only if it is chiral \cite{Tanimoto11-2}.
Hence the class of asymptotically complete nets could be very small even among Poincar\'e
covariant nets. But a clear-cut scattering theory is available only for asymptotically
complete cases. The general case is under investigation \cite{DT11-3}.
\end{remark}

\subsection{Chiral nets as asymptotic nets}\label{recovery}

We can express the modular objects of the interacting net in terms of the ones of the asymptotic chiral net.
\begin{proposition}\label{pr:modular-objects}
Let $\Delta^\tout$ and $J^\tout$ be the modular operator and the modular conjugation
of $\A_+^\tout(\RR_-)\otimes\A_-^\tout(\RR_+)$ with respect to $\Omega$. Then it holds
that $\Delta = \Delta^\tout$ and $J = SJ^\tout$.
\end{proposition}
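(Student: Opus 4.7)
The plan is to prove the two identities in sequence.

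For $\Delta=\Delta^\tout$: Bisognano-Wichmann gives $\Delta^{it}=U(\Lambda(-2\pi t))$ for $\M=\A(W_\R)$. The asymptotic chiral net $\A^\tout=\A^\tout_+\otimes\A^\tout_-$ carries the restricted Poincar\'e representation $U|_{\poincare}$: translations by covariance of asymptotic fields in Lemma \ref{lm:asymptotic-field}, and boosts because $U(\Lambda(-2\pi t))$ already implements the half-sided modular inclusion used to construct $\A^\tout_\pm$. Bisognano-Wichmann is automatic on each M\"obius covariant chiral factor, the modular operator of a tensor product with respect to a product vacuum is the tensor product of the factor modular operators, and the $2$d boost factors on the lightlines as the pair of chiral dilations. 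Hence $\Delta^{\tout,it}=U(\Lambda(-2\pi t))$ also.

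Granting this, $J=SJ^\tout$ is equivalent to $S=JJ^\tout$, a unitary identity which I would verify on the dense set of outgoing collision states $\xi=\xi_+\timeso\xi_-$, approximated by $\xi_n=\Phi^\tout_+(a_+)\Phi^\tout_-(a_-)\Omega$ with $a_+\in\M$, $a_-\in\M'$. To compute $J\xi$, Bisognano-Wichmann yields $JT(a)J=T(-a)$; combined with the symmetry of $h$ and the substitution $t\mapsto-t$ inside $x_\pm(h_\T)$, one obtains
\[
J\Phi^\tout_\pm(x)J=\Phi^\tin_\pm(JxJ).
\]
Since $J\Omega=\Omega$ and $J$ preserves $\H_\pm$ (hence $JP_\pm=P_\pm J$, as $JT(t,\pm t)J=T(-t,\mp t)$ fixes the defining projections), $J\xi_n=\Phi^\tin_+(Ja_+J)\Phi^\tin_-(Ja_-J)\Omega$ tends to the incoming collision state $(J\xi_+)\timesi(J\xi_-)$. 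To compute $SJ^\tout\xi$, note that $\Phi^\tout_+(a_+)\Phi^\tout_-(a_-)\in\M^\tout$ and the outgoing asymptotic identification realizes $\M^\tout$ as a spatial tensor product with $J^\tout=J^\tout_+\otimes J^\tout_-$, so $J^\tout\xi=(J^\tout_+\xi_+)\timeso(J^\tout_-\xi_-)$, and the scattering operator converts this outgoing product into $SJ^\tout\xi=(J^\tout_+\xi_+)\timesi(J^\tout_-\xi_-)$.

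Matching the two reduces everything to the identities $J|_{\H_\pm}=J^\tout_\pm$. This I would establish by uniqueness of polar decomposition: for any $x\in\tilde\A^\tout_+(\RR_-)\subset\M$, both $x\Omega$ and $x^*\Omega$ lie in $\H_+$ (such $x$ commutes with $T(t,t)$), so the Tomita operator $x\Omega\mapsto x^*\Omega$ of $\M$ restricts to a closed anti-linear operator on $\H_+$ which coincides with the chiral Tomita operator of $\A^\tout_+(\RR_-)$ on $\H_+$ with respect to $\Omega$; both $(J|_{\H_+},\Delta^{1/2}|_{\H_+})$ and $(J^\tout_+,(\Delta^\tout_+)^{1/2})$ provide polar decompositions of that operator, so their anti-unitary parts coincide. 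The same argument on $\H_-$ finishes the proof. The most delicate point is precisely this last identification --- namely verifying, via the tensor product realization of $\M^\tout$ furnished by asymptotic completeness, that restriction to the single-excitation subspaces $\H_\pm$ really reproduces the chiral modular data, rather than only that the modular flows agree.
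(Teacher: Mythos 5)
Your proof follows essentially the same route as the paper's: identify $\H$ with $\H_+\otimes\H_-$ via outgoing collision states, evaluate $J$ on vectors $\Phi^\tout_+(a_+)\Phi^\tout_-(a_-)\Omega$ using the definitional relation $J\Phi^\tout_\pm(x)J=\Phi^\tin_\pm(JxJ)$ together with $S(\zeta_+\timeso\zeta_-)=\zeta_+\timesi\zeta_-$, and reduce everything to the identification of $J|_{\H_\pm},\Delta|_{\H_\pm}$ with the chiral modular data. Your treatment of $\Delta$ via automatic Bisognano--Wichmann for the chiral factors is a harmless variant of the paper's direct computation $\Delta^{it}(\xi\timeso\eta)=(U(\Lambda(-2\pi t))\xi)\timeso(U(\Lambda(-2\pi t))\eta)$.

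The one point that needs tightening is exactly the one you flag. Your polar-decomposition argument establishes only that the chiral Tomita operator of $\A^\tout_+(\RR_-)$ is \emph{contained in} the restriction to $\H_+$ of the Tomita operator of $\M$ (both are closed, and the former is the closure of a map defined on the smaller domain $\tilde{\A}^\tout_+(\RR_-)\Omega$). An inclusion of closed antilinear involutions with dense domains does not by itself force equality of their closures --- any subalgebra $\N\subset\M$ sharing the cyclic separating vector $\Omega$ but not globally modular invariant gives $S_\N\subsetneq S_\M$ --- so ``both provide polar decompositions of that operator'' presupposes what is to be shown. The missing ingredient is the global invariance of $\tilde{\A}^\tout_+(\RR_-)$ under $\Ad\Delta^{it}=\Ad U(\Lambda(-2\pi t))$ (Lemma \ref{lm:asymptotic-field}), which via Takesaki's theorem yields a conditional expectation and guarantees that the modular objects of the subalgebra on $\H_+=\overline{\tilde{\A}^\tout_+(\RR_-)\Omega}$ are precisely the restrictions $J|_{\H_+},\Delta|_{\H_+}$. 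This is exactly how the paper closes the argument; you already use the same invariance in your discussion of $\Delta$, so the fix is one sentence.
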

\begin{proof}
First we note that the modular objects of $\A(W_\R)$ restrict to
$\H_+$ and $\H_-$ by Takesaki's theorem \cite[Theorem IX.4.2]{TakesakiII}.
Indeed, $\A^\tout_+(\RR_+)$ and $\A^\tout_-(\RR_-)$ are subalgebras of
$\A(W_\R)$ and invariant under $\Ad \Delta^{it}$, or equivalently under
the Lorentz boosts $\Ad U(\Lambda(-2\pi t))$ by Bisognano-Wichmann property, as
we saw in the proof of Proposition \ref{pr:recovery-wedge}, then the
projections onto the respective subspaces commute with the modular objects.
Let us denote these restrictions by $\Delta_+^{it},J_+,\Delta_-^{it}$ and
$J_-$, respectively.

We identify $\H_+\otimes\H_-$ and the full Hilbert space $\H$ by the action of
$\A^\tout_+\otimes\A^\tout_-$.
By Bisognano-Wichmann property and Lemma \ref{lm:collision-states}, we have
\begin{eqnarray*}
\Delta^{it}\cdot \xi\timeso\eta
&=& (U(\Lambda(-2\pi t))\xi)\timeso (U(\Lambda(-2\pi t))\eta)\\
&=& \Delta_+^{it}\xi\otimes \Delta_-^{it}\eta\\
&=& (\Delta_+\otimes\Delta_-)^{it} \cdot \xi\otimes\eta,
\end{eqnarray*}
which implies that $\Delta = \Delta_+\otimes \Delta_- = \Delta^\tout$.

As for modular conjugations, we take $x \in \A(W_\R)$ and $y \in \A(W_\R)' = \A(W_\L)$
and set $\xi = \Phi^\tout_+(x)\Omega$ and $\eta = \Phi^\tout_-(y)\Omega$.
Then we use Lemma \ref{lm:asymptotic-field2} to see
\begin{eqnarray*}
J\cdot \xi\timeso\eta
&=& J\cdot \Phi^\tout_+(x)\Phi^\tout_-(y)\Omega\\
&=& \Phi^\tin_+(JxJ)\Phi^\tin_-(JyJ)\Omega\\
&=& (J\xi)\timesi(J\eta)\\
&=& S\cdot (J_+\xi)\timeso(J_-\eta)\\
&=& S\cdot (J_+\otimes J_-) \cdot \xi\otimes\eta
\end{eqnarray*}
from which one infers that $J = S\cdot (J_+\otimes J_-) = S\cdot J^\tout$.
\end{proof}

Theorem \ref{th:recovery-net} tells us that chiral conformal nets can be viewed as
free field nets for massless two-dimensional theory (cf. \cite{Tanimoto11-2}).
Let us formulate the situation the other way around. Let $\A_+\otimes \A_-$ be a chiral CFT,
then it is an interesting open problem to characterize unitary operators which
can be interpreted as a S-matrix of a net whose asymptotic net is the given $\A_+\otimes \A_-$.
We restrict ourselves to point out that
there are several immediate necessary conditions:
For example, $S$ must commute with the Poincar\'e symmetry of the chiral net
since it coincides with the one of the interacting net. Analogously it must hold that
$(J_+\otimes J_-) S (J_+\otimes J_-) = S^*$. Furthermore, the algebra of the form as
in Proposition \ref{pr:recovery-wedge} must be strictly local.

If one has an appropriate operator $S$, an interacting Borchers triple can be constructed by
(cf. Propositions \ref{pr:recovery-wedge}, \ref{pr:recovery-field})
\begin{itemize}
\item $\M_S := \{x\otimes\1, \Ad S(\1\otimes y): x\in\A_+(\RR_-), y\in\A_-(\RR_+)\}''$,
\item $U:= U_+\otimes U_-$,
\item $\Omega := \Omega_+\otimes \Omega_-$.
\end{itemize}

By the formula for the modular conjugation in Proposition \ref{pr:modular-objects},
it is immediate to see that
\[
\M_S' := \{\Ad S(x'\otimes\1), \1\otimes y': x'\in\A_+(\RR_+), y'\in\A_-(\RR_-)\}''.
\]
Then for $x\in\A_+(\RR_-), y\in\A_-(\RR_+)$ it holds that
$\Phi^\tout_+(x\otimes \1) = x\otimes\1$ and
$\Phi^\tin_-(\Ad S(\1\otimes y)) = \Ad S(\1\otimes y)$.
Similarly,
we have $\Phi^\tout_-(\Ad S(x'\otimes \1)) = \Ad S(x'\otimes \1)$ and $\Phi^\tin_+(\1\otimes y') = \1\otimes y'$
for $x' \in \A_+(\RR_+)$ and $y' \in \A_-(\RR_-)$. From this it is easy to see that
$S$ is indeed the S-matrix of the constructed Borchers triple.

In the following Sections we will construct unitary operators
which comply with these conditions except strict locality.
To my opinion, however, the true difficulty is the strict locality,
which has been so far established only for ``regular'' massive models \cite{Lechner08}.
But it is also true that the class of S-matrices constructed in the present paper
can be seen rather small (see the discussion in Section \ref{conclusion}).

\section{Construction through one-parameter semigroup of endomorphisms}\label{one-parameter}
In this Section, we construct families of Borchers triples using one-parameter semigroup
of endomorphisms of Longo-Witten type. The formula to define
the von Neumann algebra is very simple and the proofs use a common argument based
on spectral decomposition.

Our construction is based on chiral conformal nets on $S^1$, and indeed one
family can be identified as the BLS deformation of chiral nets
(see Section \ref{deformation-translation}). But in our construction, the meaning
of the term ``deformation'' is not clear and we refrain from using it.
 From now on, we consider only chiral net with the identical components $\A_1 = \A_2 = \A_0$
for simplicity. It is not difficult to generalize it to ``heterotic case'' where $\A_1 \neq \A_2$.

\subsection{The commutativity lemma}\label{commutativity}
The following Lemma is the key of all the arguments and will be used later
in this Section concerning one-parameter endomorphisms.
Typical examples of the operator $Q_0$ in Lemma will be the generator of
one-dimensional translations $P_0$ (Section \ref{deformation-translation}),
or of one-parameter inner symmetries of the chiral component (Section \ref{deformation-inner}).

As a preliminary, we give a remark on tensor product. See \cite{Dixmier81}
for a general account on spectral measure and measurable family.
Let $E_0$ be a projection-valued measure on $Z$ (typically, the spectral measure
of some self-adjoint operator) and $V(\lambda)$ be a measurable family of
operators (bounded or not). Then one can define an operator
\[
\int_Z V(\lambda)\otimes dE_0(\lambda) (\xi\otimes\eta)
:= \int_Z V(\lambda)\xi\otimes dE_0(\lambda)\eta.
\]
If $V(\lambda)$ is unbounded, the vector $\xi$ should be in a common domain
of $\{V(\lambda)\}$. As we will see, this will not matter in our cases.
For two bounded measurable families $V, V'$, it is easy to see that
\[
\int_Z V(\lambda)\otimes dE_0(\lambda) \cdot \int_Z V'(\lambda)\otimes dE_0(\lambda)
= \int_Z V(\lambda)V'(\lambda)\otimes dE_0(\lambda).
\]

\begin{lemma}\label{lm:commutativity}
We fix a parameter $\k \in \RR$. Let $Q_0$ be a self-adjoint operator on $\H_0$ and
Let $Z \subset \RR$ be the spectral supports of $Q_0$. If it holds that
$[x,\Ad e^{is\k Q_0}(x')] = 0$ for $x,x'\in B(\H_0)$ and $s \in Z$, then
we have that
\begin{align*}
[x\otimes \1,\Ad e^{i\k Q_0\otimes Q_0}(x'\otimes \1)] &= 0,\\
[\1\otimes x,\Ad e^{i\k Q_0\otimes Q_0}(\1\otimes x')] &= 0.
\end{align*}
\end{lemma}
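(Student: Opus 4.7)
The plan is to handle the tensor product $e^{i\k Q_0\otimes Q_0}$ by diagonalizing the second factor via the spectral resolution of $Q_0$, reducing the statement to the scalar-parameter hypothesis.

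Concretely, let $E_0$ be the spectral measure of $Q_0$, so that $Q_0=\int_Z\l\,dE_0(\l)$ and, via the functional calculus in the second tensor slot,
\[
e^{i\k Q_0\otimes Q_0} \;=\; \int_Z e^{i\k\l Q_0}\otimes dE_0(\l).
\]
Here the integrand is a bounded (unitary) measurable family indexed by $\l\in Z$, so the issue of unboundedness of $Q_0$ does not appear. First I would compute, using the multiplication rule for such integrals and the orthogonality $dE_0(\l)\,dE_0(\m)=\d(\l-\m)\,dE_0(\l)$ of the spectral projections,
\[
\Ad e^{i\k Q_0\otimes Q_0}(x'\otimes\1) \;=\; \int_Z \Ad e^{i\k\l Q_0}(x')\otimes dE_0(\l),
\]
where the second factor $x'\otimes\1$ commutes past $\1\otimes dE_0(\l)$ trivially, leaving only the conjugation on the first tensor slot.

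Next I would take the commutator with $x\otimes\1$. Since $x\otimes\1$ commutes with every $\1\otimes dE_0(\l)$, the commutator passes inside the integral to give
\[
[x\otimes\1,\,\Ad e^{i\k Q_0\otimes Q_0}(x'\otimes\1)] \;=\; \int_Z [x,\Ad e^{i\k\l Q_0}(x')]\otimes dE_0(\l).
\]
The measure $dE_0$ is supported on $Z$ by assumption, so for $E_0$-a.e.\ $\l$ we have $\l\in Z$, and the hypothesis $[x,\Ad e^{is\k Q_0}(x')]=0$ for $s\in Z$ makes every integrand vanish. This yields the first identity. The second identity follows by the identical argument with the roles of the two tensor factors exchanged, decomposing $e^{i\k Q_0\otimes Q_0}=\int_Z dE_0(\l)\otimes e^{i\k\l Q_0}$ instead.

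The only potential subtlety, and the one point I would be careful about, is the manipulation of the spectral integrals (associativity of composition and pulling the commutator inside), which is standard once one notes that the families involved are bounded and measurable and that $x\otimes\1$, $\1\otimes dE_0(\l)$ commute. No analytic delicacies arise because we never need to feed unbounded operators into these integrals.
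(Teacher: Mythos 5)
Your proof is correct and follows essentially the same route as the paper: decompose $e^{i\k Q_0\otimes Q_0}=\int_Z e^{i\k\l Q_0}\otimes dE_0(\l)$ via the spectral measure of the second tensor factor, write the adjoint action as $\int_Z \Ad e^{i\k\l Q_0}(x')\otimes dE_0(\l)$, and conclude from the scalar-parameter hypothesis on the support $Z$. No gaps.
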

\begin{proof}
We prove only the first commutation relation, since the other is analogous.
Let $Q_0 = \int_Z s\cdot dE_0(s)$ be the spectral decomposition of $Q_0$.
According to this spectral decomposition, we can decompose only the second component:
\[
Q_0\otimes Q_0 = Q_0\otimes \int_Z s\cdot dE_0(x) = \int_Z sQ_0\otimes dE_0(s).
\]
Hence we can describe the adjoint action of $e^{i\k Q_0\otimes Q_0}$ explicitly:
\begin{eqnarray*}
\Ad e^{i\k Q_0\otimes Q_0}(x'\otimes \1)
&=& \int_Z e^{is\k Q_0}\otimes dE(s) \cdot (x'\otimes \1)\cdot \int_Z e^{-is\k Q_0}\otimes dE_0(s)\\
&=& \int_Z \left(\Ad e^{is\k Q_0} (x') \right)\otimes dE_0(s)
\end{eqnarray*}
Then it is easy to see that this commutes with $x\otimes \1$ by
the assumed commutativity.
\end{proof}

\subsection{Construction of Borchers triples with respect to
translation}\label{deformation-translation}
The objective here is to apply the commutativity lemma in Section \ref{commutativity}
to the endomorphism of translation.
Then it turns out that the Borchers triples obtained by the BLS deformation
of a chiral triple coincide with this construction. A new feature is that our
construction involves only von Neumann algebras.

\subsubsection{Construction of Borchers triples}\label{borchers-translation}
Let $(\M,T,\Omega)$ be a chiral Borchers triple with chiral component
$\A_0$ and $T_0(t) = e^{itP_0}$ the chiral translation:
Namely,
$\M = \A_0(\RR_-)\otimes \A_0(\RR_+)$,
$T(t_0,t_1) = T_0\left(\frac{t_0-t_1}{\sqrt{2}}\right)\otimes T_0\left(\frac{t_0+t_1}{\sqrt{2}}\right)$
and $\Omega = \Omega_0\otimes\Omega_0$.

Note that $T_0(t)$ implements a Longo-Witten endomorphism of $\A_0$
for $t \ge 0$. In this sense, the construction of this Section is considered
to be based on the endomorphisms $\{\Ad T_0(t)\}$. A nontrivial family of endomorphisms
will be featured in Section \ref{u1-current}.

We construct a new Borchers triple on the same Hilbert space
$\H = \H_0\otimes\H_0$ as follows.
Let us fix $\k \in \RR_+$.
\begin{itemize}
\item $\M_{P_0,\k} := \{x\otimes \1, \Ad e^{i\k P_0\otimes P_0}(\1\otimes y)
: x\in \A_0(\RR_-), y\in\A_0(\RR_+)\}''$,
\item the same $T$ from the chiral triple,
\item the same $\Omega$ from the chiral triple.
\end{itemize}

\begin{theorem}\label{th:triple-translation}
Let $\k \ge 0$. Then the triple $(\M_{P_0,\k},T,\Omega)$ is a Borchers triple with
the S-matrix $S_{P_0,\k} = e^{i\k P_0\otimes P_0}$.
\end{theorem}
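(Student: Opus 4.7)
The strategy is to verify the Borchers triple axioms for $(\M_{P_0,\k}, T, \Omega)$ directly, and then read off the S-matrix from the essentially trivial asymptotic behaviour of the generators. The key observation driving everything is that $S := e^{i\k P_0 \otimes P_0}$ commutes with $T(t_0,t_1) = T_0(t_-) \otimes T_0(t_+)$, where $t_\pm := (t_0 \pm t_1)/\sqrt 2$, because $P_0 \otimes P_0$ commutes with both $P_0 \otimes \1$ and $\1 \otimes P_0$, and that $S\Omega = \Omega$ because $P_0 \Omega_0 = 0$. Positivity of the joint spectrum of $T$ and uniqueness of $\Omega$ as a $T$-invariant vector carry over without change from the chiral triple.

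Wedge covariance is then easy: for $(t_0,t_1) \in W_\R$ one has $t_- \le 0 \le t_+$, and the commutation $[T,S]=0$ lets $\Ad T(t_0,t_1)$ act separately on the two types of generators, sending $x \otimes \1 \mapsto (T_0(t_-)xT_0(t_-)^*) \otimes \1$ and $\Ad S(\1\otimes y) \mapsto \Ad S(\1\otimes T_0(t_+)yT_0(t_+)^*)$, with the two inner operators remaining in $\A_0(\RR_-)$ and $\A_0(\RR_+)$ respectively by isotony of $\A_0$. For cyclicity, $S\Omega=\Omega$ together with $(P_0 \otimes P_0)(\Omega_0 \otimes y\Omega_0) = 0$ give $\Ad S(\1\otimes y)\Omega = \Omega_0 \otimes y\Omega_0$, so $(x\otimes\1)\Ad S(\1\otimes y)\Omega = x\Omega_0 \otimes y\Omega_0$, a set dense in $\H$ by Reeh-Schlieder for the two half-line algebras.

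The crux is the separating property, for which I would exhibit an explicit subalgebra of $\M_{P_0,\k}'$ on which $\Omega$ is cyclic. Put
\[
\N := \{\Ad S(x'\otimes\1),\; \1\otimes y' : x' \in \A_0(\RR_+),\; y' \in \A_0(\RR_-)\}''.
\]
Two of the four generator-pair commutation relations between $\M_{P_0,\k}$ and $\N$ are trivial from the tensor product structure; the other two, namely $[x \otimes \1, \Ad S(x' \otimes \1)] = 0$ and $[\Ad S(\1\otimes y), \1 \otimes y'] = 0$, are exactly the conclusion of Lemma \ref{lm:commutativity} applied with $Q_0 = P_0$. Its hypothesis $[x, \Ad e^{is\k P_0}(x')] = 0$ for $s \in \sp P_0$ holds since positivity of energy yields $\sp P_0 \subset \RR_+$, whence for $\k \ge 0$ we have $\Ad e^{is\k P_0}(x') = T_0(s\k) x' T_0(s\k)^* \in \A_0(\RR_+ + s\k) \subset \A_0(\RR_+)$, spatially disjoint from $\RR_-$; the twin relation uses the second assertion of the lemma with $S^* = e^{-i\k P_0 \otimes P_0}$. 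Cyclicity of $\N\Omega$ then follows by the same computation as for $\M_{P_0,\k}\Omega$ with the roles of $\RR_-$ and $\RR_+$ interchanged, giving the separating property for $\M_{P_0,\k}$.

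Finally, for the S-matrix, the unitaries $T(t,t) = \1 \otimes T_0(t\sqrt 2)$ and $T(t,-t) = T_0(t\sqrt 2) \otimes \1$ generating the two light-like translations both commute with $S$, so each of the four generators of $\M_{P_0,\k}$ and $\N$ is invariant under the respective one-parameter group; this reduces the averages defining the asymptotic fields to the generators themselves, yielding $\Phi^\tout_+(x\otimes\1) = x\otimes\1$, $\Phi^\tin_-(\Ad S(\1\otimes y)) = \Ad S(\1\otimes y)$, $\Phi^\tin_+(\Ad S(x'\otimes\1)) = \Ad S(x'\otimes\1)$ and $\Phi^\tout_-(\1\otimes y') = \1\otimes y'$. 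Evaluating collision states on the dense subsets of $\H_\pm$ provided by these vacuum orbits, and using $S\Omega=\Omega$, yields $\xi_+ \timeso \xi_- = x\Omega_0 \otimes y\Omega_0$ while $\xi_+ \timesi \xi_- = S(x\Omega_0 \otimes y\Omega_0)$, identifying $S_{P_0,\k}$ with $S = e^{i\k P_0 \otimes P_0}$.
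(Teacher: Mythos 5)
Your proof is correct and follows essentially the same route as the paper's: wedge covariance and cyclicity from the commutation of $e^{i\k P_0\otimes P_0}$ with $T$ and its triviality on $\H_0\otimes\CC\Omega_0$, the separating property via the auxiliary algebra $\M^1_{P_0,\k}$ and Lemma \ref{lm:commutativity} with $Q_0=P_0$, and the S-matrix read off from the invariance of the generators under the respective lightlike translations. The only difference is that you spell out the asymptotic-field computation explicitly where the paper merely refers back to the discussion at the end of Section \ref{recovery} (and your assignment of $\Phi^\tin_+$ to $\Ad S(x'\otimes\1)$ and $\Phi^\tout_-$ to $\1\otimes y'$ is the consistent one).
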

\begin{proof}
The vector $\Omega_0\otimes\Omega_0$ is obviously invariant under $T$ and $T$
has the spectrum contained in $V_+$.
The generator $P_0$ of one-dimensional translations obviously commutes with one-dimensional
translation $T_0$, hence $P_0\otimes P_0$ commutes with $T = T_0\otimes T_0$,
so does $e^{i\k P_0\otimes P_0}$.
We claim that $\M_{P_0,\k}$ is preserved under translations in the right wedge.
Indeed, if $(t_0,t_1) \in W_\R$, then we have
\[
\Ad T(t_0,t_1) \left(\Ad e^{i\k P_0\otimes P_0}(\1\otimes y)\right)
= \Ad e^{i\k P_0\otimes P_0}\left(\Ad T(t_0,t_1) (\1\otimes y)\right).
\]
and $\Ad T(t_0,t_1)(\1\otimes y) \in \1\otimes \A_0(\RR_+)$ and it is obvious
that $\Ad T(t_0,t_1)(x\otimes \1) \in \A_0(\RR_-)\otimes\1$, hence
the generators of the von Neumann algebra $\M_{P_0,\k}$ are preserved.

We have to show that $\Omega$ is cyclic and separating for $\M_{P_0,\k}$.
Note that it holds that $e^{i\k P_0\otimes P_0}\cdot \xi\otimes\Omega_0 = \xi\otimes\Omega_0$
for any $\k \in \RR, \xi \in \H_0$, by the spectral calculus.
Now cyclicity is seen by noting that
\begin{eqnarray*}
(x\otimes \1)\cdot \Ad e^{i\k P_0\otimes P_0} (\1\otimes y) \cdot \Omega
&=& (x\otimes \1)\cdot e^{i\k P_0\otimes P_0} \cdot (x\Omega_0) \otimes \Omega_0\\
&=& (x\Omega_0) \otimes (y\Omega_0)
\end{eqnarray*}
and by the cyclicity of $\Omega$ for the original algebra $\M = \A_0(\RR_-)\otimes \A_0(\RR_+)$.

Finally we show separating property as follows: we set
\begin{equation*}
\M^1_{P_0,\k} = \{\Ad e^{i\k P_0\otimes P_0}(x'\otimes \1),
\1\otimes y': x'\in \A_0(\RR_+), y'\in\A_0(\RR_-)\}''.
\end{equation*}
Note that $\Omega$ is cyclic for $\M^1_{P_0,\k}$ by an analogous proof for $\M_{P_0,\k}$,
thus for the separating property, it suffices to show that
$\M_{P_0,\k}$ and $\M^1_{P_0,\k}$ commute.
Let $x,y' \in \A_0(\RR_-), x' \in \A_0(\RR_+)$. First,
$x\otimes \1$ and $\1\otimes y'$ obviously commute.
Next, we apply Lemma \ref{lm:commutativity} to $x, x'$ and $Q_0 = P_0$ to see that
$x\otimes \1$ and $\Ad e^{i\k P_0\otimes P_0}(x'\otimes \1)$ commute:
Indeed, the spectral support of $P_0$ is $\RR_+$, and for $s \in \RR_+$,
$x$ and $\Ad e^{is\k P_0}(x')$ commute since $P_0$ is the generator of
one-dimensional translations and since $x\in\A_0(\RR_-), x'\in\A_0(\RR_+)$.
Similarly, for $y \in \A_0(\RR_+)$, $\Ad e^{i\k P_0\otimes P_0}(\1\otimes y)$ and $\M^1_{P_0,\k}$ commute.
This implies that $\M_{P_0,\k}$ and $\M^1_{P_0,\k}$ commute.

The S-matrix corresponds to the unitary used to twist the chiral net as we saw
in the discussion at the end of Section \ref{recovery}.
\end{proof}

Now that we have constructed a Borchers triple, it is possible to express its modular
objects in terms of the ones of the chiral triple by an analogous argument
as Proposition \ref{pr:modular-objects}. Then one sees that $\M^1_{P_0,\k}$ is
indeed the commutant $\M_{P_0,\k}'$.

\subsubsection{BLS deformation}\label{bls-deformation}
We briefly review the BLS deformation \cite{BLS}. 
Let $(\M, T, \Omega)$ be a Borchers triple. We denote by $\M^\infty$ the subset of
elements of $\M$ which are smooth under the action of $\a = \Ad T$ in the norm topology.
Then one can define for any $x\in\M^{\infty}$,
and a matrix $\Theta_\k = \left(\begin{matrix} 0 & \k \\ \k & 0 \end{matrix}\right)$,
the warped convolution
\begin{equation*}
x_\k=\int dE(a)\,\a_{\Theta_\k a}(x)
:=\lim_{\e\searrow 0}(2\pi)^{-2}\int d^2a\, d^2b\, f(\e a,\e b) e^{-ia\cdot b}\a_{\Theta_\k a}(x)T(b) \label{warped}
\end{equation*}
on a suitable domain,
where $dE$ is the spectral measure of $T$ and $f\in \mathscr{S}(\RR^2\times\RR^2)$ satisfies $f(0,0)=1$.
We set
\begin{equation*}
\M_\k:=\{x_\k: x\in\M^{\infty}\}''.
\end{equation*}
For $\k>0$, the following holds.
\begin{theorem}[\cite{BLS}]\label{th:BLS}
If $(\M, T, \Omega)$ is a Borchers triple, then 
$(\M_\k, T, \Omega)$ is also a Borchers triple.
\end{theorem}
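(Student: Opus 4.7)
The plan is to verify each of the four defining properties of a Borchers triple for $(\M_\k, T, \Omega)$, taking advantage of the fact that $T$ and $\Omega$ are the same as for the undeformed triple, so only the statements that involve $\M_\k$ need work. Positivity of the spectrum of $T$ and the uniqueness/invariance of $\Omega$ under $T$ are therefore automatic.

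The first real step is the covariance condition $\Ad T(a)(\M_\k)\subset\M_\k$ for $a\in W_\R$. Because $T$ is abelian, one can commute $T(a)$ past the spectral integral in the definition of $x_\k$ to obtain the identity $\Ad T(a)(x_\k)=(\a_a(x))_\k$. Since $\a_a(x)\in\M^\infty$ whenever $x\in\M^\infty$ and $a\in W_\R$ (this uses covariance of the original triple), the wedge-covariance passes to $\M_\k$.

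For cyclicity of $\Omega$, the key observation is that $\int dE(a)\,\a_{\Theta_\k a}(x)\Omega$ reduces to $x\Omega$, because the spectral measure $dE$ has an atom only at $a=0$ when paired against the invariant vector $\Omega$ (more precisely, $T(b)\Omega=\Omega$ forces the relevant oscillatory integral defining $x_\k\Omega$ to collapse). Thus $x_\k\Omega=x\Omega$, and cyclicity of $\Omega$ for $\M^\infty$ (which follows from cyclicity for $\M$ together with smoothing arguments) transfers to $\M_\k$.

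The main obstacle is the separating property. The standard route is to construct an auxiliary von Neumann algebra that commutes with $\M_\k$ and also has $\Omega$ as cyclic vector; then $\Omega$ is separating for $\M_\k$. Specifically, one sets $(\M')_{-\k}$ to be the warped convolution of the original commutant $\M'$ with the opposite matrix $\Theta_{-\k}$, and shows the commutation relation $[x_\k,y_{-\k}]=0$ for all $x\in\M^\infty$, $y\in(\M')^\infty$. This reduces, after expanding both warped convolutions and exchanging the spectral integrals, to checking that $\a_{\Theta_\k a}(x)$ and $\a_{\Theta_{-\k} b}(y)$ commute on the relevant part of the joint spectrum of $T\otimes T$; the point is that for $a,b$ in the forward cone, the separation vector $\Theta_\k a-\Theta_{-\k}b=\Theta_\k(a+b)$ is spacelike in the correct direction, so the original locality $[x,\a_c(y)]=0$ for $\a_c(y)\in\M'$ suffices. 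Once this commutation is established, cyclicity of $\Omega$ for $(\M')_{-\k}$ (proved symmetrically) yields separating for $\M_\k$. I expect the hardest part to be making the exchange of spectral integrals rigorous on a suitable common core, since the warped convolution is only defined a priori on smooth vectors; this is where the cutoff function $f$ and the $\e\searrow 0$ limit in the definition play their technical role.
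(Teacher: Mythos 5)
The paper does not prove this statement --- it is quoted directly from \cite{BLS} as an external result --- so there is no internal proof to compare against. Your sketch correctly reproduces the standard Buchholz--Lechner--Summers argument: covariance via $\Ad T(a)(x_\k)=(\a_a(x))_\k$, cyclicity via $x_\k\Omega=x\Omega$, and the separating property via the commutation $[x_\k,y_{-\k}]=0$ for $x\in\M^\infty$, $y\in(\M')^\infty$, which rests on $\Theta_\k$ mapping the forward cone into the closed right wedge for $\k>0$; this is exactly the route taken in the cited reference, with the only technical points left implicit (weak density of $\M^\infty$ in $\M$, and the sign condition on $\k$ that makes the displacement $\Theta_\k(a+b)$ point into the correct wedge) being standard.
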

We call the latter the {\bf BLS deformation}
of the original triple $(\M,T,\Omega)$. One of the main results of this paper is to obtain
the BLS deformation by a simple procedure.

We have determined the property of deformed scattering theory in \cite{DT11-1}.
In our notation $M^2 = P_0\otimes P_0$ we have the following.
\begin{theorem}\label{th:states-deformations}
For any $\xi \in \H_+$ and $\eta \in \H_-$, the following relations hold:
\begin{eqnarray*}
 & &\xi\timeso_\k \eta
 =e^{-\frac{i\k}{2}P_0\otimes P_0}(\xi\otimes \eta),\\
 & &\xi\timesi_\k \eta
 =e^{\frac{i\k}{2}P_0\otimes P_0}(\xi\otimes \eta),
\end{eqnarray*}
where on the left-hand sides there appear the collision states of the deformed theory. 
\end{theorem}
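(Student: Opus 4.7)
The theorem is the specialization to the chiral setting of the general two-particle collision formula of \cite{DT11-1} for BLS-deformed Borchers triples. I would prove it by directly computing the asymptotic fields for the deformed triple $(\M_\kappa, T, \Omega)$.

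The plan is to pick smooth representatives $x \in \M^\infty$ and $y \in \M'^\infty$ and consider the deformed operators $x_\kappa \in \M_\kappa$ together with $y_{-\kappa} \in \M_\kappa'$ (the BLS formula for the commutant uses parameter $-\kappa$). Since $T(a)\Omega = \Omega$, the warped convolution satisfies $x_\kappa \Omega = x\Omega$ and $y_{-\kappa}\Omega = y\Omega$, so the single-particle states $\xi := P_+ x\Omega \in \H_+$ and $\eta := P_- y\Omega \in \H_-$ are unchanged by the deformation and may be fixed once and for all. The outgoing collision state then reads
\[
\xi \timeso_\kappa \eta = \Phi^{\tout}_{+, \kappa}(x_\kappa)\,\Phi^{\tout}_{-, \kappa}(y_{-\kappa})\,\Omega,
\]
and the analogous expression with $\Phi^\tin$ gives $\xi \timesi_\kappa \eta$.

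The heart of the argument is to interchange the time-averaging defining $\Phi^{\tout}_\pm$ with the warped convolution. Since $T$ commutes with itself and with its own spectral measure $dE$, one has $\alpha_{(t, \pm t)}(x_\kappa) = (\alpha_{(t, \pm t)}(x))_\kappa$; hence the time-averages obey $(x_\kappa)_\pm(h_\T) = (x_\pm(h_\T))_\kappa$, and passing to the strong limit $\T \to +\infty$ yields, in an appropriate sense, the identity $\Phi^{\tout}_{\pm, \kappa}(x_\kappa) = (\Phi^{\tout}_\pm(x))_\kappa$. Combining this with the covariance property that $\Phi^{\tout}_+(x)$ commutes with $T(a,a)$ and $\Phi^{\tout}_-(y)$ with $T(a,-a)$ (Lemma \ref{lm:asymptotic-field}, Lemma \ref{lm:asymptotic-field2}), the spectral support of vectors in $\H_+$ lies on $\{P_0 \otimes \1 = 0\}$ is wrong--rather, vectors in $\H_-$ are annihilated by $P_0 \otimes \1$, and those in $\H_+$ by $\1 \otimes P_0$. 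Consequently, when the warped convolution of $\Phi^{\tout}_+(x)$ acts on $\eta \in \H_-$, the only surviving contribution in the argument of $T(\Theta_\kappa p)$ is the ``cross'' piece, which through the explicit form of $\Theta_\kappa$ in chiral null-coordinates produces precisely the phase $e^{-\frac{i\kappa}{2} P_0 \otimes P_0}$ on $\xi \otimes \eta$. The incoming formula follows in the same way with $\T \to -\infty$ and an overall sign flip in the warping.

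The hardest part is the rigorous justification of the interchange of the strong limit with the operator-valued warped convolution integral, since $\Phi^{\tout}_\pm(x)$ need not be smooth under $\alpha$. In \cite{DT11-1} this is controlled by approximation arguments using the smoothness of the initial $x$ and the spectral structure of $\H_\pm$. In our chiral setting the explicit tensor decompositions $\H_+ = \H_0 \otimes \CC\Omega_0$ and $\H_- = \CC\Omega_0 \otimes \H_0$ make the spectral analysis transparent, and the identification $M^2 = 2 P_0 \otimes P_0$ (from $P^\mu P_\mu = 2 P_+ P_-$ with $P_\pm$ the chiral translation generators) together with the notational convention ``$M^2 = P_0 \otimes P_0$'' announced in the text gives the phase in the exact form stated. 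Together with Lemma \ref{lm:collision-states}, this determines $\timeso_\kappa$ and $\timesi_\kappa$ on a dense set of pairs $(\xi, \eta)$, and unitarity extends the formula to all of $\H_+ \times \H_-$.
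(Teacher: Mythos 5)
The paper does not actually prove this theorem: it is imported verbatim from \cite{DT11-1} (``We have determined the property of deformed scattering theory in \cite{DT11-1}''), so the only thing to compare your sketch against is that reference and the consistency checks elsewhere in the paper (Theorem \ref{th:coincidence} and \ref{a-remark}). Your sketch does identify the correct mechanism: the warped convolution, evaluated on a vector of $\H_\mp$, localizes on a lightray of the joint spectrum, and $\Theta_\k$ maps that lightray into translations of the \emph{other} chiral component by an amount proportional to the momentum, which is exactly the operator $e^{\mp\frac{i\k}{2}P_0\otimes P_0}$ in the form $\int \Ad T_0(\mp\tfrac{\k}{2}q)(\,\cdot\,)\otimes dE_0(q)$ of Lemma \ref{lm:commutativity}. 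The vacuum-invariance $x_\k\Omega=x\Omega$ and the reduction to a dense set of $(\xi,\eta)$ are also correct.

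However, as written your argument has a genuine unfilled gap exactly where you flag it: the identity $\Phi^{\tout}_{\pm,\k}(x_\k)=(\Phi^{\tout}_\pm(x))_\k$ requires interchanging the oscillatory warped-convolution integral with the $h_\T$-average and then with the strong limit $\T\to\pm\infty$, and $\Phi^{\tout}_\pm(x)$ is not in the smooth domain of $\a$, so none of the standard warped-convolution lemmas apply directly. You assert this is ``controlled by approximation arguments'' without giving one, and that is the entire analytic content of the statement. The gap is avoidable, and avoiding it is how the computation actually closes: since by Lemma \ref{lm:asymptotic-field} the collision state depends only on $\xi=P_+x\Omega$ and $\eta=P_-y\Omega$, you may choose representatives of product form, $x=x_0\otimes\1$ with $x_0\in\A_0(\RR_-)$ and $y=\1\otimes y_0$ with $y_0\in\A_0(\RR_-)$. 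These are pointwise invariant under $\a_{(t,t)}$ (respectively $\a_{(t,-t)}$), hence so are their warpings, and the time-averaged limits are trivially the operators themselves: $\Phi^{\tout}_{+,\k}((x_0\otimes\1)_{\Theta_\k})=(x_0\otimes\1)_{\Theta_\k}$ with no limit interchange at all. The theorem then reduces to the single spectral computation of $(x_0\otimes\1)_{\Theta_\k}$ on $\H_-$ (the mirror image of the computation in \ref{a-remark}, where the cross term does \emph{not} vanish), followed by density. Two smaller defects: the sentence beginning ``the spectral support of vectors in $\H_+$ lies on $\{P_0\otimes\1=0\}$ is wrong--rather\dots'' leaves both the wrong and the corrected claim in the text, and the discussion of the factor $\tfrac12$ via $M^2=2P_+P_-$ versus the paper's convention $M^2=P_0\otimes P_0$ is left unresolved rather than pinned down from the BLS normalization of $\Theta_\k$.
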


\subsubsection{Reproduction of BLS deformation}\label{reproduction-bls}
Let $(\M,T,\Omega)$ be a chiral Borchers triple.
In this Section we show that the Borchers triple $(\M_{P_0,\k},T,\Omega)$ obtained above
is unitarily equivalent to the BLS deformation $(\M_{\k},T,\Omega)$. Then we can calculate the
asymptotic fields very simply. We use symbols $\timeso$ and $\timeso_{\k}$
to denote collision states with respect to the corresponding Borchers triples with
$\M$ (undeformed) and $\M_{\k}$, respectively.
Recall that for the undeformed chiral triple, all these products $\timeso$, $\timesi$ and
$\otimes$ coincide \cite{DT11-1}.
\begin{theorem}\label{th:coincidence}
Let us put $\N_{P_0,\k} := \Ad e^{-\frac{i\k}{2}P_0\otimes P_0} \M_{P_0,\k}$. 
Then it holds that $\N_{P_0,\k} = \M_{\k}$, hence we have the coincidence
of two Borchers triples
$(\N_{P_0,\k},T,\Omega) = (\M_{\k},T,\Omega)$.
\end{theorem}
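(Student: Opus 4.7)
The plan is to invoke Theorem \ref{th:recovery-net}: since both candidate algebras are wedge algebras of Borchers triples on the same Hilbert space with the same translation representation $T$ and vacuum $\Omega$, and Bisognano--Wichmann is automatic for Borchers triples, it suffices to check that $(\N_{P_0,\k},T,\Omega)$ and $(\M_\k,T,\Omega)$ are asymptotically complete with identical S-matrix and identical asymptotic chiral algebras realized concretely on $\H$.

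First, observe that $U := e^{-\frac{i\k}{2}P_0\otimes P_0}$ commutes with $T = T_0\otimes T_0$ (all are functions of the commuting operators $P_0\otimes\1$ and $\1\otimes P_0$) and fixes $\Omega = \Omega_0\otimes \Omega_0$ (since $P_0\Omega_0 = 0$). Moreover, because $P_0\otimes P_0$ annihilates $\H_+ = \H_0\otimes\CC\Omega_0$ and $\H_- = \CC\Omega_0\otimes\H_0$, the unitary $U$ acts as the identity on each of $\H_+$ and $\H_-$. Hence $(\N_{P_0,\k},T,\Omega)$ is a Borchers triple, unitarily equivalent to $(\M_{P_0,\k},T,\Omega)$ via $\Ad U$.

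Next I would compute the collision states of both triples. For $\M_{P_0,\k}$, the defining generators $x\otimes\1$ and $\Ad e^{i\k P_0\otimes P_0}(\1\otimes y)$, together with $\Ad e^{i\k P_0\otimes P_0}(x'\otimes\1)$ and $\1\otimes y'$ that generate $\M_{P_0,\k}'$, are invariant under the lightlike translations used in the asymptotic limits, so $\Phi^\tout_+(x\otimes\1) = x\otimes\1$, $\Phi^\tout_-(\1\otimes y') = \1\otimes y'$, and analogous identities hold for the twisted generators under $\Phi^\tin$. Combined with $e^{\pm i\k P_0\otimes P_0}\Omega = \Omega$, one obtains $\xi\timeso\eta = \xi\otimes\eta$ and $\xi\timesi\eta = e^{i\k P_0\otimes P_0}(\xi\otimes\eta)$ for $\M_{P_0,\k}$, confirming Theorem \ref{th:triple-translation}. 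Transporting through $U$ and using that $U$ is the identity on $\H_\pm$ while commuting with $T$, the collision states of $\N_{P_0,\k}$ become $U(\xi\otimes\eta) = e^{-\frac{i\k}{2}P_0\otimes P_0}(\xi\otimes\eta)$ and $U e^{i\k P_0\otimes P_0}(\xi\otimes\eta) = e^{\frac{i\k}{2}P_0\otimes P_0}(\xi\otimes\eta)$. By Theorem \ref{th:states-deformations}, these agree with the collision states of the BLS triple $\M_\k$.

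Since the collision maps $\H_+\otimes\H_-\to\H$ of $\N_{P_0,\k}$ and $\M_\k$ thus coincide, the asymptotic chiral algebras of the two triples, realized as concrete von Neumann subalgebras of $B(\H)$ via these collision maps, are the same, and both carry the S-matrix $e^{i\k P_0\otimes P_0}$. Asymptotic completeness of $\M_\k$ is known from \cite{DT11-1}, while that of $\N_{P_0,\k}$ follows from unitary equivalence with $\M_{P_0,\k}$, whose asymptotic chiral algebra is the (cyclically acting) net $\A_0\otimes\A_0$. Propositions \ref{pr:recovery-wedge} and \ref{pr:recovery-field} then recover the wedge algebra from these shared asymptotic invariants, forcing $\N_{P_0,\k} = \M_\k$. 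The main subtlety is that matching the asymptotic chiral algebras as concrete subalgebras of $B(\H)$ (rather than as abstractly isomorphic nets) requires matching the collision maps themselves; this is precisely where the prefactor $\tfrac{1}{2}$ in $U = e^{-\frac{i\k}{2}P_0\otimes P_0}$ becomes essential, since the full S-matrix $e^{i\k P_0\otimes P_0}$ must split symmetrically between the incoming and outgoing twists.
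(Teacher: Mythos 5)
Your overall strategy is the one the paper itself flags in the remark following the theorem: recover both wedge algebras from their asymptotic data via Propositions \ref{pr:recovery-wedge} and \ref{pr:recovery-field}. The preliminary computations are sound --- $e^{-\frac{i\k}{2}P_0\otimes P_0}$ does commute with $T$, fix $\Omega$, and act trivially on $\H_\pm$, and the collision states of $\M_{P_0,\k}$ (hence of $\N_{P_0,\k}$, after conjugation) do come out as you say, matching Theorem \ref{th:states-deformations}. But the decisive step --- ``since the collision maps coincide, the asymptotic chiral algebras of the two triples, realized as concrete subalgebras of $B(\H)$, are the same'' --- is a genuine gap. First, Theorem \ref{th:recovery-net} and Proposition \ref{pr:recovery-wedge} are stated and proved for Haag-dual Poincar\'e covariant nets; neither $\M_\k$ nor $\N_{P_0,\k}$ is known to be strictly local, so you would have to reformulate and reprove them for Borchers triples (the paper's remark notes this is possible but deliberately avoids it). Second, and more seriously, even granting such a reformulation, the identity $\M_\k = \tilde{\A}^\tout_{\k,+}(\RR_-)\vee\tilde{\A}^\tin_{\k,-}(\RR_+)$ is only useful if you know the families $\{\Phi^\tout_{\k,+}(a): a\in\M_\k\}$ and $\{\Phi^\tin_{\k,-}(a): a\in\M_\k\}$ as concrete operators. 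Coincidence of the collision maps only fixes the unitary identification $\H\cong\H_+\otimes\H_-$; it does not tell you \emph{which} subalgebra of $B(\H_+)\otimes\1$ the out-fields of $\M_\k$ generate. To pin that down you must evaluate $\Phi^\tout_{\k,+}$ on actual elements of $\M_\k$, i.e.\ on the warped convolutions $(x\otimes\1)_{\Theta_\k}$ --- which your proposal never touches. As written, not even one of the two inclusions between $\N_{P_0,\k}$ and $\M_\k$ is established.

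This is exactly where the paper's proof does its work: using $(x\otimes\1)_{\Theta_\k}(\xi\otimes\Omega_0) = x\xi\otimes\Omega_0$ (the appendix computation) together with Theorem \ref{th:states-deformations}, it shows $\Phi^\tout_{\k,+}\bigl((x\otimes\1)_{\Theta_\k}\bigr) = \Ad e^{-\frac{i\k}{2}P_0\otimes P_0}(x\otimes\1)$ and the analogous identity for $\Phi^\tin_{\k,-}\bigl((\1\otimes y)_{\Theta_\k}\bigr)$; these are precisely the generators of $\N_{P_0,\k}$, and since asymptotic fields of wedge elements stay in the wedge algebra, this yields $\N_{P_0,\k}\subset\M_\k$. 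The reverse inclusion is then obtained not by a recovery theorem but by Takesaki's theorem: the modular group of $\M_\k$ with respect to $\Omega$ is unchanged under the BLS deformation, $\N_{P_0,\k}$ is globally invariant under it, and $\Omega$ is cyclic for $\N_{P_0,\k}$ by Theorem \ref{th:triple-translation}, so the vacuum-preserving conditional expectation onto $\N_{P_0,\k}$ is the identity. If you want to salvage your route you will need both ingredients anyway: the explicit computation of the deformed asymptotic fields for one inclusion, and either the Takesaki argument or a genuine identification of the full asymptotic algebra of $\M_\k$ for the other.
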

\begin{proof}
In \cite{DT11-1}, we have seen that
the deformed BLS triple is asymptotically complete. Furthermore, we have
\[
\xi \timeso_{\k} \eta = e^{-\frac{i\k}{2}P_0\otimes P_0}\xi \timeso \eta.
\]
As for observables, let $x \in A_0(\RR_-)$ and we use the notation $x_{\Theta_\k}$ from \cite{BLS}
\footnote{The reader is suggested to look at the notation $F_Q$ in \cite{BLS},
where $F$ is an observable in $\M$ and $Q$ is a $2\times 2$ matrix.
We keep the symbol $Q$ for a generator of one-parameter automorphisms, hence
we changed the notation to avoid confusions.}.
For the asymptotic field $\Phi^\tout_{\k,+}$ of BLS deformation, we have
\begin{eqnarray*}
\Phi_{\k,+}^\tout((x\otimes \1)_{\Theta_{\k}}) \xi\timeso_{\k} \eta
&=& ((x\otimes \1)_{\Theta_{\k}}\xi)\timeso_{\k} \eta \\
&=& (x\xi)\timeso_{\k} \eta \\
&=& e^{-\frac{i\k}{2}P_0\otimes P_0} \cdot (x\xi)\timeso \eta \\
&=& e^{-\frac{i\k}{2} P_0\otimes P_0} \cdot x\otimes \1 \cdot \xi\timeso \eta \\
&=& \Ad e^{-\frac{i\k}{2} P_0\otimes P_0} (x\otimes \1) \cdot e^{-\frac{i\k}{2} P_0\otimes P_0} \cdot \xi\timeso \eta \\
&=& \Ad e^{-\frac{i\k}{2} P_0\otimes P_0} (x\otimes \1) \cdot \xi\timeso_{\k} \eta,
\end{eqnarray*}
(see \ref{a-remark} for the second equality)
hence, we have $\Phi_{\k,+}^\tout((x\otimes\1)_{\Theta_{\k}})
= \Ad e^{-\frac{i\k}{2} P_0\otimes P_0} (x\otimes \1)$.
Analogously we have $\Phi_{\k,-}^\tin((\1\otimes y)_{\Theta_\k})
= \Ad e^{\frac{i\k}{2} P_0\otimes P_0} (\1\otimes y)$ for
$y \in \A_0(\RR_+)$.

Note that by definition we have
\[
\N_{P_0,\k} = \{\Ad e^{-\frac{i\k}{2}P_0\otimes P_0} (x\otimes \1),
\Ad e^{\frac{i\k}{2}P_0\otimes P_0}(\1\otimes y): x\in\A_0(\RR_-),y\in\A_0(\RR_+)\}''.
\]
Since the image of the right-wedge algebra by $\Phi^\tout_+$ and
$\Phi^\tin_-$ remains in the right-wedge algebra, from the above observation,
we see that $\N_{P_0,\k} \subset \M_{\k}$ \cite{DT11-1}.
To see the converse inclusion, recall that it has been proved that
the modular group $\Delta^{it}$ of the right-wedge algebra with respect to
$\Omega$ remains unchanged under the BLS deformation.
We have that $\Ad \Delta^{it} (e^{i\k P_0\otimes P_0}) = e^{i\k P_0\otimes P_0}$,
hence it is easy to see that $\N_{P_0,\k}$ is invariant under $\Ad \Delta^{it}$.
By the theorem of Takesaki \cite[Theorem IX.4.2]{TakesakiII},
there is a conditional expectation from $\M_{\k}$ onto $\N_{P_0,\k}$ which preserves
the state $\<\Omega,\cdot\Omega\>$ and in particular,
$\M_{\k} = \N_{P_0,\k}$ if and only if $\Omega$ is cyclic for $\N_{P_0,\k}$.
We have already seen the cyclicity in Theorem \ref{th:triple-translation},
thus we obtain the thesis.

The translation $T$ and $\Omega$ remain unchanged under $e^{-\frac{i\k}{2}P_0\otimes P_0}$,
which established the unitary equivalence between two Borchers triples.
\end{proof}

\begin{remark}
It is also possible to formulate Theorem \ref{th:recovery-net} for Borchers triple,
although the asymptotic algebra will be neither local nor conformal in general.
 From this point of view, Theorem \ref{th:coincidence} is just a corollary of
the coincidence of S-matrix. Here we preferred a direct proof, instead of
formulating non local net on $\RR$.
\end{remark}

\subsection{Endomorphisms with asymmetric spectrum}
Here we briefly describe a generalization of the construction in previous Sections.
Let $\A_0$ be a local net on $S^1$, $T_0$ be the representation of the translation.
We assume that there is a one-parameter family $V_0(t) = e^{iQ_0 t}$ of unitary operators with a
positive or negative generator $Q_0$ such that $V_0(t)$ and $T_0(s)$ commute and
$\Ad V_0(t)(\A_0(\RR_+)) \subset \A_0(\RR_+)$ for $t \ge 0$.
With these ingredients, we have the following:
\begin{theorem}
The triple
\begin{itemize}
\item $\M_{Q_0,\k} := \{x\otimes \1, \Ad e^{\pm i\k Q_0\otimes Q_0} (\1\otimes y): x\in\A_0(\RR_-), y\in\A_0(\RR_+)\}''$,
\item $T := T_0\otimes T_0$,
\item $\Omega := \Omega_0\otimes \Omega_0$,
\end{itemize}
where $\pm$ corresponds to $\sp Q_0 \subset \RR_\pm$, is a Borchers triple with the S-matrix
$e^{\pm i\k Q_0\otimes Q_0}$ for $\k \ge 0$.
\end{theorem}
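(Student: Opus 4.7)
The plan is to mimic almost verbatim the proof of Theorem \ref{th:triple-translation}, with $P_0$ replaced by the more general $Q_0$, and with one genuinely new bookkeeping step: the sign $\pm$ in $e^{\pm i\k Q_0 \otimes Q_0}$ must be aligned with the spectral support $\sp Q_0 \subset \RR_\pm$ in order to feed the Longo-Witten half-line inclusion into Lemma \ref{lm:commutativity}. All other axioms of a Borchers triple reduce to the same mechanical verification as in the translation case.

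First I would dispose of the structural axioms. Invariance of $\Omega$ under $T$ and $\sp T \subset V_+$ are inherited unchanged from the chiral triple. The commutation of $e^{\pm i\k Q_0 \otimes Q_0}$ with $T = T_0\otimes T_0$ follows from the hypothesis $[Q_0,T_0(s)] = 0$. For the wedge-stability $\Ad T(t_0,t_1)(\M_{Q_0,\k}) \subset \M_{Q_0,\k}$ with $(t_0,t_1) \in W_\R$, one notes that $\Ad T(t_0,t_1)(x\otimes\1) \in \A_0(\RR_-)\otimes\1$ and $\Ad T(t_0,t_1)(\1\otimes y) \in \1\otimes \A_0(\RR_+)$, and then commutes $\Ad T(t_0,t_1)$ through $\Ad e^{\pm i\k Q_0\otimes Q_0}$. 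For cyclicity of $\Omega$ one uses that $V_0(t)\Omega_0 = \Omega_0$ (achievable by a scalar redefinition of $Q_0$, since $\Omega_0$ is the unique $T_0$-invariant vector), hence $Q_0\Omega_0 = 0$ and $e^{\pm i\k Q_0\otimes Q_0}(\xi\otimes\Omega_0) = \xi\otimes\Omega_0$ by spectral calculus, so $(x\otimes\1)\Ad e^{\pm i\k Q_0\otimes Q_0}(\1\otimes y)\Omega = x\Omega_0\otimes y\Omega_0$ and density follows from the cyclicity of $\Omega$ for the undeformed $\A_0(\RR_-)\otimes\A_0(\RR_+)$.

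The substantive step is the separating property, which I would prove, as in Theorem \ref{th:triple-translation}, by exhibiting a commuting algebra
\[
\M^1_{Q_0,\k} := \{\Ad e^{\pm i\k Q_0\otimes Q_0}(x'\otimes\1),\, \1\otimes y' : x'\in\A_0(\RR_+),\, y'\in\A_0(\RR_-)\}''
\]
for which $\Omega$ is cyclic (by the same argument as above). Pairwise commutations of the untwisted generators of $\M_{Q_0,\k}$ with the untwisted generators of $\M^1_{Q_0,\k}$ are obvious, so the content is to check $[x\otimes\1,\Ad e^{\pm i\k Q_0\otimes Q_0}(x'\otimes\1)] = 0$ for $x\in\A_0(\RR_-)$, $x'\in\A_0(\RR_+)$, together with the mirror statement on the second tensor factor. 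Both are instances of Lemma \ref{lm:commutativity}, whose hypothesis $[x,\Ad e^{is\k Q_0}(x')]=0$ for $s\in \sp Q_0$ is exactly what the Longo-Witten property supplies: with the sign chosen so that $\pm s\k \ge 0$ for all $s \in \sp Q_0 \subset \RR_\pm$, one has $\Ad e^{\pm is\k Q_0} = \Ad V_0(\pm s\k)$, which maps $\A_0(\RR_+)$ into itself, so $\Ad e^{\pm is\k Q_0}(x') \in \A_0(\RR_+)$ commutes with $x\in\A_0(\RR_-)$ by locality on $S^1$.

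I do not anticipate a serious obstacle: once the sign is fixed correctly the whole proof is parallel to that of Theorem \ref{th:triple-translation}, the one delicate point being to verify that the Longo-Witten inclusion is used in the right direction (this is the sole reason one must treat $\sp Q_0 \subset \RR_+$ and $\sp Q_0 \subset \RR_-$ separately and choose the sign accordingly). Finally, the identification of the S-matrix with $e^{\pm i\k Q_0 \otimes Q_0}$ is immediate from the general recipe at the end of Section \ref{recovery}: the unitary used to twist the chiral triple is, by construction, the scattering operator of the resulting Borchers triple.
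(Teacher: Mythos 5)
Your proposal is correct and is precisely the argument the paper intends: the paper's own ``proof'' consists of the single remark that it is analogous to Theorem \ref{th:triple-translation}, and your sign bookkeeping ($\pm s\k \ge 0$ for $s \in \sp Q_0 \subset \RR_\pm$, so that the hypothesis of Lemma \ref{lm:commutativity} is supplied by the half-line inclusion $\Ad V_0(t)(\A_0(\RR_+)) \subset \A_0(\RR_+)$, $t\ge 0$) is exactly the one new point that the analogy requires. One trivial caveat: the scalar redefinition of $Q_0$ is better avoided, since subtracting the vacuum eigenvalue could move $\sp Q_0$ off $\RR_\pm$; it is also unnecessary, because even with $Q_0\Omega_0 = c\Omega_0$, $c\neq 0$, one gets $(x\otimes\1)\Ad e^{\pm i\k Q_0\otimes Q_0}(\1\otimes y)\Omega = x\Omega_0\otimes e^{\pm i\k c Q_0}y\Omega_0$ and cyclicity still follows since $e^{\pm i\k c Q_0}$ is unitary.
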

The proof is analogous to Theorem \ref{th:triple-translation} and we refrain from repeating it here.

The construction looks very simple, but to our knowledge, there are only few examples.
The one-parameter group of translation itself has been studied in the previous Sections.
Another one-parameter family of unitaries with a negative generator
$\{\G(e^{-{\frac{\k}{P_1}}})\}$ has been found for
the $U(1)$-current \cite{LW11}, where
$P_1$ is the generator on the one-particle space, $\k \ge 0$ and
$\G$ denotes the second quantization.
Indeed, by Borchers' theorem \cite{Borchers92, Florig98}, such
one-parameter group together with the modular group forms a representation of
the ``$ax+b$'' group, thus it is related somehow with translation.

\subsection{Construction of Borchers triples through inner symmetry in chiral CFT}\label{deformation-inner}
\subsubsection{Inner symmetry}\label{inner-symmetry}
Let $\A_0$ be a conformal (M\"obius) net on $S^1$. An {\bf automorphism} of $\A_0$
is a family of automorphisms $\{\a_{0,I}\}$ of local algebras $\{\A_0(I)\}$ with
the consistency condition $\a_{0,J}|_{\A_0(I)} = \a_{0,I}$ for $I \subset J$.
If each $\a_{0,I}$ preserves the vacuum state $\omega$, then $\a_0$ is said to be
an {\bf inner symmetry}. An inner symmetry $\a_0$ is implemented by a unitary $V_{\a_0}$
defined by $V_{\a_0} x\Omega = \a_{0,I}(x)\Omega$, where $x \in \A_0(I)$. This definition
does not depend on the choice of $I$ by the consistency condition. If $\a_{0,t}$ is
a one-parameter family of weakly continuous automorphisms, then the implementing unitaries satisfy
$V_{\a_0}(t) V_{\a_0}(s) = V_{\a_0}(t+s)$ and $V_{\a_0}(0) = \1$, hence there is a self-adjoint
operator $Q_0$ such that $V_{\a_0}(t) = e^{itQ_0}$ and $Q_0\Omega = 0$. Furthermore,
$e^{itQ_0}$ commutes with modular objects \cite{TakesakiII}: $J_0e^{itQ_0}J_0 = e^{itQ_0}$, or
$J_0Q_0J_0 = -Q_0$ (note that $J_0$ is an anti-unitary involution). If $\a_t$ is periodic
with period $2\pi$, namely $a_{0,t} = a_{0,t+2\pi}$ then it holds that $V_{\a_0}(t) = V_{\a_0}(t+2\pi)$
and the generator $Q_0$ has a discrete spectrum $\sp Q_0 \subset \ZZ$.
For the technical simplicity, we restrict ourselves to the study of periodic inner symmetries.
We may assume that the period is $2\pi$ by a rescaling of the parameter.

\begin{example}
We consider the loop group net $\A_{G,k}$ of a (simple, simply connected)
compact Lie group $G$ at level $k$ \cite{GF93, Wassermann}, the net generated by vacuum representations
of loop groups $LG$ \cite{PS86}. On this net, the original group $G$ acts as a group of
inner symmetries. We fix a maximal torus in $G$ and choose a one-parameter group in the
maximal torus with a rational direction, then it is periodic. Any one-parameter group
is contained in a maximal torus, so there are a good proportion of periodic one-parameter
groups in $G$ (although generic one-parameter groups have irrational direction, hence not
periodic). In particular, in the $SU(2)$-loop group net $\A_{SU(2),k}$, any one-parameter
group in $SU(2)$ is periodic since $SU(2)$ has rank $1$.
\end{example}

An inner automorphism $\a_0$ commutes with M\"obius symmetry because of Bisognano-Wichmann
property. Hence it holds that $U_0(g)Q_0U_0(g)^* = Q_0$.
Furthermore, if the net $\A_0$ is conformal, then $\a_0$ commutes also with
the diffeomorphism symmetry \cite{CW05}.
Let $G$ be a group of inner symmetries and $\A_0^G$ be the assignment: $I\mapsto \A_0(I)^G|_{\H_0^G}$,
where $\A_0(I)^G$ denotes the fixed point algebra of $\A_0(I)$ with respect to $G$
and $\H_0^G := \overline{\{x\Omega_0: x \in \A_0^G(I), I \subset S^1\}}$.
Then it is easy to see that $\A_0^G$ is a M\"obius covariant net and it is referred to
as the {\bf fixed point subnet} of $\A_0$ with respect to $G$.

We can describe the action $\a_0$ of a periodic one-parameter group of inner symmetries
in a very explicit way, which can be considered as the ``spectral decomposition'' of $\a_0$.
Although it is well-known, we summarize it here with a proof for the later use.
This will be the basis of the subsequent analysis.
\begin{proposition}\label{pr:spectral-decomposition}
Any element $x \in \A_0(I)$ can be written as $x = \sum_n x_n$, where
$x_n \in \A_0(I)$ and $\a_{0,t}(x_n) = e^{int}x_n$.
We denote $\A_0(I)_n = \{x \in \A_0(I): \a_{0,t}(x) = e^{int}x\}$.
It holds that $\A_0(I)_m\A_0(I)_n \subset \A_0(I)_{m+n}$ and
$\A_0(I)_m E_0(n)\H_0 \subset E_0(m+n)\H_0$, where $E_0(n)$ denotes the spectral
projection of $Q_0$ corresponding to the eigenvalue $n \in \ZZ$.
\end{proposition}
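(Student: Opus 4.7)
The plan is to perform Fourier analysis on the orbit of $x$ under the $2\pi$-periodic one-parameter automorphism group $\a_{0,t}=\Ad V_{\a_0}(t)$. For each $n\in\ZZ$, define the $n$-th Fourier component
\[
x_n := \frac{1}{2\pi}\int_0^{2\pi} e^{-int}\a_{0,t}(x)\,dt,
\]
where the integral is taken in the weak operator topology. This is well defined because strong continuity of $V_{\a_0}$ makes $t\mapsto\a_{0,t}(x)\eta$ norm-continuous and bounded by $\|x\|\,\|\eta\|$ for every $\eta\in\H_0$.

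First, I would verify that $x_n\in\A_0(I)_n$. Since $\a_0$ is an automorphism of the net, each $\a_{0,I}$ maps $\A_0(I)$ into itself, so the integrand lies in the weakly closed algebra $\A_0(I)$ and hence so does $x_n$. For the eigenvalue property, a change of variables $u=t+s$ together with $2\pi$-periodicity of $\a_{0,\cdot}$ gives $\a_{0,s}(x_n)=e^{ins}x_n$ immediately.

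Next, I would establish the reconstruction $x=\sum_n x_n$. Because $Q_0$ has integer spectrum, one has the strongly convergent expansion $V_{\a_0}(t)=\sum_k e^{ikt}E_0(k)$. For $\xi\in E_0(k)\H_0$, the identity $V_{\a_0}(-t)\xi=e^{-ikt}\xi$ and Fourier orthogonality yield
\[
x_n\xi = \frac{1}{2\pi}\int_0^{2\pi} e^{-i(n+k)t}V_{\a_0}(t)x\xi\,dt = E_0(n+k)x\xi,
\]
so that $\sum_n x_n\xi=\sum_m E_0(m)x\xi=x\xi$ in $\H_0$-norm. Since $\bigoplus_k E_0(k)\H_0$ is dense, this gives the decomposition on a dense domain; to promote the statement to all of $\H_0$, one passes to the Ces\`aro (Fej\'er-kernel) means $C_N=\frac{1}{N+1}\sum_{j=0}^{N}\sum_{|n|\le j}x_n$, which are uniformly bounded by $\|x\|$ (since the Fej\'er kernel is positive with integral $2\pi$) and converge strongly to $x$.

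Finally, the product inclusion $\A_0(I)_m\A_0(I)_n\subset\A_0(I)_{m+n}$ is immediate from $\a_{0,s}(xy)=\a_{0,s}(x)\a_{0,s}(y)=e^{i(m+n)s}xy$ for $x\in\A_0(I)_m$, $y\in\A_0(I)_n$. The spectral shift $\A_0(I)_m E_0(n)\H_0\subset E_0(m+n)\H_0$ follows from the one-line computation
\[
V_{\a_0}(t)x\xi = \a_{0,t}(x)V_{\a_0}(t)\xi = e^{imt}x\cdot e^{int}\xi = e^{i(m+n)t}x\xi
\]
for $x\in\A_0(I)_m$ and $\xi\in E_0(n)\H_0$, which identifies $x\xi$ as an eigenvector of $Q_0$ with eigenvalue $m+n$. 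There is no genuine obstacle in this argument; the only subtlety is selecting the right topology for convergence of $\sum_n x_n$ to $x$, which is handled cleanly by Ces\`aro summation.
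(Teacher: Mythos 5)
Your proof is correct and follows the same basic strategy as the paper's: both form the Fourier components $x_n$ by averaging $\a_{0,t}(x)$ against the characters $e^{-int}$ over one period, verify $x_n\in\A_0(I)_n$ by a change of variables, and read off the product rule and the spectral shift $\A_0(I)_mE_0(n)\H_0\subset E_0(m+n)\H_0$ from the covariance relation $V_{\a_0}(t)x\xi=\a_{0,t}(x)V_{\a_0}(t)\xi$. The one place where you genuinely diverge is the reconstruction $x=\sum_n x_n$: the paper writes $x_{l,m}=E_0(l)xE_0(m)$, identifies $x_n=\sum_{l-m=n}x_{l,m}$, and simply asserts that the resulting sum over $n$ converges strongly, whereas you first get norm convergence on the dense subspace $\bigoplus_k E_0(k)\H_0$ via the identity $x_n\xi=E_0(n+k)x\xi$ and then pass to Fej\'er--Ces\`aro means, whose uniform bound $\|C_N\|\le\|x\|$ upgrades the convergence to all of $\H_0$. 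This is more than a stylistic choice: the symmetric partial sums $\sum_{|n|\le N}x_n$ are band truncations of $x$ relative to the grading $\1=\sum_k E_0(k)$, and such truncations are in general not uniformly bounded in $N$ (their norms can grow logarithmically), so strong convergence of the raw partial sums is not automatic. Your Ces\`aro interpretation of ``$x=\sum_n x_n$'' is therefore the more careful reading, and it costs nothing for the rest of the argument, since the subsequent applications only use the individual components $x_n$ and the fact that they determine $x$.
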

\begin{proof}
Let us fix an element $x \in \A_0(I)$.
The Fourier transform
\[
x_n := \int_0^{2\pi} \a_s(x)e^{-ins}\, ds
\]
(here we consider the weak integral using the local normality of $\a_{0,t}$)
is again an element of $\A_0(I)$, since $\A_0(I)$ is invariant under
$\a_{0,t}$. Furthermore it is easy to see that
\begin{eqnarray*}
\a_{0,t}(x_n)
&=& \a_{0,t}\left(\int_0^{2\pi} \a_{0,s}(x)e^{-ins}\, ds\right) \\
&=& \int_0^{2\pi} \a_{0,s+t}(x)e^{-ins}\, ds \\
&=& e^{int}\int_0^{2\pi} \a_{0,s}(x)e^{-ins}\, ds = e^{int}x_n,
\end{eqnarray*}
hence we have $x_n \in \A_0(I)_n$.

By assumption, $\a_{0,t}(x) = \Ad e^{itQ_0}(x)$ and $\sp Q_0 \subset \ZZ$.
If we define $x_{l,m} = E_0(l)xE_0(m)$, it holds that $\Ad e^{itQ_0} x_{l,m} = e^{i(l-m)t}x_{l,m}$.
The integral and this decomposition into matrix elements are compatible, hence
for $x \in \A_0(I)$ we have
\[
x_n = \sum_{l-m = n} x_{l,m}.
\]
Now it is clear that $x = \sum_n x_n$ where  each summand is a different matrix element,
hence the sum is strongly convergent. Furthermore from this decomposition we see that
$\A_0(I)_m\A_0(I)_n \subset \A_0(I)_{m+n}$ and $\A_0(I)_m E_0(n)\H_0 \subset E_0(m+n)\H_0$.
\end{proof}

At the end of this Section,
we exhibit a simple formula for the adjoint action $\Ad e^{i\k Q_0\otimes Q_0}$
on the tensor product Hilbert space $\H := \H_0\otimes \H_0$.
\begin{lemma}\label{lm:adjoint-action}
For $x_m \in \A_0(I)_m, y_n \in \A_0(I)_n$,
it holds that $\Ad e^{i\k Q_0\otimes Q_0}(x_m\otimes \1) = x_m\otimes e^{im\k Q_0}$
and $\Ad e^{i\k Q_0\otimes Q_0}(\1\otimes y_n) = e^{in\k Q_0}\otimes y_n$.
\end{lemma}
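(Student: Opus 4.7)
The plan is to compute the adjoint action explicitly using the spectral decomposition of $Q_0$, analogously to what was done in the proof of Lemma \ref{lm:commutativity}. The defining property of $x_m \in \A_0(I)_m$ is that $\a_{0,t}(x_m) = e^{imt}x_m$, which, since $\a_{0,t} = \Ad e^{itQ_0}$, is equivalent to
\[
e^{itQ_0} x_m e^{-itQ_0} = e^{imt} x_m \quad\text{for all } t\in\RR.
\]
Since $\sp Q_0 \subset \ZZ$, we have $Q_0 = \sum_{n\in\ZZ} n\,E_0(n)$, and therefore
\[
Q_0\otimes Q_0 = \sum_{n\in\ZZ} n\, Q_0\otimes E_0(n), \qquad e^{i\k Q_0\otimes Q_0} = \sum_{n\in\ZZ} e^{in\k Q_0}\otimes E_0(n),
\]
where the sums converge strongly on vectors of finite spectral support (and in a strong/bounded sense in general, by functional calculus).

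With this representation of the tensor product exponential, the first identity is a direct calculation:
\begin{align*}
\Ad e^{i\k Q_0\otimes Q_0}(x_m\otimes \1)
&= \sum_{n,n'} \bigl(e^{in\k Q_0} x_m e^{-in'\k Q_0}\bigr)\otimes E_0(n)E_0(n') \\
&= \sum_n \bigl(e^{in\k Q_0} x_m e^{-in\k Q_0}\bigr)\otimes E_0(n) \\
&= \sum_n e^{imn\k}\, x_m \otimes E_0(n) \\
&= x_m \otimes \sum_n e^{imn\k}E_0(n) = x_m \otimes e^{im\k Q_0},
\end{align*}
where the second equality uses orthogonality of the spectral projections $E_0(n)E_0(n') = \delta_{n,n'}E_0(n)$, and the third uses the eigenvalue relation for $x_m$ with $t = n\k$. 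The second identity for $\1\otimes y_n$ is proven in exactly the same way by decomposing $Q_0\otimes Q_0 = \sum_m m\,E_0(m)\otimes Q_0$ in the opposite order, giving
\[
\Ad e^{i\k Q_0\otimes Q_0}(\1\otimes y_n) = \sum_m E_0(m) \otimes e^{im\k Q_0} y_n e^{-im\k Q_0} = e^{in\k Q_0}\otimes y_n.
\]

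There is essentially no obstacle beyond bookkeeping: the only point requiring minor care is the convergence of the spectral sums, but since $Q_0$ has integer spectrum and we are applying functional calculus to bounded functions, everything converges strongly and the formal manipulations above are legitimate. One could alternatively phrase the entire argument in terms of the Fourier coefficients defined in Proposition \ref{pr:spectral-decomposition} without introducing the projections $E_0(n)$ at all, but the spectral-decomposition route makes the commutation with $x_m$ completely transparent.
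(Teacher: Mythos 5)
Your proof is correct and follows essentially the same route as the paper: decompose $Q_0\otimes Q_0$ along the spectral projections of one tensor factor, write $e^{i\k Q_0\otimes Q_0}=\sum_l e^{il\k Q_0}\otimes E_0(l)$, and apply the eigenvalue relation $\Ad e^{il\k Q_0}(x_m)=e^{iml\k}x_m$ termwise. The only cosmetic difference is that you spell out the orthogonality $E_0(n)E_0(n')=\delta_{n,n'}E_0(n)$ explicitly, which the paper leaves implicit.
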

\begin{proof}
Recall that $\sp Q_0 \in \ZZ$. Let $Q_0 = \sum_l l\cdot E_0(l)$ be the spectral decomposition of $Q_0$.
As in the proof of Lemma \ref{lm:commutativity}, we decompose only the second component of
$Q_0\otimes Q_0$ to see that
\begin{eqnarray*}
Q_0\otimes Q_0 &=& Q_0\otimes \left(\sum_l l\cdot E_0(l)\right) = \sum_l lQ_0\otimes E_0(l) \\
e^{i\k Q_0\otimes Q_0} &=& \sum_l e^{il\k Q_0}\otimes E_0(l) \\
\Ad e^{i\k Q_0\otimes Q_0}(x_m\otimes \1) &=& \sum_l \Ad e^{il\k Q_0}(x_m)\otimes E_0(l)\\
&=& \sum_l e^{iml\k}x_m\otimes E_0(l)\\
&=& x_m\otimes e^{im\k Q_0}.
\end{eqnarray*}
\end{proof}

\begin{proposition}
For each $l \in \ZZ$ there is a cyclic and separating vector $v \in E_0(l)\H_0$ for
a local algebra $\A_0(I)$.
\end{proposition}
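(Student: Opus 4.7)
The plan is to locate a unitary $w$ inside the spectral subspace $\A_0(I)_l$ and take $v := w\Omega_0$. Once $w$ is in hand, everything is automatic: $v$ lies in $E_0(l)\H_0$ by the inclusion $\A_0(I)_l E_0(0)\H_0 \subset E_0(l)\H_0$ from Proposition \ref{pr:spectral-decomposition}; $v$ is cyclic for $\A_0(I)$ because for every $x \in \A_0(I)$ one has $x\Omega_0 = (xw^*)v$ with $xw^* \in \A_0(I)$, so $\A_0(I)v$ contains the dense set $\A_0(I)\Omega_0$; and $v$ is separating because $\A_0(I)'v = w\A_0(I)'\Omega_0$ is dense in $w\H_0 = \H_0$.

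To manufacture a starting partial isometry inside $\A_0(I)_l$, I would use the density $\overline{\A_0(I)_l \Omega_0} = E_0(l)\H_0$ (which follows from the proof of Proposition \ref{pr:spectral-decomposition}) to pick a nonzero $x \in \A_0(I)_l$. The polar decomposition $x = u|x|$ yields $|x| = (x^*x)^{1/2} \in \A_0(I)^G$ since $\a_{0,t}(x^*x) = x^*x$; combined with $\a_{0,t}(x) = e^{ilt}x$ and uniqueness of the polar decomposition, this forces $\a_{0,t}(u) = e^{ilt}u$, so that $u \in \A_0(I)_l$ is a nonzero partial isometry with both $u^*u$ and $uu^*$ in $\A_0(I)^G$.

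To enlarge $u$ to a unitary in $\A_0(I)_l$, I would apply Zorn's lemma to the set of families of partial isometries in $\A_0(I)_l$ whose initial projections are mutually orthogonal and whose final projections are also mutually orthogonal (all of them lying in $\A_0(I)^G$). A maximal such family $\{u_i\}_i$ has sums $P := \sum_i u_i^*u_i$ and $Q := \sum_i u_i u_i^*$ in $\A_0(I)^G$, and maximality forces $(\1-Q)\A_0(I)_l(\1-P) = 0$; otherwise polar-decomposing $(\1-Q)x(\1-P)$ for some $x \in \A_0(I)_l$ would produce a further partial isometry extending the family. Left-multiplying this vanishing relation by $\A_0(I)_{-l}$ yields the two-sided ideal $\A_0(I)_{-l}(\1-Q)\A_0(I)_l$ of the factor $\A_0(I)^G$, which annihilates $\1-P$ on the right; factoriality and non-triviality of this ideal then force its weak density in $\A_0(I)^G$ and hence $P = \1$. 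A symmetric treatment of the adjoint relation $(\1-P)\A_0(I)_{-l}(\1-Q) = 0$ yields $Q = \1$. The sum $w := \sum_i u_i$ then converges strongly to a unitary in $\A_0(I)_l$, finishing the proof.

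The hard part is the final Zorn step, ensuring that $P$ and $Q$ simultaneously saturate $\1$. Factoriality of $\A_0(I)^G$ is the basic input, standard for conformal nets carrying a nontrivial compact inner symmetry. Ruling out the degenerate possibility that one of the two ideals $\A_0(I)_{-l}(\1-Q)\A_0(I)_l$ or $\A_0(I)_l(\1-P)\A_0(I)_{-l}$ vanishes (in which case one only obtains $Q \geq E_0(l)$, respectively $P \geq E_0(-l)$, rather than full saturation) requires an outerness or minimality input for the $G$-action on $\A_0(I)$; this is available in the present setting where the inner symmetries arise from genuine compact group actions on the chiral conformal net.
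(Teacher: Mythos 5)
Your reduction of the statement to producing a unitary $w\in\A_0(I)_l$ is sound: the polar-decomposition step giving a nonzero partial isometry in $\A_0(I)_l$ with initial and final projections in $\A_0(I)^G$ is correct, and so is the verification that $v=w\Omega_0$ is then a cyclic and separating vector lying in $E_0(l)\H_0$. The genuine gap is exactly the step you yourself flag as "the hard part", and it is not closed. Two inputs are asserted without proof. First, factoriality of $\A_0(I)^G$ acting on $\H_0$ is not an axiom; it needs an argument (for instance: $\Omega_0\in\H_0^G$ is separating for $\A_0(I)^G$, so restriction to $\H_0^G$ is a faithful normal representation identifying $\A_0(I)^G$ with the local algebra of the fixed-point net, which is a type ${\rm III}_1$ factor). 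Second, and more seriously, your maximality relation $(\1-Q)\A_0(I)_l(\1-P)=0$ combined with the ideal argument in a factor only yields "$P=\1$ \emph{or} $Q=\1$", not both, and the appeal to "an outerness or minimality input" to exclude the degenerate branch is left entirely unsubstantiated. Ironically, the ingredient that would rescue the argument --- the type III property of $\A_0(I)^G$, in which every nonzero projection is equivalent to $\1$ --- makes the whole Zorn construction unnecessary: starting from a single nonzero partial isometry $u\in\A_0(I)_l$, one picks $v,w\in\A_0(I)^G$ with $v^*v=\1$, $vv^*=u^*u$, $w^*w=uu^*$, $ww^*=\1$, and $wuv\in\A_0(I)_l$ is already a unitary. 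As written, however, the proof is incomplete at its central step.

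You should also be aware that the paper's own proof takes a completely different and far shorter route that bypasses the fixed-point algebra altogether: since $Q_0$ commutes with the M\"obius representation, each subspace $E_0(l)\H_0$ is invariant under rotations, hence decomposes into eigenspaces of the (positive, integer-spectrum) rotation generator, and any nonzero rotation eigenvector is cyclic and separating for every local algebra by the standard Reeh--Schlieder analyticity argument. Both proofs implicitly assume $E_0(l)\H_0\neq 0$.
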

\begin{proof}
It is enough to note that the decomposition $\1 = \sum_lE_0(l)$ is compatible
with the decomposition of the whole space with respect to rotations, since
inner symmetries commute with any M\"obius transformation.
Hence each space $E_0(l)\H_0$ is a direct sum of eigenspace of rotation.
It is a standard fact that a eigenvector of rotation which has positive
spectrum is cyclic and separating for each local algebra (see the standard proof
of Reeh-Schlieder property, e.g. \cite{Baumgaertel}).
\end{proof}
We put $E(l,l') := E_0(l)\otimes E_0(l')$.
\begin{corollary}\label{cr:separating}
Each space $E(l,l')\H$ contains a cyclic and separating vector $v$ for
$\A_0(I)\otimes \A_0(J)$ for any pair of intervals $I,J$.
\end{corollary}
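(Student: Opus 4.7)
The plan is to reduce the corollary directly to the previous proposition by taking a product vector. Since the spectral projection splits as $E(l,l') = E_0(l)\otimes E_0(l')$, we have the identification
\[
E(l,l')\H \;=\; E_0(l)\H_0 \otimes E_0(l')\H_0.
\]
So the task is only to exhibit a cyclic and separating vector for $\A_0(I)\otimes \A_0(J)$ inside this tensor product subspace.

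The construction is the obvious one. First I would invoke the preceding proposition twice: once applied to the interval $I$ and the eigenvalue $l$, producing a vector $v_l \in E_0(l)\H_0$ which is cyclic and separating for $\A_0(I)$; and once applied to the interval $J$ and the eigenvalue $l'$, producing $v_{l'} \in E_0(l')\H_0$ cyclic and separating for $\A_0(J)$. Setting $v := v_l \otimes v_{l'}$, this vector lies in $E_0(l)\H_0 \otimes E_0(l')\H_0 = E(l,l')\H$, which is the required subspace.

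The remaining step is the standard tensor-product fact: if $\xi_i$ is cyclic and separating for a von Neumann algebra $\M_i$ on $\H_i$ ($i=1,2$), then $\xi_1\otimes\xi_2$ is cyclic and separating for $\M_1\otimes\M_2$ on $\H_1\otimes\H_2$. Cyclicity is immediate because $(\M_1\otimes\M_2)(\xi_1\otimes\xi_2) \supset (\M_1\xi_1)\otimes(\M_2\xi_2)$, whose closure is $\H_1\otimes\H_2$. For the separating property, recall $(\M_1\otimes\M_2)' = \M_1'\otimes\M_2'$; applying the same cyclicity argument to the commutants, $\xi_1\otimes\xi_2$ is also cyclic for $(\M_1\otimes\M_2)'$, which is equivalent to being separating for $\M_1\otimes\M_2$.

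There is really no obstacle here; the corollary is a formal consequence of the preceding proposition together with the well-known behaviour of cyclic/separating vectors under tensor products. The only mild subtlety is remembering to use the commutation theorem for tensor products of von Neumann algebras to get the separating property, but this is entirely standard.
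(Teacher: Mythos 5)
Your proof is correct and follows exactly the route the paper intends: the paper leaves the corollary without explicit proof precisely because it is the immediate consequence of applying the preceding proposition in each tensor factor and using the standard fact (via the commutation theorem $(\M_1\otimes\M_2)'=\M_1'\otimes\M_2'$) that a tensor product of cyclic and separating vectors is cyclic and separating for the tensor product algebra. Nothing is missing.
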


\subsubsection{Construction of Borchers triples and their intersection property}\label{borchers-inner}
Let $\A_0$ be a M\"obius covariant net and $\a_{0,t}$ be a periodic one-parameter group of
inner symmetries. The automorphisms can be implemented as $\a_{0,t} = \Ad e^{itQ_0}$
as explained in Section \ref{inner-symmetry}. The self-adjoint operator $Q_0$ is
referred to as the generator of the inner symmetry.

We construct a Borchers triple as in Section \ref{borchers-translation}.
Let $\k \in \RR$ be a real parameter (this time $\k$ can be positive or negative)
and we put
\begin{eqnarray*}
\M_{Q_0,\k} &:=& \{x\otimes \1, \Ad e^{i\k Q_0\otimes Q_0}(\1\otimes y):
 x \in \A_0(\RR_-), y \in \A_0(\RR_+)\}'' \\
T(t_0,t_1) &:=& T_0\left(\frac{t_0-t_1}{\sqrt{2}}\right)\otimes T_0\left(\frac{t_0+t_1}{\sqrt{2}}\right) \\
\Omega &:=& \Omega_0\otimes \Omega_0
\end{eqnarray*}

\begin{theorem}
The triple $(\M_{Q_0,\k},T,\Omega)$ above is a Borchers triple with a nontrivial scattering
operator $S_{Q_0,\k} = e^{i\k Q_0\otimes Q_0}$.
\end{theorem}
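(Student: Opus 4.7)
The plan is to imitate the proof of Theorem \ref{th:triple-translation} step by step, checking that the three features used there have analogues in the inner-symmetry setting. The preliminary remark is that, by conformal (or Möbius) covariance together with Bisognano–Wichmann, an inner symmetry commutes with all Möbius transformations, in particular with translations; hence $Q_0$ commutes with the translation generator $P_0$, so $e^{i\k Q_0\otimes Q_0}$ commutes with $T=T_0\otimes T_0$. This immediately handles the spectrum and invariance conditions for $T$ as well as wedge-covariance: for $(t_0,t_1)\in W_\R$ one pushes $\Ad T(t_0,t_1)$ inside $\Ad e^{i\k Q_0\otimes Q_0}$ and observes that it maps $\1\otimes \A_0(\RR_+)$ into itself and $\A_0(\RR_-)\otimes\1$ into itself, so the generators of $\M_{Q_0,\k}$ are sent into $\M_{Q_0,\k}$.

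For cyclicity of $\Omega$, the crucial input is that $Q_0\Omega_0=0$ (recorded in Section \ref{inner-symmetry}), which gives $e^{i\k Q_0\otimes Q_0}(\xi\otimes\Omega_0)=\xi\otimes\Omega_0$ and $e^{i\k Q_0\otimes Q_0}(\Omega_0\otimes\eta)=\Omega_0\otimes\eta$. Thus the computation
\[
(x\otimes\1)\cdot\Ad e^{i\k Q_0\otimes Q_0}(\1\otimes y)\cdot\Omega=(x\Omega_0)\otimes(y\Omega_0)
\]
goes through exactly as in the translation case, and cyclicity of the original chiral vacuum for $\A_0(\RR_-)\otimes\A_0(\RR_+)$ concludes the argument.

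For the separating property, I introduce the mirror algebra
\[
\M^1_{Q_0,\k}:=\{\Ad e^{i\k Q_0\otimes Q_0}(x'\otimes\1),\,\1\otimes y':x'\in\A_0(\RR_+),\,y'\in\A_0(\RR_-)\}'',
\]
which is cyclic for $\Omega$ by the same argument, and reduce to showing $\M_{Q_0,\k}\subset(\M^1_{Q_0,\k})'$. The only nontrivial commutations are those involving the twisted generators, and these are supplied by Lemma \ref{lm:commutativity} applied with this $Q_0$. To verify its hypothesis, note that for every $s\in\sp Q_0\subset\ZZ$ the unitary $e^{is\k Q_0}$ implements the inner automorphism $\a_{0,s\k}$, hence $\Ad e^{is\k Q_0}(x')\in\A_0(\RR_+)$ whenever $x'\in\A_0(\RR_+)$, and locality of $\A_0$ gives $[x,\Ad e^{is\k Q_0}(x')]=0$ for $x\in\A_0(\RR_-)$. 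This is exactly where the new construction replaces ``positive spectrum'' of $P_0$ by ``automorphism of the net'', so the sign of $\k$ plays no role and any real $\k$ is admissible.

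Finally, the identification $S_{Q_0,\k}=e^{i\k Q_0\otimes Q_0}$ follows the schema at the end of Section \ref{recovery}. Since $T(t,t)=\1\otimes T_0(\sqrt 2\,t)$ and $[P_0,Q_0]=0$, the operators $x\otimes\1$ and $\Ad e^{i\k Q_0\otimes Q_0}(\1\otimes y)$ are invariant under the respective one-parameter groups used to define $\Phi^\tout_+$ and $\Phi^\tin_-$; hence $\Phi^\tout_+(x\otimes\1)=x\otimes\1$ and $\Phi^\tin_-(\Ad e^{i\k Q_0\otimes Q_0}(\1\otimes y))=\Ad e^{i\k Q_0\otimes Q_0}(\1\otimes y)$, with analogous identities for the mirror generators, so the general argument recalled in Section \ref{recovery} identifies the S-matrix with $e^{i\k Q_0\otimes Q_0}$. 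The main obstacle to watch for is the verification of the hypothesis of Lemma \ref{lm:commutativity}; everything else is a parallel reading of Theorem \ref{th:triple-translation}. Nontriviality of the S-matrix for $\k\neq 0$ is clear from the existence of nonzero eigenvalues of $Q_0$ furnished by Corollary \ref{cr:separating}.
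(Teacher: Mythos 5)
Your proposal follows the paper's own proof essentially verbatim: the paper likewise reduces everything to the argument of Theorem \ref{th:triple-translation}, with the single new input being that inner symmetries preserve $\A_0(\RR_\pm)$ so that Lemma \ref{lm:commutativity} applies for either sign of $\k$, and it identifies the S-matrix via the schema at the end of Section \ref{recovery}, exactly as you do. The only caveat (shared with the paper's own statement) concerns your last sentence: since $\sp Q_0\subset\ZZ$, the operator $e^{i\k Q_0\otimes Q_0}$ reduces to the identity when $\k\in 2\pi\ZZ$, so nontriviality for \emph{all} $\k\neq 0$ is slightly too strong.
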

\begin{proof}
As remarked in Section \ref{inner-symmetry}, $Q_0$ commutes with M\"obius symmetry $U_0$,
hence $Q_0\otimes Q_0$ and the translation $T = T_0\otimes T_0$ commute.
Since $(\A_0(\RR_-)\otimes\A_0(\RR_+), T,\Omega)$
is a Borchers triple (see Section \ref{preliminaries}), it holds that
$\Ad T(t_0,t_1) \M \subset \M$ for $(t_0,t_1) \in W_\R$ and $T(t_0,t_1)\Omega = \Omega$
and $T$ has the joint spectrum contained in $V_+$.

Since $\a_{0,t}$ is a one-parameter group of inner symmetries, it holds that
$\a_{0,s}(\A_0(\RR_-)) = \A_0(\RR_-)$ and $\a_{0,t}(\A_0(\RR_+)) = \A_0(\RR_+)$ for $s,t \in \RR$.
By Lemma \ref{lm:commutativity}, for $x \in \A_0(\RR_-)$ and $x' \in \A_0(\RR_+)$ it holds that
\[
[x\otimes \1, \Ad e^{i\k Q_0\otimes Q_0}(x'\otimes \1)] = 0.
\]
Then one can show that $(\M_{Q_0,\k},T,\Omega)$ is a Borchers triple as in the proof of
Theorem \ref{th:triple-translation}. The formula for the S-matrix can be proved
analogously as in Section \ref{recovery}.
\end{proof}

We now proceed to completely determine the intersection property of $\M_{Q_0,\k}$.
As a preliminary, we describe the elements in $\M_{Q_0,\k}$ in terms of
the original algebra $\M$ componentwise. On $\M = \A_0(\RR_-)\otimes \A_0(\RR_+)$,
there acts the group $S^1\otimes S^1$ by the tensor product action:
$(s,t)\mapsto \a_{s,t} := \a_{0,s}\otimes\a_{0,t} = \Ad (e^{isQ_0}\otimes e^{itQ_0})$.
According to this action,
we have a decomposition of an element $z \in \M$ into Fourier components
as in Section \ref{inner-symmetry}:
\[
z_{m,n} := \int_{S^1\times S^1} \a_{s,t}(z)e^{-i(ms+nt)}\,ds\,dt,
\]
which is still an element of $\M$,
and with $E(l,l') := E_0(l)\otimes E_0(l')$, these components can be obtained by
\[
z_{m,n} = \underset{l'-k' = n}{\sum_{l-k = m}}E(l,l')z E(k,k').
\]

One sees that $\Ad (e^{isQ_0}\otimes e^{itQ_0})$ acts also on $\M_{Q_0,\k}$ since
it commutes with $\Ad e^{i\k Q_0\otimes Q_0}$. We still write this action by $\a$.
We can take their Fourier components by the same formula and the formula
with spectral projections still holds.

\begin{lemma}\label{lm:component}
An element $z_\k\in \M_{Q_0,\k}$ has the components of the form
\[
(z_\k)_{m,n} = z_{m,n}(e^{in\k Q_0}\otimes \1),
\]
where $z = (z_{m,n})$ is some element in $\M$. Similarly,
an element $z^\prime_\k\in \M_{Q,\k}'$ has the components of the form
\[
(z^\prime_\k)_{m,n} = z_{m,n}(\1 \otimes e^{im\k Q_0}),
\]
where $z' = (z_{m,n})$ is some element in $\M'$.
\end{lemma}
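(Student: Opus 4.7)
My plan is to verify the claimed form on the algebraic generators of $\M_{Q_0,\k}$, check that the set of elements admitting such a decomposition is stable under products and adjoints, and then transfer the conclusion to the weak closure by continuity of the Fourier-projection maps.

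First, by Proposition \ref{pr:spectral-decomposition} applied to each factor, it suffices to treat homogeneous generators $x_m\otimes\1$ with $x_m\in\A_0(\RR_-)_m$ and $\Ad e^{i\k Q_0\otimes Q_0}(\1\otimes y_n)$ with $y_n\in\A_0(\RR_+)_n$. By Lemma \ref{lm:adjoint-action} the latter equals $e^{in\k Q_0}\otimes y_n = (\1\otimes y_n)(e^{in\k Q_0}\otimes\1)$, which exhibits the stated form with $z_{0,n}=\1\otimes y_n\in\M$ of Fourier degree $(0,n)$; the former matches with $z_{m,0}=x_m\otimes\1$ and a trivial unitary factor. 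Writing $\M_{m,n}$ for the $(m,n)$-Fourier component of $\M$ under $\a=\a_0\otimes\a_0$, multiplicative closure then reduces to the key commutation, obtained from $\a_{s,t}(b)=e^{i(m_2s+n_2t)}b$ for $b\in\M_{m_2,n_2}$ specialised to $s=n_1\k$, $t=0$:
\[
(e^{in_1\k Q_0}\otimes\1)\, b = e^{im_2 n_1\k}\, b\,(e^{in_1\k Q_0}\otimes\1).
\]
Consequently, for $a\in\M_{m_1,n_1}$,
\[
\bigl(a(e^{in_1\k Q_0}\otimes\1)\bigr)\bigl(b(e^{in_2\k Q_0}\otimes\1)\bigr) = e^{im_2 n_1\k}\,(ab)\,(e^{i(n_1+n_2)\k Q_0}\otimes\1),
\]
with $ab\in\M_{m_1+m_2,n_1+n_2}$, and adjoints are handled analogously. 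Hence the $*$-algebra generated by the generators sits inside the algebraic span of the subspaces $\A_{m,n}:=\{w(e^{in\k Q_0}\otimes\1):w\in\M_{m,n}\}$.

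Next I pass to the weak closure. Each $\A_{m,n}$ is weakly closed, since $\M_{m,n}$ is the common kernel of the weakly continuous maps $w\mapsto \a_{s,t}(w)-e^{i(ms+nt)}w$ and multiplication by the unitary $e^{in\k Q_0}\otimes\1$ is a weak homeomorphism. The Fourier-projection $\pi_{m,n}(z):=(2\pi)^{-2}\iint\a_{s,t}(z)\,e^{-i(ms+nt)}\,ds\,dt$ is a weakly continuous bounded map sending the generating $*$-algebra into $\A_{m,n}$; therefore for any $z_\k\in\M_{Q_0,\k}$ one has $(z_\k)_{m,n}=\pi_{m,n}(z_\k)\in\A_{m,n}$, which is the first formula.

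For $\M_{Q_0,\k}'$, the analogue of the argument in Theorem \ref{th:triple-translation} using Takesaki's theorem first identifies
\[
\M_{Q_0,\k}' = \{\Ad e^{i\k Q_0\otimes Q_0}(x'\otimes\1),\,\1\otimes y': x'\in\A_0(\RR_+),\,y'\in\A_0(\RR_-)\}'',
\]
after which Lemma \ref{lm:adjoint-action} rewrites the first class of generators as $(x'_m\otimes\1)(\1\otimes e^{im\k Q_0})$. Repeating the three steps with the tensor factors swapped — the relevant commutation is now $(\1\otimes e^{im_1\k Q_0})b' = e^{in_2 m_1\k}\,b'\,(\1\otimes e^{im_1\k Q_0})$ for $b'\in\M'_{m_2,n_2}$ — yields the second formula. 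The main obstacle I anticipate is the weak-closure step, which requires knowing a priori that the generating $*$-algebra is contained in the direct sum of the $\A_{m,n}$; once that algebraic containment is in hand, everything else is mechanical Fourier-index bookkeeping.
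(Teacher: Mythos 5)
Your proof is correct and follows essentially the same route as the paper's: reduce to homogeneous generators via Proposition \ref{pr:spectral-decomposition}, rewrite them using Lemma \ref{lm:adjoint-action}, verify that the span of the subspaces $\M_{m,n}(e^{in\k Q_0}\otimes\1)$ is closed under products and adjoints, and pass to the weak closure via the (normal) Fourier projections. You are somewhat more explicit than the paper about the commutation relation underlying multiplicative closure and about the continuity argument for the weak-closure step, but the substance is identical.
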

\begin{proof}
We will show only the former statement since the latter is analogous.
First we consider an element of a simple form $(x_m\otimes \1)S(\1\otimes y_n)S^*$,
where $x_m\in \A_0(\RR_-)_m$ and $y_n \in \A_0(\RR_+)_n$. We saw in
Proposition \ref{lm:adjoint-action} that this is equal to
$(x_m\otimes y_n)(e^{i\k nQ_0}\otimes \1)$, thus this has the asserted form.
Note that the linear space spanned by these elements for different $m,n$
is closed even under
product. For a finite product and sum, the thesis is linear with respect to
$x$ and $y$, hence we obtain the desired decomposition.
The von Neumann algebra $\M_{Q_0,\k}$ is linearly generated by these elements.
Recalling that $z_{m,n}$ is a matrix element with respect to the decomposition
$\1 = \sum_{l,l'} E(l,l')$, we obtain the Lemma.
\end{proof}

Now we are going to determine the intersection of wedge algebras.
At this point, we need to use unexpectedly strong additivity and
conformal covariance (see Section \ref{preliminaries}).
The fixed point subnet $\A_0^{\a_0}$ of a strongly additive net $\A_0$
on $S^1$ with respect to the action $\a_0$ of a compact group $S^1$
of inner symmetry is again strongly additive \cite{Xu05}.

\begin{example}
The loop group nets $\A_{SU(N),k}$ are completely rational \cite{GF93,Xu00-1}, hence
in particular they are strongly additive. Moreover, they are conformal \cite{PS86}.

If $\A_0$ is diffeomorphism covariant,
the strong additivity follows from the split property and the finiteness of $\mu$-index \cite{LX04}.
We have plenty of examples of nets which satisfy strong additivity and conformal covariance
since it is known that complete rationality passes to
finite index extensions and finite index subnets \cite{Longo03}.
\end{example}

\begin{theorem}\label{th:intersection}
Let $\A_0$ be strongly additive and conformal and
$e^{is Q_0}$ implement a periodic family of inner symmetries with the
generator $Q_0$.
We write, with a little abuse of notation, $T(t_+,t_-) := T_0(t_+)\otimes T_0(t_-)$.
For $t_+ < 0$ and $t_- > 0$ we have
\[
\M_{Q_0,\k} \cap \left(\Ad T(t_+,t_-) (\M_{Q_0,\k}^\prime)\right)
= \A_0^G((t_+,0))\otimes\A_0^G((0,t_-)),
\]
where $G$ is the group of automorphisms generated by $\Ad e^{i\k Q_0}$.
\end{theorem}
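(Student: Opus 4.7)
My plan is to prove both inclusions by Fourier-decomposing under the doubled inner-symmetry action $\a_{s,t} := \Ad(e^{isQ_0}\otimes e^{itQ_0})$, combining Lemma \ref{lm:component} with strong additivity of the fixed point subnet $\A_0^{S^1}$ (which holds by Xu's theorem \cite{Xu05}).

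For the inclusion $\supset$, I take $x \in \A_0^G((t_+,0))$ and $y \in \A_0^G((0,t_-))$, set $S := e^{i\k Q_0\otimes Q_0}$, and decompose $x = \sum_m x_m$ via Proposition \ref{pr:spectral-decomposition}. The $G$-invariance of $x$ forces $x_m = 0$ unless $m\k \in 2\pi\ZZ$, and the inclusion $\sp Q_0 \subset \ZZ$ then gives $e^{im\k Q_0} = \1$ on $\H_0$ for such $m$. Lemma \ref{lm:adjoint-action} therefore yields $\Ad S(x_m\otimes\1) = x_m \otimes e^{im\k Q_0} = x_m\otimes\1$, so $\Ad S(x\otimes\1) = x\otimes\1$. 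This places $x\otimes\1$ simultaneously in $\A_0(\RR_-)\otimes\1 \subset \M_{Q_0,\k}$ and in $\Ad T(t_+,t_-)\M_{Q_0,\k}'$, the latter containing $\Ad S(x'\otimes\1)$ for all $x' \in \A_0((t_+,\infty))$. A symmetric argument handles $\1\otimes y$, and the WOT-closure of products of such elements yields $\A_0^G((t_+,0))\bar\otimes\A_0^G((0,t_-))$.

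For the inclusion $\subset$, take $z$ in the intersection. Since $\a_{s,t}$ commutes with $S$ and $T$ and preserves the generating families of both algebras, every Fourier component $z_{m,n}$ of $z$ is again in the intersection. Applying Lemma \ref{lm:component} in the two representations gives
\begin{equation*}
z_{m,n} = a_{m,n}(e^{in\k Q_0}\otimes\1) = b_{m,n}(\1\otimes e^{im\k Q_0}),
\end{equation*}
with $a_{m,n} \in \A_0(\RR_-)_m\bar\otimes\A_0(\RR_+)_n$ and $b_{m,n} \in \A_0((t_+,\infty))_m\bar\otimes\A_0((-\infty,t_-))_n$, whence $a_{m,n} = b_{m,n}(e^{-in\k Q_0}\otimes e^{im\k Q_0})$. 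When $m\k, n\k \in 2\pi\ZZ$, the twist acts trivially on $\H_0\otimes\H_0$, and so $a_{m,n} = b_{m,n}$ lies in $(\A_0(\RR_-)\bar\otimes\A_0(\RR_+))\cap(\A_0((t_+,\infty))\bar\otimes\A_0((-\infty,t_-))) = \A_0((t_+,0))\bar\otimes\A_0((0,t_-))$ by the slice-map identity for intersections of tensor products together with strong additivity; restricting to the $(m,n)$-Fourier mode then places $z_{m,n}$ in $\A_0^G((t_+,0))\bar\otimes\A_0^G((0,t_-))$.

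The hard part will be to show $z_{m,n} = 0$ whenever $m\k \notin 2\pi\ZZ$ or $n\k \notin 2\pi\ZZ$. I plan to slice with the vacuum: applying $(\id\otimes\omega_0)$ to $a_{m,n} = b_{m,n}(e^{-in\k Q_0}\otimes e^{im\k Q_0})$ gives $\tilde x = \tilde x'\,e^{-in\k Q_0}$ with $\tilde x \in \A_0(\RR_-)_m$ and $\tilde x' \in \A_0((t_+,\infty))_m$. Since the twist cancels in $\tilde x^*\tilde x = (\tilde x')^*\tilde x'$ and $\tilde x\tilde x^* = \tilde x'(\tilde x')^*$, the modulus $|\tilde x|$ lies in $\A_0^{S^1}(\RR_-)\cap\A_0^{S^1}((t_+,\infty)) = \A_0^{S^1}((t_+,0))$ by strong additivity of the fixed point subnet. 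Exploiting locality of $\tilde x'$ with $\A_0((-\infty,t_+))$, the intertwining $e^{-in\k Q_0}c = \a_{0,-n\k}(c)\,e^{-in\k Q_0}$, and the separating property of $\Omega_0$ for $\A_0(\RR_-)$, I derive the twisted commutation $\tilde x\,c = e^{-in\k l}\,c\,\tilde x$ for every $c \in \A_0((-\infty,t_+))_l$. When $n\k \notin 2\pi\ZZ$, combining this relation with Haag duality for the fixed point subnet and Reeh--Schlieder applied to the cyclic vectors from Corollary \ref{cr:separating} should force $\tilde x = 0$; a symmetric first-factor slice handles $m\k \notin 2\pi\ZZ$, and ranging over all slice states then yields $a_{m,n} = 0$. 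The main technical obstacle is precisely this last step: converting the twisted commutation $\tilde x\,c = e^{-in\k l}\,c\,\tilde x$ into a definite contradiction using only the local structure and the strong additivity available for $\A_0^{S^1}$.
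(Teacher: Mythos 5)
Your setup --- Fourier decomposition under $\a_{s,t}$, the two componentwise descriptions from Lemma \ref{lm:component}, the $\supset$ inclusion, and the slice-map treatment of the components with $m\k,n\k\in 2\pi\ZZ$ --- matches the paper's strategy and is sound (the slice-map identity is in fact a slightly cleaner route than the paper's for that sub-case). But the step you yourself flag as the ``main technical obstacle'' is exactly the heart of the theorem, and it is not closed. Worse, the vacuum-slicing route you propose is unlikely to work as stated: for a component $a_{m,n}$ carrying charge $n\neq 0$ in the second factor, the slice $(\id\otimes\omega_0)(a_{m,n})$ vanishes identically (since $\omega_0$ kills every $y\in\A_0(\RR_+)_n$ for $n\neq0$), so it carries no information; and for general slice states you would still need to convert the twisted commutation relation $\tilde x\,c = e^{-in\k l}c\,\tilde x$ into $\tilde x=0$, which is precisely what is missing.

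The paper resolves this by decoupling localization from the vanishing argument. First, because the twist $e^{in\k Q_0}\otimes e^{-im\k Q_0}$ commutes with \emph{all} diffeomorphisms \cite{CW05} and fixes $\Omega$, the relation $z_{m,n}\cdot(\text{twist})=z'_{m,n}$ implies that $z_{m,n}$ commutes with every diffeomorphism supported in $(t_+,0)'\times(0,t_-)'$ (using strong additivity of the fixed-point net); a sequence of such diffeomorphisms squeezing $\RR_-\times\RR_+$ onto $(t_+,0)\times(0,t_-)$ then localizes $z_{m,n}$ and $z'_{m,n}$ in $\A_0((t_+,0))\otimes\A_0((0,t_-))$, and Reeh--Schlieder (the twist fixes $\Omega$) gives $z_{m,n}=z'_{m,n}$ --- \emph{uniformly in $(m,n)$}, not only for trivial twist. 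This is where conformal covariance is actually used. Second, evaluating $z_{m,n}(e^{in\k Q_0}\otimes e^{-im\k Q_0})=z_{m,n}$ on the cyclic-and-separating vectors $v\in E(l,l')\H$ of Corollary \ref{cr:separating} yields $e^{i(nl-ml')\k}z_{m,n}=z_{m,n}$ for all $(l,l')\in\ZZ^2$, which forces $z_{m,n}=0$ unless both $m\k$ and $n\k$ lie in $2\pi\ZZ$. You should replace your vacuum-slicing plan by this two-step argument (or find another way to localize the twisted components first); as written, the proposal does not prove the $\subset$ inclusion.
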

\begin{proof}
Let us consider an element from the intersection. From Lemma \ref{lm:component},
we have two descriptions of such an element, namely,
\begin{eqnarray*}
(z_\k)_{m,n}
&=& z_{m,n}(e^{in\k Q_0}\otimes \1), \,\,\, z \in \A_0(\RR_-)\otimes \A_0(\RR_+),\\
(z^\prime_\k)_{m,n}
&=& z^\prime_{m,n}(\1\otimes e^{im\k Q_0}), \,\,\, z^\prime \in \A_0(\RR_+ + t_+)\otimes \A_0(\RR_- + t_-).
\end{eqnarray*}
If these elements have to coincide, each $(m,n)$ component has to coincide. Or
equivalently, it should happen that
$z_{m,n}(e^{in\k Q_0}\otimes e^{-im\k Q_0})= z^\prime_{m,n}$.

Recall that an inner symmetry commutes with diffeomorphisms \cite{CW05}.
This implies that the fixed point subalgebra contains the representatives of
diffeomorphisms. Furthermore, the fixed point subalgebra by a compact group
is again strongly additive \cite{Xu05}. This means that
\begin{align*}
\A_0^{\a_0}((-\infty,t_+))\vee\A_0^{\a_0}((0,\infty)) = \A_0^{\a_0}((t_+,0)'),\\
\A_0^{\a_0}((-\infty,0))\vee\A_0^{\a_0}((t_-,\infty)) = \A_0^{\a_0}((0,t_-)'),
\end{align*}
where $I'$ means the complementary interval in $S^1$.

We claim that if for $z\in \A_0(\RR_-)\otimes \A_0(\RR_+)$ and
$z' \in \A_0(\RR_+ + t_+)\otimes \A_0(\RR_- + t_-)$ there holds
$z\cdot (e^{im\k Q}\otimes e^{-in\k Q}) = z'$,
then $z = z' \in \left(\A_0(\RR_-)\cap\A_0(\RR_+ + t_0)\right) \otimes \left(\A_0(\RR_+)\cap\A_0(\RR_- + t_-)\right)$.
Indeed, since $z \in \A_0(\RR_-)\otimes \A_0(\RR_+)$, it commutes with $U(g_+\times g_-)$
with $\supp(g_+)\subset \RR_+$ and $\supp(g_-)\subset \RR_-$. Similarly, 
$z' \in \A_0(\RR_+ + t_0)\otimes \A_0(\RR_- + t_1)$ commutes with $U(g_+\times g_-)$
with $\supp(g_+)\subset \RR_- + t_+$ and $\supp(g_-)\subset \RR_+ + t_-$.
Furthermore, the unitary $e^{im\k Q}\otimes e^{-in\k Q}$ which implements
an inner symmetry commutes with any action of diffeomorphism \cite{CW05}.
Recall that the fixed point subalgebra is strongly additive, hence
by the assumed equality $z\cdot (e^{im\k Q}\otimes e^{-in\k Q}) = z'$, this element
commutes with $\A_0^{\a_0}((t_+,0)')\otimes\A_0^{\a_0}((0,t_-)')$.
In particular, it commutes with any diffeomorphism of $S^1\times S^1$
supported in $(t_+,0)'\times(0,t_-)'$.
There is a sequence of diffeomorphisms $g_i$ which take
$\RR_-\times \RR_+$ to $(t_+-\e_i,0)\times(0,t_-+\e_i)$ with support disjoint from
$(t_+,0)\times (0,t_-)$ for arbitrary small $\e_i > 0$. This fact and
the diffeomorphism covariance imply that $z$ is indeed
contained in $\A_0((t_+,0))\otimes\A_0((0,t_-))$.
By a similar reasoning, one sees that $z' \in \A_0((t_+,0))\otimes\A_0((0,t_-))$ as well.
Now by Reeh-Schlieder property for $\A_0((t_+,0))\otimes\A_0((0,t_-))$ we have $z = z'$ since
$z \Omega = z\cdot (e^{im\k Q}\otimes e^{-in\k Q})\Omega = z'\Omega$.


Thus, if $z_{m,n}(e^{im\k Q}\otimes e^{-in\k Q})= z^\prime_{m,n}$, then
$z_{m,n} = z^\prime_{m,n} \in \A_0((t_+,0))\otimes\A_0((0,t_-))$. Furthermore,
by Corollary \ref{cr:separating}, there is a separating vector $v \in E(l,l')\H$.
Now it holds that $e^{inl\k-iml'\k}z_{m,n}v = z'_{m,n}v$, hence from
the separating property of $v$ it follows that $e^{inl\k-iml'\k}z_{m,n} = z'_{m,n}$
for each pair $(l,l') \in \ZZ\times\ZZ$. This is possible only if both $n\k$ and $m\k$
are $2\pi$ multiple of an integer or $z_{m,n} = z'_{m,n} = 0$. This is equivalent to
that $\Ad e^{i\k mQ_0}\otimes e^{i\k nQ_0}(z) = z$, namely, $z$ is an element of
the fixed point algebra $\A_0^G((t_+,0))\otimes \A_0^G((0,t_-))$
by the action $\Ad e^{i\k mQ_0}\otimes e^{i\k nQ_0}$ of $G\times G$.
\end{proof}

Note that the size of the intersection is very sensitive to the parameter $\k$:
If $\k$ is $2\pi$-multiple of a rational number, then the inclusion
$[\A_0,\A_0^G]$ has finite index. Otherwise, it has infinite index.

Finally, we comment on the net generated by the intersection.
The intersection takes a form of chiral net $\A_0^G\otimes \A_0^G$
where $G$ is generated by $\Ad e^{i\k Q_0}$,
hence the S-matrix is trivial \cite{DT11-1}. This result is expected also from \cite{Tanimoto11-2},
where M\"obius covariant net has always trivial S-matrix. Our construction is based on
inner symmetries which commute with M\"obius symmetry, hence the net of strictly local
elements is necessarily M\"obius covariant, then it should have trivial S-matrix.
But from this simple argument one cannot infer that the intersection should be
asymptotically complete, or equivalently chiral. This exact form of the intersection
can be found only by the present argument.

\subsubsection{Construction through cyclic group actions}\label{cyclic}
Here we briefly comment on the actions by the cyclic group $\ZZ_k$.
In previous Sections, we have constructed Borchers triples for the action of $S^1$.
It is not difficult to replace $S^1$ by a finite group $\ZZ_k$. Indeed,
the main ingredient was the existence of the Fourier components.
For $\ZZ_k$-actions, the discrete Fourier transform is available
and all the arguments work parallelly (or even more simply). For the later use,
we state only the result without repeating the obvious modification of definitions and proofs.
\begin{theorem}\label{th:triple-cyclic}
Let $\A_0$ be a strongly additive conformal net on $S^1$ and $\a_{0,n} = \Ad e^{i\frac{2\pi n}{k}Q_0}$ be an action
of $\ZZ_k$ as inner symmetries. Then, for $n \in \ZZ_k$, the triple
\begin{eqnarray*}
\M_{Q_0,n} &:=& \{x\otimes \1, \Ad e^{i\frac{2\pi n}{k} Q_0\otimes Q_0}(\1\otimes y):
 x \in \A_0(\RR_-), y \in \A(\RR_+)\}'' \\
T &:=& T_0\otimes T_0 \\
\Omega &:=& \Omega_0\otimes \Omega_0
\end{eqnarray*}
is an asymptotically complete Borchers triple with S-matrix $e^{i\frac{2\pi n}{k}Q_0\otimes Q_0}$.
As for strictly local elements, we have
\[
\M_{Q_0,n} \cap \left(\Ad T(t_+,t_-) (\M_{Q,n}^\prime)\right)
= \A_0^G((t_+,0))\otimes\A_0^G((0,t_-)),
\]
where $G$ is the group of automorphisms of $\A_0$ generated by $\Ad e^{i\frac{2\pi n}{k}Q_0}$.
\end{theorem}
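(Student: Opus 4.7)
The plan is to mirror the three-step proof of the $S^1$-action case (the unnumbered theorem in Section \ref{borchers-inner} together with Theorem \ref{th:intersection}), with the continuous Fourier decomposition over the circle replaced by the discrete one over $\ZZ_k\times\ZZ_k$. Since the key mechanisms—the commutativity lemma, the spectral computation of $\Ad e^{i\k Q_0\otimes Q_0}$, and the diffeomorphism covariance argument—do not depend on the continuous nature of the symmetry group, the adaptation is essentially mechanical and in fact avoids the convergence subtleties of weak integrals.

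For the Borchers triple property I would proceed exactly as in Theorem \ref{th:triple-translation} and its inner-symmetry analog. Invariance of $\Omega$ and the spectral condition on $T$ are inherited from the underlying chiral triple. The inclusion $\Ad T(t_0,t_1)\M_{Q_0,n} \subset \M_{Q_0,n}$ for $(t_0,t_1)\in W_\R$ follows because $Q_0$ commutes with M\"obius transformations, hence $e^{i\frac{2\pi n}{k}Q_0\otimes Q_0}$ commutes with $T=T_0\otimes T_0$. Cyclicity of $\Omega$ uses that this unitary fixes all vectors $\xi\otimes\Omega_0$ and $\Omega_0\otimes\eta$, since $Q_0\Omega_0=0$. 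The separating property reduces to locality between $\M_{Q_0,n}$ and its natural commutant candidate, which is delivered by Lemma \ref{lm:commutativity} applied with $\k:=\frac{2\pi n}{k}$. The S-matrix identification $S_{Q_0,n}=e^{i\frac{2\pi n}{k}Q_0\otimes Q_0}$ and the asymptotic completeness both follow from the general construction at the end of Section \ref{recovery}, since $\M_{Q_0,n}$ is by definition of the form $\M_S$ with the stated $S$, and one then directly verifies $\Phi^\tout_+(x\otimes\1)=x\otimes\1$ and $\Phi^\tin_-(\Ad S(\1\otimes y))=\Ad S(\1\otimes y)$ on the generators.

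For the intersection formula I would replace the continuous Fourier decomposition of Proposition \ref{pr:spectral-decomposition} by its discrete $\ZZ_k$ analog
\[
x_m := \frac{1}{k}\sum_{j=0}^{k-1} \a_{0,j}(x)\, e^{-i\frac{2\pi m j}{k}}, \qquad m\in\ZZ_k,
\]
yielding the direct sum decomposition $\A_0(I)=\bigoplus_{m\in\ZZ_k}\A_0(I)_m$. The analog of Lemma \ref{lm:adjoint-action} then reads $\Ad e^{i\frac{2\pi n}{k}Q_0\otimes Q_0}(x_m\otimes\1)=x_m\otimes e^{i\frac{2\pi nm}{k}Q_0}$, and the analog of Lemma \ref{lm:component} describes components of elements in $\M_{Q_0,n}$ and $\M_{Q_0,n}'$. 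Equating the $(m,m')$-Fourier components of an element of $\M_{Q_0,n}\cap \Ad T(t_+,t_-)(\M_{Q_0,n}')$, the strong additivity of the fixed point subnet $\A_0^{\a_0}$ \cite{Xu05} combined with the commutation of inner symmetries and diffeomorphisms \cite{CW05} localizes the underlying element in $\A_0((t_+,0))\otimes\A_0((0,t_-))$, exactly as in Theorem \ref{th:intersection}. Finally, testing the surviving compatibility phase on a separating vector in each eigenspace $E(l,l')\H$ (Corollary \ref{cr:separating}) reduces the condition to $G$-invariance of each component, yielding the stated fixed-point formula.

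The only substantive point requiring care is the bookkeeping of discrete phases $e^{i\frac{2\pi n}{k}(\,\cdot\,)}$ and the identification of $G$ as the cyclic subgroup of order $k/\gcd(n,k)$ generated by $\Ad e^{i\frac{2\pi n}{k}Q_0}$; no new analytic input is needed, and, as the author already remarks, the argument is somewhat simpler than the continuous case since every integral is now a finite sum.
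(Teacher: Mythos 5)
Your proposal is correct and follows exactly the route the paper intends: the paper itself gives no separate proof for the $\ZZ_k$ case, stating only that the $S^1$-action arguments ``work parallelly (or even more simply)'' once the continuous Fourier decomposition is replaced by the discrete one, which is precisely the adaptation you carry out. Your fleshed-out version (discrete Fourier components, the $\ZZ_k$ analogs of Lemmata \ref{lm:adjoint-action} and \ref{lm:component}, and the separating-vector phase argument) supplies the details the paper omits, with no deviation in method.
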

Note that, although the generator $Q_0$ of inner symmetries
of the cyclic group $Z_k$ is not unique, we used it always in
the form $e^{i\k Q_0}$ or $e^{i\k Q_0\otimes Q_0}$ and these
operators are determined by the automorphisms.
Spectral measures can be defined in terms of these
exponentiated operators uniquely on (the dual of) the
cyclic group $\ZZ_k$. In this way, the choice of $Q_0$
does not appear in the results and proofs.

\section{Construction through endomorphisms on the $U(1)$-current net}\label{u1-current}
\subsection{The $U(1)$-current net and Longo-Witten endomorphisms}\label{u1-current-summary}
In this Section we will construct a family of Borchers triples for a
specific net on $S^1$. Since we need explicit formulae for the relevant operators,
we briefly summarize here some facts about the net called
the $U(1)$-current net, or the (chiral part of) free massless bosonic field.
On this model, there has been found a family of Longo-Witten endomorphisms \cite{LW11}.
We will construct a Borchers triple for each of these endomorphisms.
This model has been studied with the algebraic approach since the fundamental paper \cite{BMT88}.
We refer to \cite{Longo08} for the notations and the facts in the following.

A fundamental ingredient is the irreducible unitary representation of the M\"obius group
with the lowest weight $1$: Namely, we take the irreducible
representation of $\psl2r$ of which the
smallest eigenvalue of the rotation subgroup is $1$.
We call the Hilbert space $\H^1$. We take a specific realization of this representation.
Namely, let $C^\infty(S^1,\RR)$ be the space of real-valued smooth functions on $S^1$.
This space admits a seminorm
\[
\|f\| := \sum_{k\ge 0} 2k|\hat{f}_k|^2,
\]
where $\hat{f}_k$ is the $k$-th Fourier component of $f$,
and a complex structure
\[
(\widehat{\I f})_k = -i \mathrm{sign}(k) \hat{f}_k.
\] 
Then, by taking the quotient space by the null space with respect to the seminorm,
we obtain the complex Hilbert space $\H^1$. We say $C^\infty(S^1,\RR) \subset \H^1$.
On this space, there acts $\psl2r$ by naturally extending the action on $C^\infty(S^1,\RR)$.

Let us denote $\H^n := \H^{\otimes n}$ for a nonnegative integer $n$. On this space,
there acts the symmetric group $\mathrm{Sym}(n)$. Let $Q_n$ be the projection onto the invariant subspace
with respect to this action. We put $\H^n_s := Q_n\H^n$ and the {\bf symmetric Fock space}
\[
\H^\sc_s:= \bigoplus_n \H^n_s,
\]
and this will be the Hilbert space of the $U(1)$-current net on $S^1$.
For $\xi \in \H^1$, we denote by $e^\xi$ a vector of the form
$\sum_n \frac{1}{n!}\xi^{\otimes n} = 1 \oplus \xi \oplus \left(\frac{1}{2}\xi\otimes\xi \right)\oplus\cdots$.
Such vectors form a total set in $\H^\sc_s$.
The {\bf Weyl operator} of $\xi$ is defined by
$W(\xi)e^{\eta} = e^{-\frac{1}{2}\<\xi,\xi\>-\<\xi,\eta\>} e^{\xi+\eta}$.

The Hilbert space $\H^\sc_s$ is naturally included in the {\bf unsymmetrized Fock space}:
\[
\H^\sc := \bigoplus_n (\H^1)^{\otimes n} = \CC \oplus \H^1 \oplus \left(\H^1\otimes \H^1\right) \oplus \cdots
\]
We denote by $Q_\sc$ the projection onto $\H^\sc_s$.
For an operator $X_1$ on the one particle space $\H^1$,
we define the {\bf second quantization} of $X_1$ on $\H^\sc_s$ by
\[
\G(X_1) := \bigoplus_n (X_1)^{\otimes n} = 1 \oplus X_1 \oplus \left(X_1\otimes X_1\right)\oplus \cdots
\]
Obviously, $\G(X_1)$ restricts to the symmetric Fock space $\H^\sc_s$. We still write this restriction
by $\G(X_1)$ if no confusion arises.
For a unitary operator $V_1 \in B(\H^1)$ and $\xi \in \H^1$, it holds that
$\G(V_1)e^\xi = e^{V_1\xi}$ and $\Ad \G(V_1)(W(\xi)) = W(V_1\xi)$.
On the one particle space $\H^1$, there acts the M\"obius group $\psl2r$ irreducibly by $U_1$.
Then $\psl2r$ acts on $\H^\sc$ and on $\H^\sc_s$ and by $\G(U_1(g))$, $g\in \psl2r$.
The representation of the translation subgroup in $\H^1$ is denoted by $T_1(t) = e^{itP_1}$
with the generator $P_1$.

The {\bf $U(1)$-current net} $\u1$ is defined as follows:
\[
\u1(I) := \{W(f): f\in C^\infty(S^1,\RR) \subset \H^1, \supp(f) \subset I\}''.
\]
The vector $1 \in \CC = \H^0 \subset \H^\sc_s$ serves as the vacuum vector $\Omega_0$
and $\G(U_1(\cdot))$ implements the M\"obius symmetry.
We denote by $T^\sc_s$ the representation of one-dimensional translation of
$\u1$.

For this model, a large family of endomorphisms has been found by Longo and Witten.
\begin{theorem}[\cite{LW11}, Theorem 3.6]\label{th:longo-witten}
Let $\f$ be an inner symmetric function on the upper-half plane $\SS_\infty \subset \CC$:
Namely, $\f$ is a bounded analytic function of $\SS_\infty$ with the boundary value
$|\f(p)| = 1$ and $\f(-p) = \overline{\f(p)}$ for $p \in \RR$.
Then $\G(\f(P_1))$ commutes with $T^\sc_s$ (in particular $\G(\f(P_1))\Omega_0 = \Omega_0$)
and $\Ad \G(\f(P_1))$ preserves $\u1(\RR_+)$. In other words, $\G(\f(P_1))$ implements
a Longo-Witten endomorphism of $\u1$.
\end{theorem}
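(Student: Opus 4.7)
The plan is to reduce the whole statement to a question on the one-particle space $\H^1$ and then exploit that $P_1$ has spectrum contained in $[0,\infty)$. Since $|\f(p)|=1$ on the spectrum of $P_1$, the bounded Borel functional calculus produces a unitary $\f(P_1)$ on $\H^1$. As a function of $P_1$, it commutes with $T_1(s)=e^{isP_1}$ for all $s\in\RR$, and by multiplicativity of $\G$ the operator $\G(\f(P_1))$ commutes with $T^\sc_s=\G(T_1(\cdot))$. Moreover $\Omega_0$ is the unit vector in the zero-particle subspace $\CC\subset\H^\sc_s$, on which $\G$ of any operator acts as the identity, hence $\G(\f(P_1))\Omega_0=\Omega_0$.

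The remaining point is that $\Ad\G(\f(P_1))$ preserves $\u1(\RR_+)$. Its generators are Weyl operators $W(f)$ with $f\in C^\infty(S^1,\RR)$ supported in $\RR_+$; passing to the strong closure, they range over $W(\x)$ with $\x$ in the standard real subspace $H(\RR_+)\subset\H^1$, namely the real-linear closure of such $f$. Using the identity $\G(V_1)W(\x)\G(V_1)^*=W(V_1\x)$, the task reduces to verifying that $\f(P_1)H(\RR_+)\subset H(\RR_+)$; since $\Ad\G(\f(P_1))$ is normal, the inclusion then lifts to the whole von Neumann algebra.

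To prove the invariance of $H(\RR_+)$, I would pass to the momentum picture: by Fourier transform, $\H^1$ is identified with a weighted $L^2$-space on $\RR_+$ on which $P_1$ acts as multiplication by $p$. In this model, $H(\RR_+)$ corresponds to the boundary values of functions admitting an analytic extension to the upper half-plane $\SS_\infty$ and satisfying the reality relation $\x(-p)=\overline{\x(p)}$ coming from the complex structure $(\widehat{\I f})_k=-i\,\mathrm{sign}(k)\hat f_k$. Multiplication by $\f$ preserves analyticity and the $H^\infty$-bound because $\f$ is bounded and analytic on $\SS_\infty$, and it preserves the reality relation precisely because $\f(-p)=\overline{\f(p)}$. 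Hence $\f(P_1)\x\in H(\RR_+)$ whenever $\x\in H(\RR_+)$.

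The genuinely hard part is the third step, namely the identification of $H(\RR_+)$ with a concrete Hardy-type condition compatible with the $U(1)$-current's one-particle structure. This uses that the modular data of $H(\RR_+)$ are fixed by Bisognano-Wichmann (dilations and the reflection $p\mapsto-p$), together with the standard description of one-particle localization via boundary values of analytic functions. Once that identification is in place, the rest of the argument is functional calculus and normality; the essence of the theorem is the translation of the two defining conditions on $\f$---innerness and the symmetry $\f(-p)=\overline{\f(p)}$---into the two features---analyticity and reality---that characterize $H(\RR_+)$.
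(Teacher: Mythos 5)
This statement is quoted verbatim from Longo--Witten (\cite{LW11}, Theorem 3.6); the paper offers no proof of it, so there is nothing internal to compare against. Your outline is, in substance, a reconstruction of the actual argument in \cite{LW11}: the first two steps (unitarity of $\f(P_1)$ by functional calculus on the positive spectrum, commutation with $T^\sc_s$ by multiplicativity of $\G$, triviality on the vacuum) are correct and routine, and the reduction of the algebraic statement to the one-particle inclusion $\f(P_1)H(\RR_+)\subset H(\RR_+)$ via $\G(V_1)W(\xi)\G(V_1)^*=W(V_1\xi)$, strong continuity of $\xi\mapsto W(\xi)$, and normality of $\Ad\G(\f(P_1))$ is exactly how Longo--Witten proceed.

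The one caveat is that what you defer as ``the standard description of one-particle localization via boundary values of analytic functions'' is not background material but the substantive content of the theorem: it is Longo--Witten's one-particle result (their Theorem 2.6) that, in the momentum realization where $P_1$ acts by multiplication, the standard subspace $H(\RR_+)$ is precisely the set of boundary values of Hardy-class functions on $\SS_\infty$ subject to the reality constraint inherited from the complex structure $\I$, and that the unitaries commuting with translations and preserving $H(\RR_+)$ are exactly the $\f(P_1)$ with $\f$ inner symmetric. Establishing that identification requires the modular theory of the standard subspace (Bisognano--Wichmann for dilations) together with a Beurling--Lax-type analysis; it is not a consequence of general principles you can simply invoke. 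Your translation of the two hypotheses on $\f$ (innerness $\Rightarrow$ $H^2$ is preserved under multiplication by an $H^\infty$ function; the symmetry $\f(-p)=\overline{\f(p)}$ $\Rightarrow$ the reality relation is preserved) is the right mechanism, so the proposal is a faithful skeleton of the proof, but as written it rests on an unproven key lemma rather than constituting a self-contained argument.
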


\subsection{Construction of Borchers triples}\label{u1-net}
In this Section, we construct a Borchers triple for a fixed $\f$,
the boundary value of an inner symmetric function (see Section \ref{u1-current-summary}).
Many operators are naturally defined on the unsymmetrized Fock space, hence
we always keep in mind the inclusion $\H^\sc_s \subset \H^\sc$.
The full Hilbert space for the two-dimensional Borchers triples will be
$\H^\sc_s \otimes \H^\sc_s$.

On $\H^m$, there act $m$ commuting operators
\[
\{\1\otimes\cdots\otimes \underset{i\mbox{-th}}{P_1}\otimes\cdots \otimes \1: 1\le i \le m\}.
\]
We construct a unitary operator by the functional calculus on the corresponding spectral measure.
We set
\begin{itemize}
\item $P_{i,j}^{m,n} := (\1\otimes\cdots\otimes \underset{i\mbox{-th}}{P_1}\otimes\cdots \otimes \1)
\otimes (\1\otimes\cdots\otimes \underset{j\mbox{-th}}{P_1}\otimes\cdots \otimes \1)$,
which acts on $\H^m\otimes\H^n$, $1 \le i \le m$ and $1 \le j \le n$.
\item $S^{m,n}_\f := \prod_{i,j}\f(P_{i,j}^{m,n})$, where
$\f(P_{i,j}^{m,n})$ is the functional calculus on $\H^m\otimes\H^n$.
\item $S_\f := \bigoplus_{m,n} S_\f^{m,n} = \bigoplus_{m,n} \prod_{i,j} \f(P^{m,n}_{i,j})$
\end{itemize}
By construction, the operator $S_\f$ acts on $\H^\sc\otimes\H^\sc$. Furthermore, it is easy to see
that $S_\f$ commutes with both $Q_\sc\otimes \1$ and $\1\otimes Q_\sc$: In other words, $S_\f$ naturally
restricts to partially symmetrized subspaces $\H^\sc_s\otimes\H^\sc$ and $\H^\sc\otimes\H^\sc_s$
and to the totally symmetrized space $\H^\sc_s\otimes\H^\sc_s$.
Note that $S^{m,n}_\f$ is a unitary operator on the Hilbert spaces $\H^m\otimes\H^n$
and $S_\f$ is the direct sum of them.

Let $E_1\otimes E_1\otimes\cdots\otimes E_1$ be the joint spectral measure
of operators $\{\1\otimes\cdots\otimes \underset{j\mbox{-th}}{P_1}\otimes\cdots \otimes \1: 1\le j \le n\}$.
The operators $\{\f(P_{i,j}^{m,n}):1\le i \le m, 1\le j \le n\}$ and $S_\f^{m,n}$ are compatible with
the spectral measure
$\left(\overset{m\mbox{-}\mathrm{times}}{\overbrace{E_1\otimes E_1\otimes\cdots\otimes E_1}}\right)
\otimes\left(\overset{n\mbox{-}\mathrm{times}}{\overbrace{E_1\otimes E_1\otimes\cdots\otimes E_1}}\right)$
and one has\[
\f(P_{i,j}^{m,n}) = \int \left(\1\otimes\cdots \otimes\underset{i\mbox{-th}}{\f(p_jP_1)}\otimes \cdots \1\right)
\otimes \left(\1\otimes\cdots \underset{j\mbox{-th}}{dE_1(p_j)}\otimes\cdots\1\right).
\]
For $m=0$ or $n=0$ we set $\f_{i,j}^{m,n} = \1$ as a convention.

According to this spectral decomposition, we decompose $S_\f$ with respect only to
the right component as in the commutativity Lemma \ref{lm:commutativity}:
\begin{eqnarray*}
S_\f &=& \bigoplus_{m,n}\prod_{i,j}\f(P_{i,j}^{m,n})\\
&=& \bigoplus_{m,n} \prod_{i,j} \int \left(\1\otimes\cdots\otimes\underset{i\mbox{-th}}{\f(p_jP_1)}\otimes\cdots\1\right)
 \otimes dE_0(p_1)\otimes\cdots\otimes dE(p_n) \\
&=& \bigoplus_{m,n} \int \prod_{i,j}  \left(\1\otimes\cdots\otimes\underset{i\mbox{-th}}{\f(p_jP_1)}\otimes\cdots\1\right)
 \otimes dE_0(p_1)\otimes\cdots\otimes dE(p_n) \\
&=& \bigoplus_n\int \bigoplus_m \prod_j  (\f(p_jP_1))^{\otimes m} \otimes dE_1(p_1)\otimes\cdots\otimes dE_1(p_n) \\
&=& \bigoplus_n\int \prod_j \bigoplus_m (\f(p_jP_1))^{\otimes m} \otimes dE_1(p_1)\otimes\cdots\otimes dE_1(p_n) \\
&=& \bigoplus_n\int \prod_j \G(\f(p_jP_1)) \otimes dE_1(p_1)\otimes\cdots\otimes dE_1(p_n),
\end{eqnarray*}
where the integral and the product commute in the third equality since the spectral measure
is disjoint for different values of $p$'s, and the sum and the product commute in the fifth equality
since the operators in the integrand act on mutually disjoint spaces, namely on $\H^m\otimes\H^\sc$
for different $m$. Since all operators appearing in the integrand in the last expression
are the second quantization
operators, this formula naturally restricts to the partially symmetrized space $\H^\sc_s\otimes\H^\sc$.

\begin{lemma}\label{lm:u1-commutativity}
It holds for $x\in\u1(\RR_-)$ and $x' \in \u1(\RR_+)$ that
\[
[x\otimes \1, \Ad S_\f (x'\otimes\1)] = 0,
\]
on the Hilbert space $\H^\sc_s\otimes\H^\sc_s$.
\end{lemma}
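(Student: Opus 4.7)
The plan is to follow exactly the template of Lemma \ref{lm:commutativity}, using the spectral decomposition of $S_\f$ with respect to the right tensor factor that has already been computed just before the lemma. From that computation, on $\H^\sc_s\otimes\H^\sc$ we have
\[
S_\f = \bigoplus_n\int \prod_j \G(\f(p_jP_1)) \otimes dE_1(p_1)\otimes\cdots\otimes dE_1(p_n),
\]
and, as in the proof of Lemma \ref{lm:commutativity}, this yields
\[
\Ad S_\f(x'\otimes\1) = \bigoplus_n\int \Ad\!\Bigl(\prod_j \G(\f(p_jP_1))\Bigr)(x')\,\otimes\, dE_1(p_1)\otimes\cdots\otimes dE_1(p_n),
\]
where the disjointness of the spectral projections for distinct $(p_1,\dots,p_n)$ makes the manipulation of product and integral legitimate and ensures that commutation of the integrand with $x\otimes\1$ transfers to commutation of the whole expression with $x\otimes\1$.

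The next step is the key observation: for each $p\ge 0$, the function $\f_p(z):=\f(pz)$ is again an inner symmetric function on $\SS_\infty$, since multiplication by $p\ge 0$ preserves the upper half-plane, $|\f_p|=1$ on $\RR$, and $\f_p(-z)=\f(-pz)=\overline{\f(pz)}=\overline{\f_p(z)}$. The spectrum of $P_1$ is contained in $\RR_+$, so the integration variables $p_j$ range over $\RR_+$, and Theorem \ref{th:longo-witten} applies: each $\G(\f(p_jP_1))=\G(\f_{p_j}(P_1))$ implements a Longo-Witten endomorphism of $\u1$, hence $\Ad \G(\f(p_jP_1))$ preserves $\u1(\RR_+)$. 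Since the finite product of such endomorphisms is again an endomorphism preserving $\u1(\RR_+)$, we obtain $\Ad\bigl(\prod_j \G(\f(p_jP_1))\bigr)(x')\in \u1(\RR_+)$ for every tuple $(p_1,\dots,p_n)$.

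Combining these two ingredients, the integrand in the displayed decomposition of $\Ad S_\f(x'\otimes\1)$ lies in $\u1(\RR_+)\otimes B(\H^\sc)$ pointwise in $(p_1,\dots,p_n)$. By locality of $\u1$, $x\in \u1(\RR_-)$ commutes with every element of $\u1(\RR_+)$, so $x\otimes\1$ commutes with the integrand for each $(p_1,\dots,p_n)$, and therefore with the integral. Finally, one restricts to the totally symmetrized space $\H^\sc_s\otimes\H^\sc_s$, which is invariant under $S_\f$ and under both factors $x\otimes\1$ and $x'\otimes\1$, to obtain the stated commutation relation.

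The main potential obstacle is the rigorous handling of the direct integral and the interchange of product with integration, together with the fact that the operators $\prod_j \G(\f(p_jP_1))$ form a measurable family in the sense required to apply the tensor-product construction recalled before Lemma \ref{lm:commutativity}. This is, however, standard because the $\f(p_jP_1)$ are joint bounded Borel functions of commuting self-adjoint operators and the disjointness of the spectral measure for different values of the $p_j$'s (already invoked in the derivation of the displayed formula for $S_\f$) makes the manipulations unambiguous.
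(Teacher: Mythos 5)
Your proof is correct and follows essentially the same route as the paper's: decompose $S_\f$ over the spectral measure of the right tensor factor, observe that $\f(p_j\,\cdot)$ is again inner symmetric for $p_j\ge 0$ so that Theorem \ref{th:longo-witten} makes each $\G(\f(p_jP_1))$ a Longo--Witten endomorphism preserving $\u1(\RR_+)$, conclude commutation with $x\otimes\1$ by locality, and restrict to $\H^\sc_s\otimes\H^\sc_s$ at the end. No substantive differences from the paper's argument.
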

\begin{proof}
The operator $S_\f$ is disintegrated into second quantization operators as we saw above.
If $\f$ is an inner symmetric function, then so is $\f(p_j\cdot)$, $p_j\ge0$, thus
each $\G(\f(p_jP_1))$ implements a Longo-Witten endomorphism.

Note that $S_\f$ restricts naturally to $\H^\sc_s\otimes\H^\sc$ by construction and $x\otimes\1$
and $x'\otimes\1$ extend naturally to $\H^\sc_s\otimes\H^\sc$ since the right-components
of them are just the identity operator $\1$.
Then we calculate the commutation relation on $\H^\sc_s\otimes\H^\sc$. 
This is done in the same way as Lemma \ref{lm:commutativity}: Namely, we have
\[
\Ad S_\f(x'\otimes\1)
= \bigoplus_n\int \Ad \left(\prod_j \G(\f(p_jP_1))\right)(x') \otimes dE_1(p_1)\otimes\cdots\otimes dE_1(p_n).
\]
And this commutes with $x\otimes\1$. Indeed, since $x\in\u1(\RR_-)$ and $x'\in\u1(\RR_+)$, hence
$\Ad \G(\f(p_j))(x') \in \u1(\RR_+)$ for any $p_j \ge 0$ by Theorem \ref{th:longo-witten} of
Longo-Witten, and by the fact that the spectral support of $E_1$ is positive.
Precisely, we have $[x\otimes \1, \Ad S_\f (x'\otimes\1)] = 0$ on $\H^\sc_s\otimes\H^\sc$.

Now all operators $S_\f$, $x\otimes \1$ and $x'\otimes \1$ commute with $\1\otimes Q_\sc$,
we obtain the thesis just by restriction.
\end{proof}

Finally we construct a Borchers triple by following the prescription at the end
of Section \ref{complete-invariant}.
\begin{theorem}
The triple
\begin{itemize}
\item $\M_\f := \{x\otimes \1, \Ad S_\f (\1\otimes y): x\in\u1(\RR_-),y\in\u1(\RR_+)\}''$
\item $T = T_0\otimes T_0$ of $\u1\otimes\u1$
\item $\Omega = \Omega_0\otimes \Omega_0$ of $\u1\otimes\u1$
\end{itemize}
is an asymptotically complete Borchers triple with S-matrix $S_\f$.
\end{theorem}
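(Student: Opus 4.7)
The proof should follow the template of Theorem \ref{th:triple-translation}, with $S_\f$ playing the role analogous to $e^{i\k P_0\otimes P_0}$. I would first dispatch the easy Borchers-triple axioms inherited from the undeformed chiral triple $(\u1(\RR_-)\otimes\u1(\RR_+), T, \Omega)$: the vector $\Omega$ is invariant under $T$ and $T$ has joint spectrum in $V_+$. The key preparatory fact is that $S_\f$ commutes with $T = T_0\otimes T_0$: each factor $\G(\f(p_jP_1))$ in the disintegration of $S_\f$ commutes with $T^\sc_s = \G(T_1)$ by Theorem \ref{th:longo-witten}, while the spectral measure $dE_1\otimes\cdots\otimes dE_1$ of the $P_1$'s commutes with $T_1$ through functional calculus.

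Next, I would check wedge covariance of $\M_\f$. Because $S_\f$ commutes with $T$, for $(t_0,t_1)\in W_\R$ one has $\Ad T(t_0,t_1)(\Ad S_\f(\1\otimes y)) = \Ad S_\f(\1\otimes T_0(s)yT_0(-s))$ with $s=(t_0+t_1)/\sqrt{2}>0$, which lies in $\Ad S_\f(\1\otimes\u1(\RR_+))$; similarly $\Ad T(t_0,t_1)(x\otimes\1)$ stays in $\u1(\RR_-)\otimes\1$ since $(t_0-t_1)/\sqrt{2}<0$. For cyclicity of $\Omega$, I would use that $S_\f$ acts trivially on any vector of the form $\Omega_0\otimes\xi$ or $\xi\otimes\Omega_0$, by the convention $S_\f^{0,n}=S_\f^{m,0}=\1$ built into the construction. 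In particular $S_\f\Omega=\Omega$ and $S_\f(\Omega_0\otimes y\Omega_0)=\Omega_0\otimes y\Omega_0$, so that $(x\otimes\1)\cdot\Ad S_\f(\1\otimes y)\cdot\Omega = x\Omega_0\otimes y\Omega_0$, and cyclicity of $\Omega$ for $\M_\f$ reduces to cyclicity for the undeformed chiral algebra.

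The separating property requires identifying a commutant candidate
\[
\M_\f^1 := \{\Ad S_\f(x'\otimes\1),\, \1\otimes y': x'\in\u1(\RR_+),\ y'\in\u1(\RR_-)\}'',
\]
which is cyclic for $\Omega$ by the mirror computation, and showing $\M_\f^1\subset\M_\f'$. The four types of cross-commutators reduce either to trivialities (disjoint tensor factors, or $\Ad S_\f$ of a trivial commutator) or to Lemma \ref{lm:u1-commutativity} and its left-right mirror. The mirror is proved the same way, disintegrating $S_\f$ with respect to the \emph{left} factor instead of the right and invoking the Longo-Witten property for $\u1(\RR_+)$ on the right factor; this is the main technical point to verify carefully, since one must be sure that the partial symmetrizations $Q_\sc\otimes\1$ and $\1\otimes Q_\sc$ are respected throughout, just as in the proof of Lemma \ref{lm:u1-commutativity}.

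Once $(\M_\f, T, \Omega)$ is shown to be a Borchers triple, the S-matrix and asymptotic completeness follow from the general discussion at the end of Section \ref{recovery}. For the chiral triple $\u1\otimes\u1$, which is itself asymptotically complete with trivial S-matrix, one has $\xi\timeso\eta = \xi\otimes\eta$, and by repeating the scattering-theoretic computation at the end of Section \ref{recovery} one obtains $\Phi^\tout_+(x\otimes\1)=x\otimes\1$ and $\Phi^\tin_-(\Ad S_\f(\1\otimes y))=\Ad S_\f(\1\otimes y)$ (and analogously for the commutant generators). This immediately identifies $S_\f$ as the S-matrix and yields $\H^\tout=\H^\tin=\H^\sc_s\otimes\H^\sc_s$, i.e., asymptotic completeness. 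The principal obstacle I anticipate is the bookkeeping in the mirror of Lemma \ref{lm:u1-commutativity}, ensuring all operators truly restrict to $\H^\sc_s\otimes\H^\sc_s$ through the correct partial symmetrization steps.
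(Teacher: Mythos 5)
Your proposal is correct and takes essentially the same route as the paper, whose proof simply reduces everything to the template of Theorem \ref{th:triple-translation}: the conditions on $T$ and $\Omega$ are inherited from the chiral triple, $S_\f$ commutes with $T$, cyclicity follows from $S_\f$ acting trivially on vectors with a vacuum factor, the separating property comes from Lemma \ref{lm:u1-commutativity} together with its left-right mirror, and the S-matrix identification follows from the discussion at the end of Section \ref{recovery}. The extra care you take with the mirror of Lemma \ref{lm:u1-commutativity} and the partial symmetrizations is precisely the bookkeeping the paper leaves implicit.
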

\begin{proof}
This is almost a repetition of the proof of Theorem \ref{th:triple-translation}.
Namely, the conditions on $T$ and $\Omega$ are readily satisfied since 
they are same as the chiral triple. The operators $S_\f$ and $T$ commute since
both are the functional calculus of the same spectral measure, hence
$T(t_0,t_1)$ sends $\M_\f$ into itself for $(t_0,t_1) \in W_\R$.
The vector $\Omega$ is cyclic for $\M_\f$ since
$\M_\f\Omega \supset \{x\otimes \1\cdot S_\f\cdot \1\otimes y\cdot\Omega\} = \{x\otimes \1\cdot \1\otimes y\cdot\Omega\}$
and the latter is dense by the Reeh-Schlieder property of the chiral net.
The separating property of $\Omega$ is shown through Lemma \ref{lm:u1-commutativity}.
\end{proof}
\begin{remark}
In this approach, the function $\f$ itself appears in two-particle scattering,
not the square as in \cite{Lechner11}. Thus, although the formulae look similar,
the present construction contains much more examples.
\end{remark}
\subsubsection*{Intersection property for constant functions $\f$}
For the simplest cases $\f(p) = 1$ or $\f(p) = -1$, we can easily determine the
strictly local elements. Indeed, for $\f(p) = 1$, $S_\f = \1$ and
the Borchers triple coincides with the one from the original chiral net.
For $\f(p) = -1$, $S_\f^{m,n} = (-1)^{mn}\cdot \1$ and it is not difficult to see that
if one defines an operator $Q_0 := 2P_e - \1$, where $P_e$ is a projection onto
the ``even'' subspace $\bigoplus_n \H^{2n}_s$ of $\H^\sc_s$, then
$e^{i\pi Q_0}$ implements a $\ZZ_2$-action of inner symmetries on $\u1$ and
$S_\f = e^{i\pi Q_0\otimes Q_0}$. Then Theorem \ref{th:triple-cyclic} applies
to find that the strictly local elements are of the form
$\u1^{\ZZ_2}\otimes\u1^{\ZZ_2}$ where the action of $\ZZ_2$ is realized by
$\Ad e^{i\pi n Q_0}$.
\subsection{Free fermionic case}
As explained in \cite{LW11}, one can construct a family of endomorphisms
on the Virasoro net $\vir_c$ with the central charge $c = \frac{1}{2}$
by considering the free fermionic field. With a similar construction using
the one-particle space on which the M\"obius group acts irreducibly and
projectively with the lowest weight $\frac{1}{2}$, one considers the free fermionic
(nonlocal) net on $S^1$, which contains $\vir_{\frac{1}{2}}$ with index $2$.

The endomorphisms are implemented again by the second quantization operators.
By ``knitting up'' such operators as is done for bosonic $U(1)$-current case,
then by restricting to the observable part $\vir_\frac{1}{2}$, 
we obtain a family of Borchers triples with the asymptotic algebra
$\vir_\frac{1}{2}\otimes \vir_\frac{1}{2}$ with nontrivial S-matrix.
In the present article we omit the detail, and hope to return to this subject with
further investigations.

\section{Conclusion and outlook}\label{conclusion}
We showed that any two-dimensional massless asymptotically complete model
is characterized by its asymptotic algebra which is automatically a chiral M\"obius net,
and the S-matrix. Then we reinterpreted the Buchholz-Lechner-Summers deformation applied
to chiral conformal net in this framework:
It corresponds to the S-matrix $e^{i\k P_0\otimes P_0}$.
Furthermore we obtained wedge-local nets through periodic inner symmetries which have
S-matrix $e^{i\k Q_0\otimes Q_0}$. We completely determined the strictly local contents
in terms of the fixed point algebra when the chiral component is strongly additive and conformal.
Unfortunately, the S-matrix restricted to the strictly local part is trivial.
For the $U(1)$-current net and the Virasoro net $\vir_c$ with $c=\frac{1}{2}$,
we obtained families of wedge-local nets parametrized by inner symmetric functions $\f$.

One important lesson is that construction of wedge-local nets should be considered as an intermediate step
to construct strictly local nets: Indeed, any M\"obius covariant net has trivial S-matrix \cite{Tanimoto11-2},
hence the triviality of S-matrix in the construction through inner symmetries
is interpreted as a natural consequence. Although the S-matrix as a Borchers triple is nontrivial,
this should be treated as a false-positive. The true nontriviality should be inferred by
examining the strictly local part. On the other hand, we believe that
the techniques developed in this paper will be of importance in the further explorations
in strictly local nets. The sensitivity of the strictly local part to the parameter
$\k$ in the case of the construction with respect to inner symmetries gives another
insight.

Apart from the problem of strict locality, a more systematic study of the necessary
or sufficient conditions for S-matrix is desired. Such a consideration could lead
to a classification result of certain classes of massless asymptotically complete models.
For the moment, a more realistic problem would be to construct S-matrix with the asymptotic algebra
$\A_N\otimes\A_N$, where $\A_N$ is a local extension of the $U(1)$-current net \cite{BMT88, LW11}.
A family of Longo-Witten endomorphisms has been constructed also for $\A_N$, hence
a corresponding family of wedge-local net is expected and recently a similar kind of endomorphisms
has been found for a more general family of nets on $S^1$ \cite{Bischoff11}.
Or a general scheme of deforming a given Wightman-field theoretic net has been established \cite{Lechner11}.
The family of S-matrices constructed in the present paper seems rather small, since
there is always a pair of spectral measures and their tensor product diagonalizes the S-matrix.
This could mean in physical terms that the interaction between two waves is not very strong.
We hope to address these issues in future publications.

\subsubsection*{Acknowledgment.}
I am grateful to Roberto Longo for his constant support.
I wish to thank Marcel Bischoff, Kenny De Commer, Wojciech Dybalski and Daniele Guido
for useful discussions.

\appendix
\newcommand{\appsection}[1]{\let\oldthesection\thesection
  \renewcommand{\thesection}{Appendix \oldthesection}
  \section{#1}\let\thesection\oldthesection}
\appsection{A remark on BLS deformation}\label{a-remark}
In the proof of Theorem \ref{th:coincidence} we used the fact that
$(x\otimes \1)_{\Theta_\k} \xi\otimes \Omega_0 = x\xi\otimes \Omega_0$.
The equation (2.2) from \cite{BLS} translates in our notation to
\[
(x\otimes\1)_{\Theta_\k} = \underset{F\nearrow \1}{\lim_{B\nearrow \RR^2}}\int_B \Ad U(\k t_1, \k t_0)(x\otimes \1)FdE(t_0,t_1),
\]
where $B$ is a bounded subset in $\RR^2$ and $F$ is a finite dimensional subspace
in $\H$. Now it is easy to see that
$(x\otimes \1)_{\Theta_\k}(\xi\otimes \Omega_0) = x\xi\otimes \Omega_0$.
Indeed, we have $\xi\otimes\Omega_0 \in E(\L_+)$, where $\L_+ = \{(p_0,p_1)\in \RR^2:p_0+p_1=0\}$,
hence the integral above is concentrated in $\L_+$, and for $(t,t)\in\L_+$
it holds that $\Ad U(\k t,\k t)(x\otimes \1) = \Ad \1\otimes U_0(\sqrt{2}\k t)(x\otimes\1) = x\otimes \1$.
Then the integral simplifies as follows:
\begin{eqnarray*}
(x\otimes \1)_{\Theta_\k}(\xi\otimes \Omega_0)
&=& \underset{F\nearrow \1}{\lim_{B\nearrow \RR^2}}\int_{B\cap \L_+}
 \Ad U(\k t, \k t)(x\otimes \1)
 \cdot F\cdot dE(t,t)(\xi\otimes \Omega_0)\\
&=& \underset{F\nearrow \1}{\lim_{B\nearrow \RR^2}}\int_{B\cap \L_+}
 x\otimes\1 \cdot dE(t,t)(\xi\otimes\Omega_0)\\
&=& x\xi\otimes\Omega_0.
\end{eqnarray*}
This is what we had to prove.

\def\cprime{$'$}

\end{document}